\newcommand{\noun}[1]{\textsc{#1}}
\providecommand{\tabularnewline}{\\}
\providecommand{\algorithmname}{Algorithm}
\theoremstyle{definition}
  \newtheorem{example}{\protect\examplename}
  \newtheorem{example}{\protect\examplename}[chapter]
\newenvironment{lyxlist}[1]
	{\begin{list}{}
		{\settowidth{\labelwidth}{#1}
		 \setlength{\leftmargin}{\labelwidth}
		 \addtolength{\leftmargin}{\labelsep}
		 }}
	{\end{list}}
\theoremstyle{definition}
  \newtheorem{defn}{\protect\definitionname}
  \newtheorem{defn}{\protect\definitionname}[chapter]
\theoremstyle{plain}
  \newtheorem{thm}{\protect\theoremname}
  \newtheorem{thm}{\protect\theoremname}[chapter]
\theoremstyle{plain}
  \newtheorem{lem}{\protect\lemmaname}
  \newtheorem{lem}{\protect\lemmaname}[chapter]
\theoremstyle{plain}
\newtheorem*{thm*}{\protect\theoremname}
\providecommand{\definitionname}{Definition}
\providecommand{\examplename}{Example}
\providecommand{\lemmaname}{Lemma}
\providecommand{\theoremname}{Theorem}
\begin{document}

\title{A QR Decomposition Approach to Factor Modeling of Multivariate Time
Series}

\author{Immanuel David Rajan M}

\date{April 2014}

\maketitle
\vspace*{7.5cm}

\selectlanguage{british}%
\emph{~~~~~~~~~~~~Dedicated to mom and dad for their care,}

\emph{~~~~~~~~~~~~\quad{}~ ~~~~~~~~~~~~~and
grandparents for their love.}

\vfill{}
\selectlanguage{english}%

\selectlanguage{british}%
\certificate\vspace{2cm}

\noindent This is to certify that the thesis entitled \textbf{A QR
Decomposition Approach to Factor Modeling of Multivariate Time Series},
submitted by \textbf{Immanuel David Rajan M}, to the Indian Institute
of Technology Madras, for the award of the degree of \textbf{Master
of Science (Research)}, is a bonafide record of the research work
carried out by him under my supervision. The contents of this thesis,
in full or in parts, have not been submitted to any other Institute
or University for the award of any degree or diploma.\vspace{3cm}

\begin{singlespace}
\noindent Dr. Bharath Bhikkaji

\noindent Professor and Research Guide,

\noindent Dept. of Electrical Engineering,

\noindent IIT Madras, Chennai - 600036.\vspace{2cm}

\noindent Place: Chennai

\noindent Date: 11-4-2014
\end{singlespace}

\acknowledgements

The fun and elegance of research was motivated primarily by my guide
and mentor Dr. Bharath Bhikkaji. His free approach towards research
enabled me to develop the curiosity needed to explore various fields
which made the prospect of doing research both fun and entertaining.
He also helped me along the path of research and pointed me in the
right directions when I needed it. It's of note to mention that he
never once did say that he ain't got the time. All in all, the best
Guide to have.

I thank Dr. Girish Ganesan, Santa Fe partners LLC, for his valuable
comments and criticisms. He also gave the finance data to be analysed
and helped in interpreting the results.

The two courses that I felt were great are ``Speech Signal Processing''
by Dr. Umesh Srinivasan and ``Applied Time Series Analysis'' by
Dr. Arun Thangirala. These two courses were entertaining and the assignments
were fun to solve.

Life here in IITM would have been bland if not for my friends (Instrumentation
and Control lab mates and my Hostel friends). Of particular mention,
Mithun for always being there and helping out in any way (annoyance
included); Pankaj (Drama Queen) for he'll kill me if I don't mention
him; Pavan, for all those long pointless but interesting talks; Jayesh,
for being that awesome crazy friend. 

Last but not the least, I'd like to thank my family for supporting
me through this endeavour. My mom, Vanathi, for her prayers and my
Dad, Dr. Manohar Jesudas, for providing inspiration and motivation.
My sister, Grace Nithia for her pestering and all other family members
for being a critic.
\selectlanguage{english}%
\begin{abstract}
An observed $K$-dimensional series $\left\{ y_{n}\right\} _{n=1}^{N}$
is expressed in terms of a lower $p$-dimensional latent series called
factors $f_{n}$ and random noise $\varepsilon_{n}$. The equation,
$y_{n}=Qf_{n}+\varepsilon_{n}$ is taken to relate the factors with
the observation. The goal is to determine the dimension of the factors,
$p$, the factor loading matrix, $Q$, and the factors $f_{n}$. Here,
it is assumed that the noise co-variance is positive definite and
allowed to be correlated with the factors. An augmented matrix, 
\[
\tilde{M}\triangleq\left[\begin{array}{cccc}
\tilde{\Sigma}_{yy}(1) & \tilde{\Sigma}_{yy}(2) & \ldots & \tilde{\Sigma}_{yy}(m)\end{array}\right]
\]
is formed using the observed sample autocovariances $\tilde{\Sigma}_{yy}(l)=\frac{1}{N-l}\sum_{n=1}^{N-l}\left(y_{n+l}-\bar{y}\right)\left(y_{n}-\bar{y}\right)^{\top}$,
$\bar{y}=\frac{1}{N}\sum_{n=1}^{N}y_{n}$. Estimating $p$ is equated
to determining the numerical rank of $\tilde{M}$. Using Rank Revealing
QR (RRQR) decomposition, a model order detection scheme is proposed
for determining the numerical rank and for estimating the loading
matrix $Q$. The rate of convergence of the estimates, as $K$ and
$N$ tends to infinity, is derived and compared with that of the existing
Eigen Value Decomposition based approach. Two applications of this
algorithm, i) The problem of extracting signals from their noisy mixtures
and ii) modelling of the S\&P index are presented.
\end{abstract}

\begin{singlespace}
\noindent \tableofcontents{}\listoftables

\noindent \listoffigures

\end{singlespace}

\section*{Terminology}

\subsection*{Matrix Notations}
\begin{enumerate}
\item Let $a_{n}\in\mathbb{R}^{K}$ and $b_{n}\in\mathbb{R}^{K},\;n=1,2,...,N$
be two vector valued stochastic processes. The cross-covariance between
them is denoted by, 
\begin{equation}
\Sigma_{ab}(k)\triangleq\mathcal{E}\left\{ \left(A[n+k]-\mathcal{E}\{A\}\right)\left(B[n]-\mathcal{E}\{B\}\right)\right\} ,\label{eq:2.1}
\end{equation}
where $\mathcal{E}\{.\}$ is the expectation operator.
\item The sample covariance (finite time approximation) is denoted by, 
\begin{eqnarray}
\tilde{\Sigma}_{ab}(k) & \ \triangleq & \frac{1}{N-k}\sum_{n=1}^{N-k}(a_{n+k}-\bar{a})(b_{n}-\bar{b})^{\top},\label{eq:2.2}
\end{eqnarray}
where $\bar{a}$ and $\bar{b}$ are the sample means of the stochastic
series $\{a_{n}\}$ and $\{b_{n}\}$ given by $\bar{a}=\frac{1}{N}\sum_{n=1}^{N}a_{n}$,
$\bar{b}=\frac{1}{N}\sum_{n=1}^{N}b_{n}$ respectively.
\item The difference between the sample covariance and the ideal estimate
is denoted by the perturbation 
\begin{equation}
\Delta\Sigma_{ab}(k)\ \triangleq\tilde{\Sigma}_{ab}(k)-\Sigma_{ab}(k).\label{eq:2.3}
\end{equation}
\end{enumerate}

\subsection*{Order Notations}

The following notations are used for denoting the asymptotic convergence
rates:
\begin{enumerate}
\item Let $\{a_{n}\}$ and $\{b_{n}\}$ be deterministic sequences. $a_{n}=O(b_{n})$
implies, there exists a constant $M>0$ such that, $0<\lim_{n\rightarrow\infty}\,\frac{a_{n}}{b_{n}}\leq M.$
\item $a_{n}\asymp b_{n}$ implies for a given integer $n_{0}>0$ there
exists $m,\,M>0$ such that $m<\mid\frac{a_{n}}{b_{n}}\mid<M$, for
all $n>n_{o}$. 
\item $a_{n}=o(b_{n})$ implies, $\lim_{n\rightarrow\infty}\,\frac{a_{n}}{b_{n}}=0$.
\item A stochastic sequence $\{a_{n}\}$ is said to be of smaller order
($o$) in probability to the deterministic sequence $\{b_{n}\}$,
denoted by $a_{n}=o_{P}(b_{n})$, if $\frac{a_{n}}{b_{n}}\;\rightarrow_{P}\;0$
where $\rightarrow_{P}$ denotes convergence in probability. That,
given $\epsilon>0$ there exists an $\delta>0$ and an integer $n_{0}>0$
such that $P(\mid\frac{a_{n}}{b_{n}}\mid<\epsilon)>1-\delta,\;\forall n>n_{0}$.
In other words $P(\mid\frac{a_{n}}{b_{n}}\mid<\epsilon)\to1$ as $n\to\infty$.
\item Similarly with $\{a_{n}\}$, a stochastic sequence, and $\{b_{n}\}$,
a deterministic sequence, $a_{n}=O_{P}(b_{n})$ implies, given $\epsilon>0$
there exists a constant $M>0$ such that $P(\mid\frac{a_{n}}{b_{n}}\mid<M)\geq1-\epsilon,\;\forall n\geq n_{0}$
\item $a_{n}\asymp_{P}b_{n}$ implies, given $\epsilon>0$ there exists
constants $m,\,M>0$ such that $P(m<\mid\frac{a_{n}}{b_{n}}\mid<M)\geq1-\epsilon,\;\forall n\geq n_{0}$
\end{enumerate}
For more details on the order notations refer (\cite[pp 53-55]{lehmann1999elements})
and also Appendix 2.

\subsection*{Miscelanious Notations}
\begin{enumerate}
\item The MATLAB notations are used to address various blocks of the matrix:

\begin{enumerate}
\item $A(:,\,j)$ refers to the $j^{th}$ column of $A$
\item $A(j,\,:)$ refers to the $j^{th}$ row of $A$
\item $A(a:b,\,c:d)$ refers to the block of $A$ consisting of rows $a$
to $b$ and columns $c$ to $d$
\end{enumerate}
\item The $\sigma_{i}\left(A\right)$ notation is used to indicate the $i^{th}$
highest singular value of the matrix $A\in\mathbb{R}^{m\times n}$.
\item $\lambda_{i}\left(A\right)$ reperesents the $i^{th}$ largest eigen
value of the matrix $A$. 
\end{enumerate}

\selectlanguage{british}%
\pagenumbering{arabic}

\chapter{\label{chap:Introduction}Introduction}

Factor modeling refers to modeling observations in terms of its constituent
factors. In Multivariate Statistics, factor modeling refers to modeling
a given high dimensional series called ``observations'' in terms
of lower dimensional time series called ``factors''. A linear relationship\foreignlanguage{english}{
\begin{equation}
y_{n}=Hx_{n}+\varepsilon_{n},\;n=1,2,\ldots,\ N,\label{eq:1.2}
\end{equation}
}is assumed between the observations and the factors. Here $y_{n}\in\mathbb{R}^{K}$\foreignlanguage{english}{
is the observed data, $x_{n}\in\mathbb{R}^{p}$ the vector of factors,
$H\in\mathbb{R}^{K\times p}$ the factor loading matrix and $\varepsilon_{n}\in\mathbb{R}^{K\times1}$
being the random noise. Without loss of generality, it could be assumed
that $y_{n}$ and $\varepsilon_{n}$ are of zero mean.}

Factor models were first introduced by Spearman in 1904, \cite{spearman1904proof},
\foreignlanguage{english}{wherein the hidden factor ``ability''
of a set of students were estimated from the ``points'' obtained
in the examinations (quizzes) conducted}. Consider the following example: 
\begin{example}
The scores of students in three different exams are as follows:
\selectlanguage{english}%
\noindent \begin{flushleft}
\begin{tabular}{|c|c|c|c|}
\hline 
 & \textbf{Classics, $y_{1}$} & \textbf{French, $y_{2}$} & \textbf{English, $y_{3}$ }\tabularnewline
\hline 
\hline 
\textbf{Student 1} & 10 & 20 & 15\tabularnewline
\hline 
\textbf{Student 2} & 20 & 40 & 30\tabularnewline
\hline 
\textbf{Student 3} & 15 & 30 & 22.5\tabularnewline
\hline 
\end{tabular}
\par\end{flushleft}
In matrix form it could be written as, 
\[
Y=\left[\begin{array}{ccc}
10 & 20 & 15\\
20 & 40 & 30\\
15 & 30 & 22.5
\end{array}\right].
\]
This matrix is rank one and could be written as $Y=HF$ where $H=\left[\begin{array}{ccc}
1 & 2 & 1.5\end{array}\right]^{\top}$ and $F=\left[\begin{array}{ccc}
10 & 20 & 15\end{array}\right]$. Normally, the students scores would not be exact as above, and all
the smaller deviations from this single factor model will be bundled
up into the matrix $E$. Leading to the more general model, 
\[
Y=HF+E.
\]
\end{example}
Many scenarios could be modelled using \eqref{eq:1.2}. For example,
the cock-tail party problem where $p$ number of speakers, present
in a room, are recorded by $K$ number of microphones, with $K>p$.
Here, each microphone records a mix of every person's speech. The
vector of observations from the microphones, $y_{n}\in\mathbb{R}^{K}$,
could be expressed as a linear combination of the speech signals,
$x_{n}\in\mathbb{R}^{p}$. In the stock market, the daily returns,
$y_{n}$, of a collection of stocks, referred to as a portfolio, can
be interpreted as a linear combination of a few hidden factors that
affect all the stocks. 

\selectlanguage{english}%
Also, in the context of signal processing, factor models have been
used in sensor array processing \cite{Wax,sidiropoulos2000parallel},
spectrum analysis \cite{Stoica} and time series analysis \cite{anderson1963use}.

The problem chosen in this thesis is to estimate $H$ and $\left\{ x_{1}\ldots x_{N}\right\} $
given $\left\{ y_{1}\ldots y_{N}\right\} $ in \eqref{eq:1.2}. Note
the pair $(H,\;x_{n})$ cannot be uniquely determined as it can be
replaced by $(HA,\;A^{-1}x_{n})$ for any non-singular $A$. However,
as $\mathcal{R}(H)=\mathcal{R}(HA)$, where $\mathcal{R}(.)$ denotes
the range, the space spanned by the columns of $H$ is unique. Hence,
without loss of generality, assuming a QR decomposition for $H$,
(\ref{eq:1.2}) can be replaced by 
\begin{equation}
y_{n}=QRx_{n}+\varepsilon_{n}=Qf_{n}+\varepsilon_{n},\label{eq:1.3}
\end{equation}
where 
\begin{equation}
f_{n}=Rx_{n}\label{eq:f_n_def}
\end{equation}
and $Q$ is a ${K\times p}$ matrix with orthonormal columns. The
goal here is to find a $Q$, such that its columns are orthonormal
and span $\mathcal{R}(H)$. Using $Q$, an estimate of the factors
$f_{n}$can be obtained by setting, $f_{n}=Q^{\top}y_{n}$.

\section{Outline of the Proposed approach }

In this thesis, a QR decomposition based approach is used to estimate
$p,\,Q$ and $f_{n}$. First a matrix 
\begin{eqnarray}
M & \triangleq & \left[\begin{array}{cccc}
\Sigma_{yy}(1) & \Sigma_{yy}(2) & \ldots & \Sigma_{yy}(m)\end{array}\right]\label{eq:M_def}
\end{eqnarray}
is constructed. Here, $\Sigma_{yy}(l)$ are as defined in \eqref{eq:2.1}.
Note, due to \eqref{eq:1.3}, 
\begin{eqnarray}
\Sigma_{yy}(l) & = & \mathcal{E}\left\{ y_{n+l}y_{n}^{\top}\right\} \nonumber \\
 & = & Q\Sigma_{ff}\left(l\right)Q^{\top}+Q\Sigma_{f\varepsilon}(l)+\Sigma_{\varepsilon f}(l)Q^{\top}+\Sigma_{\varepsilon\varepsilon}(l).\label{eq:Introduction_1}
\end{eqnarray}
Assuming that the factors are not correlated with the future noise,
\emph{i.e.,}$\Sigma_{\varepsilon f}(l)=\mathcal{E}\left\{ \varepsilon_{n+l}f_{n}^{\top}\right\} =0$,
and the noise not having correlations across time, \emph{i.e.,} $\Sigma_{\varepsilon\varepsilon}(l)=\mathcal{E}\left\{ \varepsilon_{n+l}\varepsilon_{n}^{\top}\right\} =0$
leads to,
\[
\Sigma_{yy}(l)=Q\left[\Sigma_{ff}\left(l\right)Q^{\top}+\Sigma_{f\varepsilon}(l)\right].
\]
Thus, 
\begin{eqnarray}
M & = & Q\left[\begin{array}{ccc}
\Sigma_{ff}(1)Q^{\top}+\Sigma_{f\varepsilon}(1) & \ldots & \Sigma_{ff}(m)Q^{\top}+\Sigma_{f\varepsilon}(m)\end{array}\right]\nonumber \\
 & \triangleq & QP,\label{Eqn11}
\end{eqnarray}
where 
\begin{eqnarray}
P & = & \left[\begin{array}{cccc}
P_{1} & P_{2} & \ldots & P_{m}\end{array}\right]\label{Dum1}
\end{eqnarray}
with 
\begin{eqnarray}
P_{l} & = & \Sigma_{ff}(l)Q^{\top}+\Sigma_{f\varepsilon}(l).\label{Dum2}
\end{eqnarray}
Assuming all factors have a non-zero correlation at least in one of
the lags $l=1,\ldots,m$, implies $P$ is full row rank \emph{(i.e.,
}$P$ is of rank $p$\emph{).} This further implies that $M$ is also
of rank $p$\emph{.}

Hence the QR decomposition \cite{golub2012matrix,gander1980algorithms},
\begin{equation}
M\triangleq\mathcal{Q}R,\label{eq:QR1-1}
\end{equation}
of $M$, (\ref{eq:M_def}), with $\mathcal{Q}$ being an orthonormal
matrix would have an upper triangular $R$ with last $K-p$ rows being
zeros, \textit{i. e.,} 
\[
R=\begin{array}{c}
p\\
K-p
\end{array}\left[\begin{array}{c}
R_{11}\\
0
\end{array}\right]_{K\times mK}.
\]
Thus $Q$ is obtained by setting 
\[
Q\triangleq\mathcal{Q}(:,1:p),
\]
the first $p$ columns of $\mathcal{Q}$ (in \noun{Matlab} Notations)
and an estimate of the factors is obtained by setting $\hat{f}_{n}=Q^{\top}y_{n}$.
Note, for any permutation matrix $\Pi$$\in\mathbb{R}^{mK\times mK}$
such that the first $p$ columns of $M\Pi$ are linearly independent,
would have a QR decomposition of the form
\begin{equation}
M\Pi\triangleq\mathcal{\hat{Q}}\hat{R},\label{eq:QR1-1-1}
\end{equation}
with $\hat{R}$ having the last $K-p$ rows as zeros. A different
$Q$ can be obtained by setting $Q\triangleq\hat{\mathcal{Q}}\left(:,\,1:p\right)$.
As both $\mathcal{Q}(:,1:p)$ and $\hat{\mathcal{Q}}(:,1:p)$ span
the same column space, there need not be any specific preference of
one over the other. However, it would be prudent to choose a $\Pi$
such that the $p$ columns selected by $\Pi$ have the best condition
number.

In the finite data case $M$, \eqref{eq:M_def}, is replaced by 
\begin{equation}
\tilde{M}\triangleq\left[\begin{array}{cccc}
\tilde{\Sigma}_{yy}(1) & \tilde{\Sigma}_{yy}(2) & \ldots & \tilde{\Sigma}_{yy}(m)\end{array}\right],\label{eq:M_perturbed-1}
\end{equation}
where $\tilde{\Sigma}_{yy}(l)$ is as in \eqref{eq:2.2}. Note $\mbox{rank}\left(\tilde{M}\right)\neq p$,
in fact it will be of full rank ($\mbox{rank}\left(\tilde{M}\right)=$$K$)
with probability one. The determination of $p$ and $Q$ from $\tilde{M}$
are not as trivial as before. The matrix $\tilde{M}$ could be interpreted
as a perturbed version of the matrix $M$, 
\[
\tilde{M}=M+\Delta M
\]
or in more general form, 
\[
\tilde{M}\Pi=M\Pi+\Delta M\Pi.
\]
And the problem could be restated as finding an estimate of $M\Pi$
from $\tilde{M}\Pi$. A general QR decomposition,
\begin{eqnarray}
\tilde{M}\Pi & \triangleq & \begin{array}{c}
p\qquad mK-p\\
\left[\begin{array}{cc}
\tilde{M}_{p} & \tilde{M}_{mK-p}\end{array}\right]
\end{array}\nonumber \\
 & \triangleq & \left[\begin{array}{cc}
\mathcal{\tilde{Q}}_{p} & \tilde{\mathcal{Q}}_{K-p}\end{array}\right]\left[\begin{array}{cc}
\tilde{R}_{11} & \tilde{R}_{12}\\
0 & \tilde{R}_{22}
\end{array}\right]\nonumber \\
 & = & \mbox{\ensuremath{\mathcal{\tilde{Q}}}}\tilde{R},\label{eq:rrqr1-1}
\end{eqnarray}
where $\tilde{R}_{22}\neq0$. Assuming $\mbox{rank}\left(M\right)=p$,
an estimate of $M\Pi$ can be obtained by setting, $M\Pi=\tilde{Q}(:,\,1:p)\tilde{R}_{11}$.
For a good approximation of $M\Pi$, it is desirable to have a $\Pi$
such that $\sigma_{min}\left(\tilde{R}_{11}\right)\gg\sigma_{max}\left(\tilde{R}_{22}\right)$,
where $\sigma_{min}\left(.\right)$ denotes the minimum singular value
and $\sigma_{max}\left(.\right)$ denotes the maximum singular value.
Rank Revealing QR (RRQR) algorithms are a class of algorithms that
find a permutation matrix $\Pi$ such that either, 
\[
\max_{\Pi}\left(\sigma_{min}\left(\tilde{R}_{11}\right)\right)
\]
or,
\[
\min_{\Pi}\left(\sigma_{max}\left(\tilde{R}_{22}\right)\right)
\]
or both are satisfied. Using such a $\Pi$, a good estimate of $M\Pi$
can be obtained by setting, $M\Pi=\tilde{\mathcal{Q}}_{p}\tilde{R}_{11}$.
An the estimate of the factor loading matrix and the factors can be
got by setting, $\hat{Q}=\tilde{\mathcal{Q}}_{p}$ and $\hat{f}_{n}=\hat{Q}^{\top}y_{n}$
respectively.

So far, $\mbox{rank}\left(M\right)=p$ was assumed to be known. Determination
of $\mbox{rank}\left(M\right)$ from $\tilde{M}$, is obtained by
equating it to the determination of \emph{numerical rank}, \cite{golub1976rank,stewart1984rank},
of the matrix $\tilde{M}$. In this thesis, an RRQR based approach
is used to determine numerical rank. An estimate, $\hat{p}$, of $\mbox{rank}\left(M\right)$,
is set to, 
\begin{equation}
\hat{p}=\arg\max_{i}\frac{\gamma_{i}+\epsilon}{\gamma_{i+1}+\epsilon},\label{eq:model_order}
\end{equation}
where $\gamma_{i}$ and $\gamma_{i+1}$ are the $i^{th}$ and $i+1^{th}$
diagonal values of $\tilde{R}$, \eqref{eq:rrqr1-1}, got by assuming
numerical rank to be $i$ in the Hybrid-III RRQR algorithm \cite{chandrasekaran1994rank}
with $\epsilon=\frac{\gamma_{1}}{\sqrt{NK}}$.

Note, Hybrid-III RRQR algorithm satisfies the bounds in Hybrid-I algorithm
and hence could also be used in estimation of factor loading matrix
$Q$. Here, Hybrid-I algorithm is used because of its lesser numerical
complexity.

In this thesis, asymptotic analysis of the estimates, $\hat{p},\,\hat{Q}$
and $\hat{f}_{n}$ are presented. The rates of convergence of the
estimates, when the dimension of the observations $K$ and the duration
of the observations $N$ tends to infinity, under the constraint $\frac{K^{\delta}}{\sqrt{N}}=o(1)$,
are derived. It is shown that the proposed algorithm has better convergence
rates than the Eigen Value Decomposition (EVD) based approach, \cite{lam2011estimation},
under certain conditions.

\section{Chapter Outline}

In Chapter \ref{chap:Data-Model}, the regulatory conditions under
which the proposed algorithm is designed are summarized along with
a step-by-step implementation procedure for the proposed algorithm.

In Chapter \ref{chap:Algorithm}, the two existing algorithms for
factor modeling, namely the Principal Component Analysis Method \foreignlanguage{british}{\cite{Mardia,AndersonPCA,anderson1963use,Joreskog,Kailath}
and the EVD based method}, are discussed in detail.

In Chapter \ref{chap:RRQR-Decomposition}, a detailed exposition of
RRQR algorithms are presented along with derivation of bounds satisfied
by the Hybrid algorithms.

In Chapter \ref{chap:Model-order}, the problem of finding the number
of factors, $p$ from $\tilde{M}$ is discussed in the numerical rank
perspective. The asymptotic rate of convergence for the estimate of
the model order $\hat{p}$ as $K,N\rightarrow\infty$ is presented
therein.

In Chapter \ref{chap:Perturbation-Analysis}, the Convergence of $\hat{Q}$
obtained by using $\tilde{M}$ to $Q$ obtained from $M$ is studied.
The asymptotic rates are also got for the same.

In Chapter \ref{chap:Comparison-with-the}, the proposed algorithm
is compared with the existing EVD based algorithm of \cite{lam2011estimation}. 

In Chapter \ref{chap:Illustration-and-Simulations}, the numerical
results obtained by simulating the algorithm using Monte-Carlo trials
is presented and compared with other existing algorithms. There is
also the simulation of the famous cock-tail party problem wherein
the proposed algorithm is used as pre-processing for the popular Independent
Component Analysis (ICA) technique. A real life application of the
proposed algorithm to the finance market is also presented.

Finally, in Chapter \ref{chap:Conclusion-and-Discussions:}, concludes
with a discussion of the entire thesis along with the prospect of
further developments.

\section{Contributions of this thesis}

In this thesis, a new algorithm for factor modeling is proposed. While
other popular algorithms like the PCA based algorithms have \foreignlanguage{british}{quite
stringent constraints on noise $\varepsilon_{n}$ \emph{i.e., }}
\begin{eqnarray}
\mathcal{E}\left\{ f_{n}\varepsilon_{m}^{\top}\right\}  & = & 0,\;\forall\;n,\;m\label{Eqn3-1}\\
{\rm and\;\;\;\;\;}\nonumber \\
\mathcal{E}\{\varepsilon_{n}\varepsilon_{m}^{\top}\} & = & \sigma^{2}I_{n\times n}\delta\left(n-m\right),\label{Eqn4-1}
\end{eqnarray}
where $I_{n\times n}$ is an Identity matrix and $\delta(l)$ is the
Kronecker delta function, the proposed algorithm has less stringent
constraints allowing the factors to be correlated with the past noise
and also that noise can have any-covariance $\Sigma$. 

The proposed algorithm is subjected to rigorous asymptotic analysis
wherein
\begin{itemize}
\item Rate of convergence of the model order estimate $\hat{p}$, \eqref{eq:model_order},
to $p$ as $K,N\rightarrow\infty$ are derived.
\item Convergence rates of the factor loading matrix $\hat{Q}$ and factors
$\hat{f}_{n}$ as $K,N\rightarrow\infty$ are also derived. 
\end{itemize}
From these asymptotic properties, it was found that the proposed algorithm
performs better than the EVD based algorithm proposed in \cite{lam2011estimation}
in certain scenarios.

Two interesting applications of the proposed algorithm is also discussed.
\begin{itemize}
\item The Cock-tail party problem of separating multiple signals from their
linear noisy mixtures.
\item Modeling the S\&P 500 index and predicting the daily returns of individual
companies.
\end{itemize}

\newpage{}

\vspace*{7.5cm}

\emph{~~~~~~~~~~~~~~~~~~~~~~~~~~~~~~~~~~~~~~~~~~~~~This
page is intentionally left blank}

\newpage{}

\chapter{\label{chap:Data-Model}\foreignlanguage{british}{Data Model and
Summary of Algorithm}}

\selectlanguage{british}%
In this chapter, the Data Model and assumptions upon which the proposed
algorithm works are summarized in Section (\ref{sec:Assumptions}).
Section (\ref{sec:Summary-Algorithm}) gives a step-by-step implementation
procedure for the proposed algorithm.
\selectlanguage{english}%

\section{\label{sec:Assumptions}Assumptions on the Data}

\subsection*{Regulatory Conditions}

The data model assumed as (\ref{eq:1.2}). Now, the factors could
be written as 
\[
X=\left[x_{1},\,x_{2},\,\ldots,x_{N}\right]=\left[x^{(1)},\,x^{(2)},\,\ldots,\,x^{(p)}\right]^{\top}
\]
where $x_{n}\in\mathbb{R}^{p}$ represents the values taken by the
factors at the $n^{th}$ instant and $x^{(i)}\in\mathbb{R}^{N}$ represents
the $i^{th}$ factor which is a series in time $n=1,\ldots N$. Now,
assume that linear combination of all the factors could result white
noise, then, the last factor $x^{(p)}$ could be written as 
\[
x_{n}^{(p)}=e_{n}+\sum_{i\neq p}\alpha_{i}x_{n}^{(i)},\,n=1,\ldots,N
\]
where $e_{n}$ is a white noise sequence. Thus, replacing the pair
$(H,\,x_{n})$ in (\ref{eq:1.2}) with the pair $(HA^{-1},\,Ax_{n})$
with $A=\left[\begin{array}{ccccc}
1\\
 & 1\\
 &  & \ddots\\
 &  &  & 1\\
-\alpha_{1} & -\alpha_{2} & \ldots & -\alpha_{p-1} & 1
\end{array}\right]$ would result in the factors $f=Ax_{n}=\left[x^{(1)},\,x^{(2)},\,\ldots,\,e\right]^{\top}$where
$e$ is a white noise sequence. Thus, since $e$ is white noise, it
does not have correlation at any lags $l=1,\,\ldots,\,m$. Thus, the
rank of $M$ in (\ref{eq:M_def}) would be $\left(p-1\right)$ and
the last factor would not be detected by the proposed algorithm. Thus,
the following condition is assumed.
\begin{lyxlist}{00.00.0000}
\item [{$\left(A1\right)$}] No linear combination of the factors $x^{(i)}$,
results in white noise \emph{i.e.,}\\
 $\left[\begin{array}{ccc}
\Sigma_{xx}(1) & \ldots & \Sigma_{xx}(m)\end{array}\right]$ is of rank $p$ (full row rank).
\end{lyxlist}
The following three conditions could be deduced from (\ref{eq:Introduction_1}): 
\begin{lyxlist}{00.00.0000}
\item [{$(A2)$}] The noise $\varepsilon$ is not correlated with past
factors, \emph{i.e.,} $\Sigma_{\varepsilon x}\left(k\right)=cov(\varepsilon_{n+k},x_{n})=Z,\,\forall k>0$
where $Z$ is a zero matrix. 
\item [{$(A3)$}] The factors $x_{n}$can have correlation with past noise
$\varepsilon_{n}$, \emph{i.e.,} $cov(x_{n+k},\varepsilon_{n})$ need
not necessarily be $Z$ but bounded as $N,K\rightarrow\infty$. 
\item [{$(A4)$}] The auto-covariance of the noise needs to be bounded,
\emph{i.e.,} $cov(\varepsilon_{n},\varepsilon_{n})$ is bounded as
$N,K\rightarrow\infty$. 
\end{lyxlist}
The following mixing conditions, \cite[pp 67-77]{fan2003nonlinear},
ensures that as $N\rightarrow\infty$, $\tilde{M}$, (\ref{eq:M_perturbed-1}),
tends to $M$, (\ref{eq:M_def}) for a fixed $K$\emph{, i.e., }the
finite sample estimates converge to the ensemble average. 
\begin{lyxlist}{00.00.0000}
\item [{$(A5)$}] The process $\{y_{n}\}$ is a zero mean strictly stationary
process with $\psi$ mixing with mixing coefficients $\psi\{.\}$
satisfying $\sum_{n\geq1}n\psi(n)^{\frac{1}{2}}<\infty$ also $\mathcal{E}(|y_{n}|^{4})<\infty$
element wise. refer (\cite{fan2003nonlinear}) for a comprehensive
study of mixing properties. The $\psi$ mixing could also be replaced
with $\alpha$ mixing with mixing coefficient $\alpha_{n}=O\left(n^{-5}\right)$
and $\mathcal{E}(|y_{n}|^{12})<\infty$ element wise. For properties
regarding the same, refer Theorem 27.4 in (\cite{billingsley2008probability}).
Refer the Appendix 2 for more details regarding mixing properties.
\end{lyxlist}

\subsection*{Factor Strength}

A factor is considered a strong factor if it affects all the observed
outputs and a weak factor if not. To model this mathematically, the
columns of $H$ determine the strength with which the factors, $x_{n}$,
affect the observations, $y_{n}$. The following two assumptions on
the columns of $H$ quantify the impact of the factors on the observations: 
\begin{lyxlist}{00.00.0000}
\item [{$(A6)$}] The columns of the matrix $H=\left[\begin{array}{cccc}
h_{1} & h_{2} & \ldots & h_{p}\end{array}\right]$ are such that, $O\left(\left\Vert h_{i}\right\Vert _{2}^{2}\right)=O\left(K^{1-\delta}\right)$,
$i=1,2,...,p$ and $0\leq\delta<1$. 
\item [{$(A7)$}] The columns of $H$ are independent of each other, \emph{i.e.,}
\begin{equation}
O\left(\min_{\theta_{j}}\left\Vert h_{i}-\sum_{j\neq i}\theta_{j}h_{j}\right\Vert _{2}^{2}\right)=O\left(K^{1-\delta}\right)\label{eq:factor_strength_condn2}
\end{equation}
\end{lyxlist}
The above assumptions \emph{A1-A7} are quite similar to the ones made
in (\cite{lam2011estimation}).

Finally, consider the following notations used to denote the strength
of factor correlation with noise:
\begin{defn}
Define 
\begin{eqnarray}
\kappa_{max} & \triangleq & \max_{1\leq l\leq m}\left\Vert \Sigma_{f\varepsilon}(l)\right\Vert _{2}\\
\kappa_{min} & \triangleq & \min_{1\leq l\leq m}\sigma_{min}\left(\Sigma_{f\varepsilon}(l)\right)
\end{eqnarray}
where $\kappa_{max}$ and $\kappa_{min}$ correspond to a measure
of factor correlation with noise. 
\end{defn}

\section{Summary of the Algorithm\label{sec:Summary-Algorithm}}

In this section, the steps in implementation of the proposed algorithm
are presented. 
\begin{enumerate}
\item Let $\left\{ y_{n}\right\} _{n=1}^{N}$ be the observed data. And
the data model is assumed as in (\ref{eq:1.3}). 
\item Compute $\tilde{M}$ defined in (\ref{eq:M_perturbed-1}), where $\tilde{\Sigma}_{yy}(l)$
is got by (\ref{Eqn11a}). 
\item To evaluate the Numerical Rank, $p$, 

\begin{enumerate}
\item Set $i=1$
\item First the permutation matrix $\Pi$ is got by using \noun{Hybrid-III}\emph{
}algorithm\emph{ }assuming a numerical rank of $i$\emph{.}
\item Take Gram-Schmidt QR (GS-QR) decomposition of $\tilde{M}\Pi$ \emph{i.e.,}
$\tilde{M}\Pi=\tilde{Q}\tilde{R}$. Compute $r_{i}=\frac{\gamma_{i}+\epsilon}{\gamma_{i+1}+\epsilon}$
where $\gamma_{i}$ and $\gamma_{i+1}$ are the $i^{th}$ and $i+1^{th}$
diagonal values of $\tilde{R}$ and $\epsilon=\frac{\gamma_{1}}{\sqrt{KN}}$.
\item Set $i=i+1$
\item \textbf{Repeat Until $i<P$ }where, $P$ is taken such that $p<P$. 
\end{enumerate}
\item Now, estimate of numerical rank, $\hat{p}$, is got by finding the
$i$ that maximizes $r_{i}$ as in (\ref{eq:model_order}). 
\item Evaluate RRQR decomposition of $\tilde{M}$ using \noun{Hybrid-I}
algorithm (\cite{chandrasekaran1994rank}) presented in \emph{Algorithm
(\ref{alg:Hybrid-I}). }Let the RRQR Decomposition be as in (\ref{eq:rrqr1-1}). 
\item Now, the finite sample estimate of $\hat{Q}$ is got by, $\hat{Q}_{N,K}=\tilde{Q}(:,1:\hat{p})$.
\item Finally the estimate of factors is got by, $\hat{f}_{n}=\hat{Q}_{N,K}^{\top}y_{n}$.
\end{enumerate}
An exposition of RRQR-Algorithms are presented in Chapter (\ref{chap:RRQR-Decomposition}).

\chapter{\label{chap:Algorithm}Existing Algorithms}

\selectlanguage{british}%
In this chapter, two existing algorithms for factor modeling are analysed.
Section \ref{sec:ML-Method} discusses the PCA based method of factor
analysis and Section \ref{sec:EVD-Method} covers the Eigen Value
Decomposition based method. 

\section{\label{sec:ML-Method}PCA based Method}

The PCA based method is dealt in many literature, notably \cite{Mardia,AndersonPCA,anderson1963use,Joreskog,Kailath}.
Assume that the observed time series $\left\{ y_{n}\in\mathbb{R}^{K}\right\} _{n=1}^{N}$
could be modelled as, 
\[
y_{n}=Qf_{n}+\varepsilon_{n}
\]
where, $f_{n}\in\mathbb{R}^{p}$ are factors and $\varepsilon_{n}\in\mathbb{R}^{K}$
is random noise. The factors and noise are assumed to satisfy:
\begin{description}
\item [{Condition\_1:}] The factors and noise are uncorrelated with each
other \emph{i.e., $\Sigma_{f\varepsilon}(l)=\mathcal{E}\left\{ f_{n+l}\varepsilon_{l}^{\top}\right\} =0,\:\forall l$.}
\item [{Condition\_2:}] The noise should have covariance given by $\Sigma_{\varepsilon\varepsilon}(l)=\mathcal{E}\left\{ \varepsilon_{n+l}\varepsilon_{l}^{\top}\right\} =\delta(l)\sigma^{2}I,$
where $\delta(l)$ is the Kronecker delta function and $I\in\mathbb{R}^{K\times K}$
is an Identity matrix. 
\end{description}
Under the above conditions, the autocovariance matrix of the observations
is 
\[
\Sigma_{yy}(0)=\mathcal{E}\left\{ y_{n}y_{n}^{\top}\right\} =Q\Sigma_{ff}(0)Q^{\top}+\sigma^{2}I.
\]
Assuming that $\Sigma_{ff}(0)$ to be full rank, implies that $Q\Sigma_{ff}(0)Q^{\top}$
is of rank $p$ and will have $p$ non-zero positive Eigen Values.
Let $\lambda_{1}\geq\lambda_{2}\geq\ldots\geq\lambda_{p}$ be the
eigen values of $Q\Sigma_{ff}(0)Q^{\top}$. The Eigen Value Decomposition
of $\Sigma_{yy}(0)$ is, 
\[
\Sigma_{yy}(0)=U\left[\begin{array}{cccccc}
\lambda_{1}+\sigma^{2}\\
 & \ddots\\
 &  & \lambda_{p}+\sigma^{2}\\
 &  &  & \sigma^{2}\\
 &  &  &  & \ddots\\
 &  &  &  &  & \sigma^{2}
\end{array}\right]U^{\top}
\]
Thus, the number of factors equal $p$ and the factor loading matrix
could be set to, $\hat{Q}=U(:,\,1:p)$, the eigen vectors that correspond
to the largest $p$ Eigen values. The factors are then $\hat{f}_{n}=\hat{Q}^{\top}y_{n}.$
Henceforth, this technique will be refered to as PCA. 

Now, the above procedure is ideal. In practice, $\Sigma_{yy}(0)$
will be replaced by its sample co-variance, $\tilde{\Sigma}_{yy}(0)=\frac{1}{N}\sum_{n=1}^{N}y_{n}y_{n}^{\top}$.
In which case, estimating the number of factors and the factor loading
matrix becomes a bit more difficult. The problem of finding the number
of factors is dealt as a \emph{model order selection} problem using
AIC (Akiake Information Criteria) and MDL (Minimum Code Length, \cite{rissanen1978modeling})
in \cite{Kailath}. In \cite{bai2002}, an alternative Information
Theoretic Criteria is proposed and proved to be better than the AIC,
\cite{Akaike_IC}, BIC (Bayesian Information Criteria) and all of
their derivatives. \foreignlanguage{english}{In \cite{bai2002}, the
model order is taken as the $p$ that minimizes the Information Content,
\[
IC_{p}=\ln\left(V(p,\hat{f}_{n})\right)+p\left(\frac{K+N}{KN}\right)\ln\left(\frac{KN}{K+N}\right)
\]
where 
\[
V(p,\hat{f}_{n})=\min_{\hat{Q}}\frac{1}{KN}\sum_{n=1}^{N}\left\Vert y_{n}-\hat{Q}^{(p)}\hat{f}_{n}^{(p)}\right\Vert _{2}^{2}
\]
Here, $\hat{Q}^{(p)}$ and $\hat{f}_{n}^{(p)}$ are obtained using
PCA by assuming model order as $p$. Thus, the estimate of model order
is 
\[
\hat{p}_{N,K}=\arg\min_{p}\{IC_{p}\}.
\]
For more details regarding the same, refer \cite{bai2002}.}

\selectlanguage{english}%
To estimate the factor loading matrix, a Maximum Likelihood framework
is used. \foreignlanguage{british}{The Maximum Likelihood framework
for factor analysis was first introduced by Lawley in the year 1940,
\cite{lawley1940estimation}. A detailed version of this ML approach
is given in, \cite{Joreskog,Stoica2,bai2002,Mardia}. }

\selectlanguage{british}%
Let, 
\[
\tilde{\Sigma}_{yy}(0)=\tilde{U}\hat{\Lambda}\tilde{U}^{\top}
\]
be the eigen value decomposition of $\tilde{\Sigma}_{yy}(0)$, where
$\hat{\Lambda}$ is a diagonal matrix of eigen values arranged in
descending order, $\hat{\lambda}_{1}\geq\hat{\lambda}_{2}\geq\ldots\geq\hat{\lambda}_{K}$.
It was proved in \cite{Stoica2}, that the optimal solution to the
ML estimate for the error co-variance, $\sigma^{2}I$, is given by
$\hat{\sigma}^{2}=\sum_{i=p+1}^{K}\hat{\lambda}_{i}$ and the ML estimate
for $Q$ is given by $\hat{Q}=\tilde{U}\left(:,\,1:p\right)$.

The disadvantages of this PCA based method are, 
\begin{enumerate}
\item The ML estimate inherently assumes $y_{n}$ is iid Gaussian (Though
in real life simulations, it could not be ensured).
\item The requirement of conditions 1 and 2 regarding the noise $\varepsilon_{n}$
and factors $f_{n}$ as stated before.
\item No properties of the factors $f_{n}$ are taken into consideration. 
\end{enumerate}

\section{\label{sec:EVD-Method}EVD Method}

Recently, Lam et. al., had proposed a new technique for factor modeling,
\cite{lam2011estimation}. This technique helps in overcoming the
disadvantages pointed out in the previous PCA based method. 

\selectlanguage{english}%
In \cite{lam2011estimation}, the authors setup a matrix 
\begin{eqnarray}
S & \triangleq & \sum_{l=1}^{m}\Sigma_{yy}(l)\Sigma_{yy}(l)^{\top}\label{Eqn10}\\
 & = & MM^{\top},\nonumber 
\end{eqnarray}
where $M$ is as defined in (\ref{eq:M_def}). From (\ref{Eqn10})
and (\ref{Eqn11}), 
\begin{eqnarray}
S & = & QPP^{\top}Q^{\top}.\label{Eqn11d}
\end{eqnarray}

As $S$ is an $K\times K$ positive semi-definite matrix with rank
$p$ (assuming the $p\times p$ matrix $PP^{\top}$ to be full rank),
its Eigen value decomposition 
\begin{eqnarray}
S & = & U\Lambda U^{\top}\label{Eqn11e}
\end{eqnarray}
would have a diagonal matrix $\Lambda$ with the first $p$ diagonal
elements strictly positive and the rest $K-p$ diagonal elements being
zero. The factor loading matrix was chosen to be $Q=U(:,1:p)$, the
first $p$ columns of $U$ corresponding to the non-zero Eigen values.
Here, $U(:,1:p)$ refers to the MATLAB notation of selecting first
$p$ columns. An estimate $\hat{f}_{n}$ of the factors was obtained
by setting $\hat{f}_{n}=Q^{\top}y_{n}$.

In practice, the auto-covariances $\Sigma_{yy}(k),\,\Sigma_{ff}(k)\,\mbox{and}\,\Sigma_{f\varepsilon}(k)\;k=0,1,2,\ldots$
are approximated by their corresponding finite sample equivalents,
for example $\Sigma_{yy}(k)$ is replaced by 
\begin{eqnarray}
\tilde{\Sigma}_{yy}(k) & = & \frac{1}{N-k}\sum_{n=1}^{N-k}\left(y_{n+k}-\bar{y}\right)\left(y_{n}-\bar{y}\right)^{\top},\label{Eqn11a}
\end{eqnarray}
where $\bar{y}\triangleq\frac{1}{N}\sum_{t=1}^{N}y_{n}$. The Eigen
values of the matrix $S$ formed using (\ref{Eqn11a}) will not have
$K-p$ zero Eigen values. In fact all the Eigen values will be different
with probability $1$. Therefore, $p$ cannot be directly determined
by computing the Eigen values of $S$. In \cite{lam2011estimation},
the model order, $p$, was estimated using 
\begin{equation}
\hat{p}=\arg\max_{1<i<R}\frac{\lambda_{i}}{\lambda_{i+1}}\label{eq:m1a}
\end{equation}
where $\lambda_{i}>\lambda_{i+1}$ denotes the Eigen values of the
matrix $S$ arranged in descending order. The properties of this ratio
based estimator (\ref{eq:m1a}) are summarized in \cite{C1}.

The Proposed QR decomposition based algorithm will be compared with
this algorithm in Chapter \ref{chap:Comparison-with-the} with regards
to asymptotic properties of their estimates and in Chapter \ref{chap:Illustration-and-Simulations}
through their simulations.
\newpage{}

\vspace*{7.5cm}

\emph{~~~~~~~~~~~~~~~~~~~~~~~~~~~~~~~~~~~~~~~~~~~~~This
page is intentionally left blank}

\newpage{}

\chapter{\label{chap:RRQR-Decomposition}RRQR Decomposition}

In this chapter, the RRQR algorithms used in this thesis are discussed
in detail. The derivations of the results pertaining to the RRQR algorithm
are presented in this chapter for the sake of completion.

Define a matrix $A$ of dimension $m\times n$ such that, 
\[
A\triangleq\left[\begin{array}{cccc}
a_{1} & a_{2} & \ldots & a_{n}\end{array}\right],
\]
 where $a_{i}$ is an $m$ dimensional column vector. The QR decomposition
of $A$ is given by: 
\begin{eqnarray}
A & = & QR,\label{eq:The QR Decomposition}\\
\left[\begin{array}{cccc}
a_{1} & a_{2} & \ldots & a_{n}\end{array}\right] & = & \left[\begin{array}{cccc}
q_{1} & q_{2} & \ldots & q_{n}\end{array}\right]\left[\begin{array}{cccc}
<a_{1},q_{1}> & <a_{2},q_{1}> & \ldots & <a_{n},q_{1}>\\
0 & <a_{2,}q_{2}> & \ldots & <a_{n},q_{2}>\\
0 & 0 & \ddots & \vdots\\
0 & 0 & 0 & <a_{n},q_{n}>
\end{array}\right]\nonumber 
\end{eqnarray}
where $q_{i}'s$ are the orthonormal columns of $Q$, $R$ is the
upper triangular matrix and $<\,,\,>$ denotes inner product. There
are many algorithms to achieve the above decomposition refer \cite{gander1980algorithms,golub2012matrix}
for more details.

In the above equation, the matrix $Q$ and $R$ could be segmented
as
\begin{eqnarray}
Q & = & \begin{array}{c}
p\quad m-p\\
\left[\begin{array}{cc}
Q_{1} & Q_{2}\end{array}\right]
\end{array}\nonumber \\
R & \triangleq & \begin{array}{cc}
 & \begin{array}{cc}
p\quad\; & n-p\end{array}\\
\begin{array}{c}
p\\
m-p
\end{array} & \left[\begin{array}{cc}
R_{11}\qquad & R_{12}\\
0\qquad & R_{22}
\end{array}\right]
\end{array},\label{eq:QR_decompostion_2}
\end{eqnarray}
for any $p<\min\left(n,m\right)$. From (\ref{eq:The QR Decomposition}),
it can be seen that the minimum singular value of $R_{11}$, $\sigma_{min}\left(R_{11}\right)$
is dependent on the columns $a_{1},\,\ldots,\,a_{p}$ of $A$ and
maximum singular value of $R_{22}$, $\sigma_{max}\left(R_{22}\right)$,
depends on the columns $a_{p+1},\,\ldots,\,a_{n}$ of $A$. Thus permuting
the columns of $A$ and taking QR would result in different $\sigma_{min}\left(R_{11}\right)$
and $\sigma_{max}\left(R_{22}\right)$. Let $\Pi$ be a permutation
matrix, the QR decomposition of $A\Pi$ be given by, 
\[
A\Pi=\tilde{Q}\tilde{R}.
\]
Given a $p<\min\left(m,n\right)$, RRQR algorithms are a group of
algorithms that find a $\Pi$ such that, 
\begin{equation}
\Pi\triangleq\arg\max_{\Pi\in\mathcal{P}}\sigma_{min}\left(\tilde{R}_{11}\right)\label{eq:C1}
\end{equation}
or
\begin{equation}
\Pi\triangleq\arg\min_{\Pi\in\mathcal{P}}\sigma_{max}\left(\tilde{R}_{22}\right)\label{eq:C2}
\end{equation}
or both, where $\mathcal{P}$ is a set of all permutation matrices
and $\tilde{R}_{11}$, $\tilde{R}_{22}$ are got by segmenting $\tilde{R}$
as in (\ref{eq:QR_decompostion_2}). The algorithms that solve (\ref{eq:C1}),
(\ref{eq:C2}) or both (\ref{eq:C1}) and (\ref{eq:C2}) are are termed,
Type I, Type II and Type-III algorithms respectively, \cite{chandrasekaran1994rank}.

In this thesis, the Hybrid RRQR algorithms proposed by Chandrasekaran
et. al., \cite{chandrasekaran1994rank}, are of particular interest.
The Hybrid-I algorithm is a type I algorithm that satisfies, 
\begin{eqnarray*}
\sigma_{min}\left(R_{11}\right) & \geq & \frac{\sigma_{p}\left(A\right)}{\sqrt{p(n-p+1)}},\\
\sigma_{max}\left(R_{22}\right) & \leq & \sigma_{min}\left(R_{11}\right)\sqrt{p(n-p+1)}.
\end{eqnarray*}
The Hybrid-II is a type -II algorithm that satisfies 
\begin{eqnarray*}
\sigma_{max}\left(R_{22}\right) & \leq & \sigma_{p+1}\left(A\right)\sqrt{(p+1)(n-p)}\\
\sigma_{min}\left(R_{11}\right) & \geq & \frac{\sigma_{max}\left(R_{22}\right)}{\sqrt{(p+1)(n-p)}}.
\end{eqnarray*}
Finally, the Hybrid-III algorithm combines the best of the previous
two bounds to ensure, 
\begin{eqnarray*}
\sigma_{max}\left(R_{22}\right) & \leq & \sigma_{p+1}\left(A\right)\sqrt{(p+1)(n-p)}\\
\sigma_{min}\left(R_{11}\right) & \geq & \frac{\sigma_{p}\left(A\right)}{\sqrt{p(n-p+1)}}.
\end{eqnarray*}
Derivation of these bounds can be found in \cite{chandrasekaran1994rank}.
These derivations are also presented later in this chapter for the
sake of completion.

To understand the working of the Hybrid algorithms, one needs to understand
the working of two other algorithms: the QR decomposition with Column
Pivoting (QR-CP) algorithm proposed by Golub, \cite{golub2012matrix},
which is a Type-I algorithm and its corresponding Type-II version
proposed by Stewart (Stewart-II) in \cite{stewart1984rank}.

The following theorem relates to Interlacing Property of singular
values. It is essential for understanding the rest of the thesis:
\begin{thm}
Let $\Sigma\in\mathbb{R}^{m\times n}$, $m<n$, and $\Sigma_{11}^{k\times l}$,
$k<l$, be any sub matrix of $\Sigma$ with \foreignlanguage{british}{$\sigma_{1}\geq\sigma_{2}\geq\ldots\geq\sigma_{m}$
and $\lambda_{1}\geq\lambda_{2}\geq\ldots\geq\lambda_{k}$ being their
corresponding singular values. Then,
\[
\sigma_{j}\geq\lambda_{j}\geq\sigma_{n-k+j},j=1,2...k.
\]
}
\end{thm}
\selectlanguage{british}%
\begin{proof}
Refer to Appendix 1.
\end{proof}
\selectlanguage{english}%

\subsection*{QR-CP Algorithm}

Let the matrix to be permuted be $A\in\mathbb{R}^{m\times n}$. The
algorithm is as follows:
\begin{description}
\item [{Initialize}] 

\begin{description}
\item [{$s=0$\textmd{;}}] $\Pi=I\in\mathbb{R}^{n\times n}$. 
\item [{Let}]~
\item [{$j=\arg\max_{i}\left\Vert A(:,\,i)\right\Vert _{2}.$}]~
\item [{if}] $j\neq1$, interchange $1^{st}$ and $j^{th}$ column of $\Pi$
and take QR decomposition\footnote{Householder or Gram Schmidt methods.},
\[
A\Pi=Q^{(0)}R^{(0)},
\]
where $R^{(s)}$ denotes the matrix $R$ being segmented as, 
\[
R^{(s)}\triangleq\begin{array}{cc}
 & \begin{array}{cc}
s\quad\; & n-s\end{array}\\
\begin{array}{c}
s\\
m-s
\end{array} & \left[\begin{array}{cc}
R_{11}^{(s)}\qquad & R_{12}^{(s)}\\
0\qquad & R_{22}^{(s)}
\end{array}\right]
\end{array}.
\]
\end{description}
\item [{Repeat}] 

\begin{description}
\item [{$s=s+1;$}]~
\item [{$j=\arg\max_{i}\left\Vert R_{22}^{(s)}(:,\,i)\right\Vert _{2}$}]~
\item [{if}] $j\neq1$,

\begin{description}
\item [{\textmd{Interchange}}] columns $s+1$ and $s+j$ of $\Pi$ and
retriangularize $R^{(s)}\Pi=QR^{(s+1)}$\footnote{Let $A\Pi=QR$. If $R\bar{\Pi}=\bar{Q}\bar{R}$, Then, $A\Pi\bar{\Pi}=Q\bar{Q}\bar{R}$.
Thus, finding permutations to columns of $R$ is equivalent to finding
permutations to columns of $A$.}.
\end{description}
\end{description}
\item [{Until}] $s<\min\left(n,\,m\right)$.
\end{description}
Note that $R_{11}^{(s)}$ and $R_{22}^{(s)}$ changes in dimension
with every increase in $s$

The above QR-CP algorithm is an approximation of the greedy algorithm
which selects columns such that $\sigma_{min}\left(R_{11}^{(s)}\right)$
is maximized. For further details regarding this approximation, refer
\cite[pp 596-598]{chandrasekaran1994rank}.

Now, to look at the properties of the QR-CP algorithm, let 
\[
A\Pi=\tilde{Q}\tilde{R}
\]
where $\Pi$ is obtained using QR-CP algorithm. The following two
lemmas summarize the properties of QR-CP algorithm:
\begin{lem}
The diagonal elements in $\tilde{R}$ satisfy 
\begin{equation}
\gamma_{i}\geq\frac{\sigma_{i}\left(A\right)}{\sqrt{n-i+1}},\:i=1,\ldots,\min\left(n,\,m\right).\label{eq:prop1_main}
\end{equation}
where $\gamma_{i}$ is the $i^{th}$ diagonal element of $\tilde{R}$. 
\end{lem}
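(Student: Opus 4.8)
The plan is to prove the inequality in three steps, combining the greedy column‑pivoting property of QR‑CP with a singular‑value argument. Throughout, write the final factor as $\tilde R=(r_{lj})$ and let $B\triangleq\tilde R(i:m,\,i:n)$ be its trailing $(m-i+1)\times(n-i+1)$ block (the triangularized form of $R_{22}^{(i-1)}$). Since $A\Pi=\tilde Q\tilde R$ with $\tilde Q$ orthogonal and $\Pi$ a permutation, both factors preserve singular values, so $\sigma_i(\tilde R)=\sigma_i(A)$; I will work with $\tilde R$ and only reintroduce $A$ at the end. If $\mathrm{rank}(A)<i$ then $\sigma_i(A)=0$ and the bound is trivial, so I assume $\mathrm{rank}(A)\ge i$, which forces the leading $(i-1)\times(i-1)$ block of $\tilde R$ to be invertible.

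First I would establish the pivoting identity $\gamma_i=\max_{j\ge i}\|B(:,j)\|_2$. At stage $i$ the QR‑CP rule selects, among the columns of the current residual $R_{22}^{(i-1)}$, the one of largest $2$‑norm and then triangularizes, so the new diagonal entry equals that maximal norm. Because the subsequent orthogonal reflections act on rows $i,\dots,m$ and preserve column norms, the trailing content of column $j$ ends up in the entries $r_{ij},\dots,r_{jj}$, giving $\|R_{22}^{(i-1)}(:,j)\|_2^2=\sum_{l=i}^{j}r_{lj}^2=\|B(:,j)\|_2^2$ and $\gamma_i=r_{ii}=\max_{j\ge i}\|B(:,j)\|_2$. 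In particular every column of $B$ has norm at most $\gamma_i$, so $\|B\|_F^2=\sum_{j=i}^{n}\|B(:,j)\|_2^2\le(n-i+1)\gamma_i^2$, and since the spectral norm is dominated by the Frobenius norm, $\sigma_{max}(B)=\|B\|_2\le\|B\|_F\le\sqrt{n-i+1}\,\gamma_i$.

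The remaining, and crux, step is the reverse bound $\sigma_{max}(B)\ge\sigma_i(\tilde R)$. The naive interlacing points the wrong way here: deleting the first $i-1$ rows from the trailing columns only decreases the top singular value, so the Interlacing Theorem alone yields merely $\sigma_{max}(B)\ge\sigma_{n-m+i}(A)$, which is too weak. Instead I would use the Courant--Fischer characterization $\sigma_i(\tilde R)=\min_{\dim W=n-i+1}\ \max_{w\in W,\ \|w\|=1}\|\tilde R w\|$ and exhibit one good subspace. Partition $\tilde R=\left[\begin{smallmatrix}S & T\\0 & B\end{smallmatrix}\right]$ with $S$ the invertible leading $(i-1)\times(i-1)$ block, and write $w=(w^{(1)},\tilde w)$ with $\tilde w\in\mathbb{R}^{n-i+1}$. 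Choosing the graph subspace $W=\{\,w:w^{(1)}=-S^{-1}T\tilde w\,\}$, which has dimension $n-i+1$ and is designed to annihilate the top block, gives $\tilde R w=(0,\,B\tilde w)^{\top}$, hence $\|\tilde R w\|=\|B\tilde w\|$. Since $\|\tilde w\|\le\|w\|=1$ on this subspace, $\max_{w\in W,\ \|w\|=1}\|\tilde R w\|\le\sigma_{max}(B)$, and therefore $\sigma_i(\tilde R)\le\sigma_{max}(B)$. Chaining the three steps yields $\sqrt{n-i+1}\,\gamma_i\ge\sigma_{max}(B)\ge\sigma_i(\tilde R)=\sigma_i(A)$, which is the claim.

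I expect the third step to be the genuine obstacle. The pivoting identity and the Frobenius estimate are routine bookkeeping, but obtaining a \emph{lower} bound on the largest singular value of the trailing block cannot be done by plain submatrix interlacing; the key realization is that invertibility of the leading block lets one cancel the coupling block $T$ inside the min--max, so that the trailing block $B$ captures the full $i$‑th singular value. Once that observation is in place, the rest is mechanical.
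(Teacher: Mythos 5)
Your proof is correct, and its first half (the pivoting identity $\gamma_i=\max_j\Vert R_{22}^{(i-1)}(:,j)\Vert_2$, the Frobenius bound over $n-i+1$ columns, and $\Vert\cdot\Vert_2\le\Vert\cdot\Vert_F$) is exactly the paper's argument. Where you genuinely diverge is the crux inequality $\Vert\tilde R_{22}^{(i-1)}\Vert_2\ge\sigma_i(\tilde R)$: the paper simply asserts the two-sided statement $\sigma_s(\tilde R)\ge\Vert\tilde R_{22}^{(s)}\Vert_2\ge\sigma_{s+1}(\tilde R)$ as ``the interlacing property'' with a citation, whereas you prove the lower half via Courant--Fischer with a graph subspace that uses invertibility of the leading block to kill the coupling block $T$. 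Your remark that plain submatrix interlacing points the wrong way is accurate --- the interlacing theorem proved in the paper's Appendix 1 bounds the singular values of a submatrix from above by those of the full matrix and from below only by much deeper singular values, so it does not by itself deliver $\Vert B\Vert_2\ge\sigma_i(\tilde R)$; that bound really does depend on the block upper-triangular structure. So your version fills a gap the paper papers over. (A slightly slicker alternative for the same step, avoiding the invertibility reduction entirely: split $\tilde R$ into the rank-$\le(i-1)$ matrix carrying its first $i-1$ rows plus the matrix carrying $B$, and apply the Weyl-type inequality $\sigma_{i}(X+Y)\le\sigma_{i}(X)+\sigma_1(Y)$ to get $\sigma_i(\tilde R)\le 0+\Vert B\Vert_2$ directly.) Everything else in your writeup, including the reduction to the case $\mathrm{rank}(A)\ge i$ and the observation that zero diagonal pivots would force $\mathrm{rank}(A)<i$, checks out.
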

\begin{proof}
From (\ref{eq:The QR Decomposition}) and the above algorithm, it
can be seen that $\gamma_{i}$ corresponds to the highest $2-norm$
among the columns of $R_{22}^{(i-1)}$. Since $R_{22}^{(i-1)}$ has
$n-i+1$ columns, 
\[
\gamma_{i}\geq\frac{\left\Vert R_{22}^{(i-1)}\right\Vert _{F}}{\sqrt{n-i+1}}.
\]
Since $\left\Vert R_{22}^{(i-1)}\right\Vert _{F}\geq\left\Vert R_{22}^{(i-1)}\right\Vert _{2}$,
\begin{equation}
\gamma_{i}\geq\frac{\left\Vert R_{22}^{(i-1)}\right\Vert _{2}}{\sqrt{n-i+1}}.\label{eq:prop1_proof1}
\end{equation}
Interlacing property\footnote{Refer Appendix 1 on page \pageref{subsec:Interlacing-Property} for
a more elegant proof on Interlacing property.} of Singular Values \cite{golub2012matrix,li2012generalized} states
that: 
\begin{equation}
\sigma_{s}\left(\tilde{R}\right)\geq\left\Vert \tilde{R}_{22}^{(s)}\right\Vert _{2}\geq\sigma_{s+1}(\tilde{R}).\label{eq:prop1_proof2}
\end{equation}
Thus, combining (\ref{eq:prop1_proof1}) and (\ref{eq:prop1_proof2})
leads to (\ref{eq:prop1_main}).
\end{proof}
\begin{lem}
When using QR-CP and segmenting $\tilde{R}$ as in (\ref{eq:QR_decompostion_2}),
\begin{equation}
\sigma_{min}\left(R_{11}\right)\geq\frac{\sigma_{p}\left(A\right)}{\sqrt{pn}2^{p}}.\label{eq:prop2}
\end{equation}
\end{lem}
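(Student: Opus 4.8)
The plan is to bound $\sigma_{min}(R_{11})$ from below by controlling the inverse $R_{11}^{-1}$, via the identity $\sigma_{min}(R_{11}) = 1/\|R_{11}^{-1}\|_2$. Since $R_{11}$ is $p\times p$ upper triangular, I would pass to $\|R_{11}^{-1}\|_2 \le \|R_{11}^{-1}\|_F$ and estimate the entries of the inverse one at a time, so that the exponential factor $2^p$ appearing in \eqref{eq:prop2} emerges from a back-substitution recursion. The final link to $\sigma_p(A)$ will then come from the previous lemma, \eqref{eq:prop1_main}, applied with $i = p$.

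First I would extract the structural consequence of column pivoting: at the $i$-th step QR-CP selects the trailing column of largest $2$-norm, so the resulting $\tilde{R}$ satisfies $\gamma_i^2 \ge \sum_{l=i}^{j} \tilde{R}(l,j)^2$ for every $j \ge i$. Two facts follow at once, and these are exactly the ingredients that tame the inverse: the diagonal magnitudes are non-increasing, $\gamma_1 \ge \gamma_2 \ge \cdots \ge \gamma_p$, and each off-diagonal entry is dominated by the diagonal of its own row, $|R_{11}(i,j)| \le \gamma_i$ for $j \ge i$.

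Next, writing $W = R_{11}^{-1}$ (upper triangular, with $W_{ii} = 1/\gamma_i$), I would use the back-substitution recursion $W_{ij} = -\gamma_i^{-1}\sum_{k=i+1}^{j} R_{11}(i,k)\,W_{kj}$ and prove by induction on $j-i$ the entrywise bound $|W_{ij}| \le 2^{\,j-i-1}/\gamma_j$. This induction is where the factor $2^p$ is born: bounding $|R_{11}(i,k)| \le \gamma_i$ cancels the prefactor $\gamma_i^{-1}$, leaving $|W_{ij}| \le \sum_{k=i+1}^{j}|W_{kj}|$, and the geometric sum $1 + \sum_{k} 2^{\,j-k-1} = 2^{\,j-i-1}$ closes the step. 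Because the diagonal is non-increasing, every denominator can be replaced by the smallest diagonal $\gamma_p$, after which $\|W\|_F^2 \le \gamma_p^{-2}\bigl(p + \sum_{i<j} 4^{\,j-i-1}\bigr)$; collapsing the double sum yields $\|W\|_F \le \sqrt{p}\,2^{\,p-1}/\gamma_p$, hence $\sigma_{min}(R_{11}) \ge \gamma_p/(\sqrt{p}\,2^{\,p-1})$. Substituting $\gamma_p \ge \sigma_p(A)/\sqrt{n-p+1}$ from \eqref{eq:prop1_main} and weakening $n-p+1 \le n$ and $2^{\,p-1} \le 2^p$ then delivers \eqref{eq:prop2} (in fact slightly more).

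The main obstacle I anticipate is the inductive entrywise bound on $W$: one must invoke the pivoting inequality in the correct form (the diagonal dominates the full trailing column, not merely a single entry) and account the geometric series carefully so the exponent stays $2^{\,j-i-1}$ rather than blowing up, since a careless step easily loses a factor and destroys the clean $2^p$. The remaining pieces, namely the comparison $\|\cdot\|_2 \le \|\cdot\|_F$, the collapse of $\sum_{i<j} 4^{\,j-i-1}$, and the substitution from \eqref{eq:prop1_main}, are routine once that estimate is established.
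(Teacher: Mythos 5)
Your proof is correct, and its skeleton matches the paper's: both arguments reduce the lemma to (i) the pivoting property that each diagonal $\gamma_i$ dominates the trailing part of every column to its right, (ii) a bound of the form $\|{\cdot}\|^{-1}$ on the inverse of the resulting diagonally-dominant triangular factor producing the $\sqrt{p}\,2^{p}$ term, and (iii) the previous lemma's bound $\gamma_p\ge\sigma_p(A)/\sqrt{n-p+1}$. The difference is in how step (ii) is handled. The paper factors $R_{11}=DW$ with $D=\mathrm{diag}(R_{11})$, observes that $W$ is unit upper triangular with off-diagonal entries of modulus at most $1$, and then simply cites $\|W^{-1}\|_2\le\sqrt{p}\,2^{p}$ from \cite{chandrasekaran1994rank}. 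You instead work directly with $R_{11}^{-1}$ and prove the needed growth bound yourself, via the back-substitution induction $|W_{ij}|\le 2^{\,j-i-1}/\gamma_j$; your induction is sound (the base case and the geometric-sum step both check out), and the collapse to $\|R_{11}^{-1}\|_F\le\sqrt{p}\,2^{p-1}/\gamma_p$ indeed yields a constant slightly better than \eqref{eq:prop2}. So your route buys self-containedness — it supplies the proof of exactly the inequality the paper treats as a black box — at the cost of carrying the $1/\gamma_j$ bookkeeping through the induction rather than isolating it in the diagonal factor $D$ at the outset.
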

\begin{proof}
Note $R_{11}$ could be written as 
\[
R_{11}=DW.
\]
where $D=\mbox{diag}\left(R_{11}\right)$ is a diagonal matrix with
diagonal elements of $R_{11}$. By the property of QR-CP algorithm,
the diagonal elements of $R_{11}$ are the highest elements. Thus,
$W$ is an upper triangular matrix such that the diagonal elements
are equal to $1$ and the rest of the elements are less than $1$.
Thus, 
\[
\sigma_{min}\left(R_{11}\right)\geq\sigma_{min}\left(D\right)\sigma_{min}\left(W\right)
\]
Since, for a diagonal matrix, the singular values equal the diagonal
elements, from (\ref{eq:prop1_main}), 
\[
\sigma_{min}\left(D\right)\geq\frac{\sigma_{p}\left(A\right)}{\sqrt{n-p+1}}.
\]
Thus, since $\sigma_{min}\left(W\right)=\frac{1}{\left\Vert W^{-1}\right\Vert _{2}},$
\[
\sigma_{min}\left(R_{11}\right)\geq\frac{\sigma_{p}\left(A\right)}{\sqrt{n-p+1}\left\Vert W^{-1}\right\Vert _{2}}.
\]
Now, $\left\Vert W^{-1}\right\Vert _{2}\leq\sqrt{p}2^{p}$ refer \cite[pp 606]{chandrasekaran1994rank},
thus, we get (\ref{eq:prop2}).
\end{proof}
The above proof is paterned after the proof provided in \cite{chandrasekaran1994rank}
for the same.

An implementation scheme for the QR-CP algorithm when applied to $\tilde{M}$,
(\ref{eq:M_perturbed-1}), is summarized in \emph{Algorithm }\ref{alg:al1}. 

\begin{algorithm}
\caption{\label{alg:al1}QR Decomposition with Column Pivoting}
~\\
~
\begin{enumerate}
\item Let the matrix to be decomposed be $\tilde{M}\in\mathbb{R}^{K,mK}$.
\item Let the permutation matrix $\Pi$ be initialized as an Identity matrix
of order $mK$.
\item Take a value of $P$ such that, $p<P<\min(K,N)$
\item Take Gram-Schmidt QR (GS-QR) decomposition of $\tilde{M}$ i.e., $\tilde{M}=Q^{(0)}R^{(0)}$,
where $R^{(s)}\triangleq\begin{array}{cc}
 & \begin{array}{cc}
s\quad\; & mK-s\end{array}\\
\begin{array}{c}
s\\
K-s
\end{array} & \left[\begin{array}{cc}
R_{11}^{(s)}\qquad & R_{12}^{(s)}\\
0\qquad & R_{22}^{(s)}
\end{array}\right]
\end{array}$ and set $s=0.$ 
\item \textbf{Repeat} \\
~

\begin{enumerate}
\item Let $e_{i}$ be the $i^{th}$ column of an identity matrix of order
$mK-s$.
\item $j\triangleq\arg\max_{1\leq i\leq mK-s}\left\Vert R_{22}^{(s)}e_{i}\right\Vert _{2}$
\item Interchange the columns $j$ and $s+1$ in $\Pi$.
\item Compute, GS-QR of $\tilde{M}\Pi$ as $\tilde{M}\Pi\triangleq Q^{(s)}R^{(s)}$
\item $s=s+1$
\end{enumerate}
\item \textbf{Until} $s>P$
\item Return $\Pi$
\end{enumerate}
\end{algorithm}

\subsection*{Stewart's Type-II Algorithm}

The Stewart's Type-II algorithm proposed first in \cite{stewart1984rank}
is discussed here. In, \cite{chandrasekaran1994rank}, this Stewart's
algorithm is framed as a type-II equivalent of the QR-CP algorithm.
In, \cite{chandrasekaran1994rank}, both Type-I and Type-II algorithms
were unified in a single framework. This unification enables derivation
of the properties of the Type-II algorithm from its Type-I equivalents.
Note Stewart-II algorithm is applicable only for Invertible matrices.

\subsubsection*{Unification}

Assume $A\in\mathbb{R}^{n\times n}$ is invertible and of numerical
rank $p$. For details regarding Numerical rank refer Chapter \ref{chap:Model-order}.
For now, just take $p$ as the block size used for segmenting $R$
in (\ref{eq:The QR Decomposition}).

Note, the Type-II formulation is equivalent to, 
\begin{eqnarray}
\min_{\Pi}\sigma_{max}\left(R_{22}\right) & = & \min_{\Pi}\frac{1}{\sigma_{min}\left(R_{22}^{-1}\right)}\nonumber \\
 & = & \frac{1}{\max_{\Pi}\left(\sigma_{min}\left(R_{22}^{-1}\right)\right)}.\label{eq:Unification1}
\end{eqnarray}
Let, 
\[
R\Pi\triangleq\bar{Q}\bar{R},\;\bar{R}\triangleq\left[\begin{array}{cc}
R_{11}\qquad & R_{12}\\
0\qquad & R_{22}
\end{array}\right]
\]
Assuming $R$ to be invertible, inverting both sides of the above
equation leads to, 
\[
\Pi^{\top}R^{-1}=\left[\begin{array}{cc}
R_{11}^{-1}\qquad & -R_{11}^{-1}R_{12}R_{22}^{-1}\\
0\qquad & R_{22}^{-1}
\end{array}\right]\bar{Q}^{\top}.
\]
Now, taking transpose on both sides, 
\begin{equation}
R^{-\top}\Pi=\bar{Q}\left[\begin{array}{cc}
R_{11}^{-\top}\qquad & 0\\
-R_{22}^{-\top}R_{12}^{\top}R_{11}^{-\top}\qquad & R_{22}^{-\top}
\end{array}\right].\label{eq:Unification2}
\end{equation}
Thus from (\ref{eq:Unification2}), (\ref{eq:Unification1}) and the
fact that $\sigma_{min}\left(R_{22}^{-1}\right)=\sigma_{min}\left(R_{22}^{-\top}\right)$;
It can be seen that applying Type-II algorithm to $R$ is equivalent
to Type-I algorithm applied to $R^{-\top}$. To make it more clearer,
applying Type-I algorithm to $R^{-\top}$ gives, 
\[
R^{-\top}\Pi=\tilde{Q}P,
\]
where 
\begin{equation}
P=\begin{array}{cc}
 & \begin{array}{cc}
n-p\quad\; & p\end{array}\\
\begin{array}{c}
n-p\\
p
\end{array} & \left[\begin{array}{cc}
P_{11}\qquad & P_{12}\\
0\qquad & P_{22}
\end{array}\right]
\end{array}.\label{eq:P-equivalent}
\end{equation}
Note in Type-I algorithm $\sigma_{min}\left(P_{11}\right)$ is maximized.
With some block permutations the above $P$ could be written as, 
\[
\begin{array}{cc}
 & \begin{array}{cc}
p\quad\; & n-p\end{array}\\
\begin{array}{c}
p\\
n-p
\end{array} & \left[\begin{array}{cc}
P_{22}\qquad & 0\\
P_{12}\qquad & P_{11}
\end{array}\right]
\end{array}
\]
Now, permuting the rows and columns of the individual blocks, 
\[
\tilde{P}=\begin{array}{cc}
 & \begin{array}{cc}
p\qquad\qquad & n-p\end{array}\\
\begin{array}{c}
p\\
n-p
\end{array} & \left[\begin{array}{cc}
J_{p}P_{22}J_{p}\qquad & 0\\
J_{n-p}P_{12}J_{p}\qquad & J_{n-p}P_{11}J_{n-p}
\end{array}\right]
\end{array}
\]
where $J_{l}$ are permutation matrices of order $l$ with ones along
the anti-diagonal. This ensures $\tilde{P}$ is lower triangular.
Now, the above $\tilde{P}$ is equivalent to the structure in (\ref{eq:Unification2}),
thus $J_{n-p}P_{11}J_{n-p}$ can be equated to $R_{22}^{-\top}$ and
from (\ref{eq:Unification1}), maximizing $\sigma_{min}\left(R_{22}^{-\top}\right)$
is equivalent to minimizing $\sigma_{max}\left(R_{22}\right)$ which
is the original Type -II problem. Thus, from the above it is evident
that applying a Type-I algorithm to the rows of the inverse of a matrix
$A$ is equivalent to applying Type-II algorithm directly to $A$.

\subsubsection*{Algorithm}

Thus, the Stewart's algorithm could be written as applying QR-CP algorithm
to rows of the Inverse, \emph{i.e., }
\begin{description}
\item [{Initialize}]~
\item [{$s=0$\textmd{;}}] $\Pi=I\in\mathbb{R}^{n\times n}$; $A=QR$;
$R^{(0)}=R$, where $R^{(s)}=\begin{array}{cc}
 & \begin{array}{cc}
n-s\quad\; & s\end{array}\\
\begin{array}{c}
n-s\\
s
\end{array} & \left[\begin{array}{cc}
R_{11}^{(s)}\qquad & R_{12}^{(s)}\\
0\qquad & R_{22}^{(s)}
\end{array}\right]
\end{array}.$ 
\item [{Repeat}] 

\begin{description}
\item [{$j=\arg\max_{i}\left\Vert R_{11}^{(s)-1}(i,\,:)\right\Vert _{2}$}]~
\item [{if}] $j\neq n-s$,

\begin{description}
\item [{\textmd{Interchange}}] columns $n-s$ and $j$ of $\Pi$ and retriangularize
$R^{(s)}\Pi=QR^{(s+1)}$.
\end{description}
\item [{endif}]~
\item [{$s=s+1;$}]~
\end{description}
\item [{Until}] $s<n-p$.
\end{description}
Note that $R_{11}^{(s)}$ and $R_{22}^{(s)}$ changes in dimension
with every increase in $s$

The following lemma states the bounds satisfied by $R_{22}$ got using
Stewart-II decomposition.
\begin{lem}
Using Stewart's Type-II algorithm ensures the following:
\begin{equation}
\sigma_{max}\left(R_{22}\right)\leq\sigma_{p+1}\left(A\right)\sqrt{n\left(n-p\right)}2^{n-p}\label{eq:prop3}
\end{equation}
\end{lem}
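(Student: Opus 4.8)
The plan is to exploit the unification established immediately above: running Stewart's Type-II algorithm on $A$ is equivalent to running the Type-I (QR-CP) algorithm on $R^{-\top}$, so the Type-I guarantee \eqref{eq:prop2} can be transported into a bound on $\sigma_{max}\left(R_{22}\right)$. First I would invoke the identity recorded in \eqref{eq:Unification1}, namely $\sigma_{max}\left(R_{22}\right)=1/\sigma_{min}\left(R_{22}^{-1}\right)$, which turns the desired \emph{upper} bound on $\sigma_{max}\left(R_{22}\right)$ into an equivalent \emph{lower} bound on $\sigma_{min}\left(R_{22}^{-1}\right)$. This is the quantity that the Type-I analysis is set up to control.

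Next I would apply the QR-CP bound \eqref{eq:prop2} to the matrix $R^{-\top}\in\mathbb{R}^{n\times n}$, but with leading block size $n-p$ rather than $p$, since the block $P_{11}$ appearing in \eqref{eq:P-equivalent} has dimension $(n-p)\times(n-p)$. With that substitution the Type-I guarantee reads
\[
\sigma_{min}\left(P_{11}\right)\geq\frac{\sigma_{n-p}\left(R^{-\top}\right)}{\sqrt{(n-p)n}\,2^{n-p}}.
\]
From the block-permutation step of the unification I then identify $J_{n-p}P_{11}J_{n-p}=R_{22}^{-\top}$; because the $J_{l}$ are permutation matrices, hence orthogonal, singular values are preserved, so $\sigma_{min}\left(P_{11}\right)=\sigma_{min}\left(R_{22}^{-\top}\right)=\sigma_{min}\left(R_{22}^{-1}\right)$.

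The last ingredient is to re-express $\sigma_{n-p}\left(R^{-\top}\right)$ in terms of $A$. Since $A=QR$ with $Q$ orthonormal, $R$ and $A$ share singular values, so $\sigma_{i}\left(R\right)=\sigma_{i}\left(A\right)$; and the singular values of $R^{-\top}$ are the reciprocals of those of $R$ taken in reverse order, whence $\sigma_{n-p}\left(R^{-\top}\right)=1/\sigma_{p+1}\left(R\right)=1/\sigma_{p+1}\left(A\right)$. Substituting gives $\sigma_{min}\left(R_{22}^{-1}\right)\geq 1/\bigl(\sigma_{p+1}\left(A\right)\sqrt{n(n-p)}\,2^{n-p}\bigr)$, and inverting through \eqref{eq:Unification1} yields \eqref{eq:prop3}.

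The main obstacle I anticipate is the index bookkeeping. One must be scrupulous that the block size fed into \eqref{eq:prop2} is $n-p$ (the dimension of the trailing block being minimized) and not $p$, and equally careful about the order reversal when passing to the inverse, so that $\sigma_{n-p}\left(R^{-\top}\right)$ is matched to $\sigma_{p+1}\left(A\right)$ rather than to a neighbouring index. A secondary verification is that the $J$-conjugation in the unification is a genuine orthogonal equivalence, ensuring every intermediate singular-value identification is exact rather than only up to a constant factor.
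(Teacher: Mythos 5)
Your proposal is correct and follows essentially the same route as the paper: invert the Type-II problem via \eqref{eq:Unification1}, apply the QR-CP bound \eqref{eq:prop2} to $R^{-\top}$ with leading block size $n-p$, identify $P_{11}$ with $R_{22}^{-\top}$, and translate $\sigma_{n-p}\left(R^{-\top}\right)$ into $1/\sigma_{p+1}\left(A\right)$. If anything, your bookkeeping is more careful than the paper's, whose intermediate inequality writes $\sigma_{n-p}\left(A\right)$ in the numerator where the reciprocal $1/\sigma_{p+1}\left(A\right)$ (i.e.\ $\sigma_{n-p}\left(R^{-\top}\right)$) is what is actually meant.
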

\begin{proof}
Using Type-I, QR-CP algorithm (\ref{eq:P-equivalent}) ensures (ref
(\ref{eq:prop2})), 
\begin{equation}
\sigma_{min}\left(P_{11}\right)\geq\frac{\sigma_{n-p}\left(A\right)}{\sqrt{n-pn}2^{n-p}}.\label{eq:prop2-1}
\end{equation}
Using Unification principle, $P_{11}=R_{22}^{-\top}$ also, using
(\ref{eq:Unification1}) leads to, 
\begin{equation}
\sigma_{max}\left(R_{22}\right)=\frac{1}{\left(\sigma_{min}\left(P_{11}\right)\right)}.\label{eq:prop2-2}
\end{equation}
Thus from (\ref{eq:prop2-1}) and (\ref{eq:prop2-2}), (\ref{eq:prop3})
is got.
\end{proof}
Given a matrix $A\in\mathbb{R}^{n\times n}$. Applying Stewart-II
algorithm assuming a numerical rank of $n-1$ would ensure that, the
column with the lowest $2-norm$ in $A$ gets permuted with the last
column. Note, the weakest column in $A$ is the column that has the
lowest 2-norm or the row with the highest 2-norm in $A^{-1}$. 

\subsection*{Hybrid Algorithms}

The Hybrid Algorithms are got by combining QR-CP and the Stewart-II
algorithm. The first Hybrid algorithm we'll be dealing with is the
Hybrid-I algorithm. 

Consider the matrix $A\in R^{m\times n}$ with numerical rank $p$.
The QR decomposition of $A$ be given by $A=QR$. Now, segment $R$
in two different ways, 
\begin{eqnarray}
R & \triangleq & \begin{array}{cc}
 & \begin{array}{cc}
p\quad\; & n-p\end{array}\\
\begin{array}{c}
p\\
m-p
\end{array} & \left[\begin{array}{cc}
R_{11}\qquad & R_{12}\\
0\qquad & R_{22}
\end{array}\right]
\end{array},\label{eq:method-1}\\
 & \triangleq & \begin{array}{cc}
 & \begin{array}{cc}
p-1\quad\; & n-p+1\end{array}\\
\begin{array}{c}
p-1\\
m-p+1
\end{array} & \left[\begin{array}{cc}
\bar{R}_{11}\qquad & \bar{R}_{12}\\
0\qquad & \bar{R}_{22}
\end{array}\right]
\end{array}.\label{eq:method-2}
\end{eqnarray}
The Hybrid-I algorithm works by iterating between the QR-CP algorithm
applied to $\bar{R}_{22}$ and Stewart-II algorithm applied to $R_{11}$,
until no permutations occur. The Stewart-II applied to $R_{11}$ ensures
that the $p^{th}$ column in $R$ is the weakest column (the column
with the lowest 2-norm) in $R_{11}$. Also QR-CP applied to $\bar{R}_{22}$,
ensures that the $p^{th}$ column of $R$ is the best column (column
with the highest 2-norm) in $\bar{R}_{22}$ (\ref{eq:method-2}).

Hybrid-I satisfies both the QR-CP and Stewart-II's conditions. Note,
QR-CP to $\bar{R}_{22}$ to get the $p^{th}$ column of $R$ ensures
that, 
\begin{equation}
r_{pp}\geq\frac{\left\Vert \bar{R}_{22}\right\Vert _{2}}{\sqrt{n-p+1}},\label{eq:rpp_ineq3}
\end{equation}
where $r_{pp}$ is the $p^{th}$ diagonal element of $R$ (refer to
(\ref{eq:prop1_proof1})). Since $R_{22}$ (\ref{eq:method-1}) is
a sub matrix to $\bar{R}_{22}$,
\begin{equation}
r_{pp}\geq\frac{\left\Vert R_{22}\right\Vert _{2}}{\sqrt{n-p+1}}.\label{eq:rpp_ineq}
\end{equation}
Also, Stewart-II applied to $R_{11}$ (\ref{eq:method-1}) ensures
that 
\begin{equation}
r_{pp}\leq\sigma_{min}\left(R_{11}\right)\sqrt{p}.\label{eq:rpp_ineq2}
\end{equation}
combining (\ref{eq:rpp_ineq}) and (\ref{eq:rpp_ineq2}) leads to,
\[
\left\Vert R_{22}\right\Vert _{2}\leq\sigma_{min}\left(R_{11}\right)\sqrt{p\left(n-p+1\right)}.
\]
Using Interlacing property, $\sigma_{max}\left(\bar{R}_{22}\right)\geq\sigma_{p}\left(A\right)$
applying this fact in (\ref{eq:rpp_ineq3}) and using (\ref{eq:rpp_ineq2})
leads to, 
\[
\sigma_{min}\left(R_{11}\right)\geq\frac{\sigma_{p}\left(A\right)}{\sqrt{p\left(n-p+1\right)}}.
\]
Thus the required bounds for Hybrid-I are derived. The detailed step-by-step
procedure for implementation of Hybrid-I algorithm for $\tilde{M}$,
\ref{eq:M_perturbed-1}, is summarized in Algorithm (\ref{alg:Hybrid-I}).

\begin{algorithm}
\caption{\noun{\label{alg:Hybrid-I}Hybrid-I} algorithm of RRQR}
~\\
~
\begin{enumerate}
\item The Matrix to be decomposed is got as $\tilde{M}\in\mathbb{R}^{K\times Km}$.
Let the numerical rank be assumed to be $\hat{p}$.
\item Using the $\Pi$ from\emph{ Algorithm (\ref{alg:al1})}, apply QR
decomposition to $\tilde{M}\Pi$ as $\tilde{M}\Pi=\bar{Q}\bar{R}$
and set $s=0$; 
\item \textbf{Repeat~}

\begin{enumerate}
\item \textbf{QR-CP Block}

\begin{enumerate}
\item Set $permuted=0$. 
\item Let $\bar{R}^{(s)}\triangleq\begin{array}{cc}
 & \begin{array}{cc}
\hat{p}-1\quad\; & mK-\hat{p}+1\end{array}\\
\begin{array}{c}
\hat{p}-1\\
K-\hat{p}+1
\end{array} & \left[\begin{array}{cc}
\bar{R}_{11}^{(s)}\qquad & \bar{R}_{12}^{(s)}\\
0\qquad & \bar{R}_{22}^{(s)}
\end{array}\right]
\end{array}$
\item Let $e_{i}$ be the $i^{th}$ column of an identity matrix of order
$mK-\hat{p}+1$.
\item Let\textbf{ }$j\triangleq\arg\max_{1\leq i\leq mK-\hat{p}+1}\left\Vert \bar{R}_{22}^{(s)}e_{i}\right\Vert _{2}$
\item \textbf{If $j\neq1$ then}\\
\textbf{~}\\
\textbf{~}

\begin{itemize}
\item $permuted$= 1
\item Interchange columns $\hat{p}+j-1$ and $\hat{p}$ in $\Pi$.
\item apply QR-GS to $\tilde{M}\Pi$ to get new $\bar{Q}\bar{R}$.
\end{itemize}
\end{enumerate}
\item \textbf{Stewart-II Block}

\begin{enumerate}
\item Now divide $\bar{R}$ as $\bar{R}^{(s)}\triangleq\begin{array}{cc}
 & \begin{array}{cc}
\hat{p}\quad\; & mK-\hat{p}\end{array}\\
\begin{array}{c}
\hat{p}\\
K-\hat{p}
\end{array} & \left[\begin{array}{cc}
\bar{R}_{11}^{(s)}\qquad & \bar{R}_{12}^{(s)}\\
0\qquad & \bar{R}_{22}^{(s)}
\end{array}\right]
\end{array}$and Let $e_{i}$ be the $i^{th}$ column of an identity matrix of
order $\hat{p}$.
\item Let $j\triangleq\arg\max_{1\leq i\leq\hat{p}}\left\Vert e_{i}^{\top}\bar{R}_{11}^{(s)-1}\right\Vert $
\item \textbf{If $j\neq\hat{p}$ then}\\
\textbf{~}\\
\textbf{~}

\begin{itemize}
\item $permuted=1$
\item Exchange columns $j$ and $\hat{p}$ in $\Pi$
\item apply QR-GS to $\tilde{M}\Pi$ to get new $\bar{Q}^{(s)}\bar{R}^{(s)}$.
\item $s=s+1$
\end{itemize}
\end{enumerate}
\end{enumerate}
\item \textbf{Until $permuted\neq0$}
\item Now RRQR decomposition is given by taking GS-QR of $\tilde{M}\Pi$
as, $\tilde{M}\Pi=\tilde{Q}\tilde{R}$
\end{enumerate}
\end{algorithm}

While Hybrid-I algorithm uses QR-CP and Stewart-II algorithm to ensure
the $p^{th}$ column is the best in $\bar{R}_{22}$ and the worst
in $R_{11}$ respectively, segmenting $R$ as 
\begin{eqnarray}
R & \triangleq & \begin{array}{cc}
 & \begin{array}{cc}
p+1\quad\; & n-p-1\end{array}\\
\begin{array}{c}
p+1\\
m-p-1
\end{array} & \left[\begin{array}{cc}
R_{11}\qquad & R_{12}\\
0\qquad & R_{22}
\end{array}\right]
\end{array},\label{eq:method-1-1}\\
 & \triangleq & \begin{array}{cc}
 & \begin{array}{cc}
p\quad\; & n-p\end{array}\\
\begin{array}{c}
p\\
m-p
\end{array} & \left[\begin{array}{cc}
\bar{R}_{11}\qquad & \bar{R}_{12}\\
0\qquad & \bar{R}_{22}
\end{array}\right]
\end{array}.\label{eq:method-2-1}
\end{eqnarray}
and using QR-CP and Stewart-II to ensure that the $p+1^{th}$ column
of $R$ is the best possible in $\bar{R}_{22}$ and worst of $R_{11}$
respectively results in the Hybrid-II decomposition.

The Hybrid-II equivalent of (\ref{eq:rpp_ineq3}) is, 
\[
r_{p+1,p+1}\geq\frac{\sigma_{max}\left(\bar{R}_{22}\right)}{\sqrt{\left(n-p\right)}},
\]
where $r_{p+1,p+1}$ is the $p+1^{th}$ diagonal value of $R$. Also,
the equivalent of (\ref{eq:rpp_ineq2}) is given by 
\begin{equation}
r_{p+1,p+1}\leq\sigma_{min}\left(R_{11}\right)\sqrt{p+1}.\label{eq:rpp_ineq2-1}
\end{equation}
combining the above two inequalities results in 
\[
\sigma_{min}\left(R_{11}\right)\geq\frac{\sigma_{max}\left(\bar{R}_{22}\right)}{\sqrt{\left(p+1\right)\left(n-p\right)}}.
\]
Due to Interlacing property, $\left(\sigma_{min}\left(\bar{R}_{11}\right)\geq\sigma_{min}\left(R_{11}\right)\right)$,
\[
\sigma_{min}\left(\bar{R}_{11}\right)\geq\frac{\sigma_{max}\left(\bar{R}_{22}\right)}{\sqrt{\left(p+1\right)\left(n-p\right)}}.
\]
Also, using Interlacing Property and the fact that $\sigma_{i}\left(A\right)=\sigma_{i}\left(R\right)$,
\[
\sigma_{min}\left(R_{11}\right)\leq\sigma_{k+1}\left(A\right)
\]
Thus, 
\[
\sigma_{max}\left(\bar{R}_{22}\right)\leq\sigma_{k+1}\left(A\right)\sqrt{\left(p+1\right)\left(n-p\right)}.
\]
which completes the bounds for Hybrid-II algorithm. Thus, it can be
seen that applying Hybrid-II algorithm with the assumption of numerical
rank $p$ is equivalent to applying Hybrid-I algorithm assuming a
numerical rank of $p+1$.

The Hybrid-III algorithm is as follows: \\
Apply Hybrid-I algorithm on the matrix $A$ followed by the application
of Hybrid-II. Repeat this procedure until no permutations occur.

Thus, when Hybrid-III halts, it satisfies the bounds for both Hybrid-I
and Hybrid-II.
\selectlanguage{british}%

\chapter{\label{chap:Model-order}Model order: A Numerical Rank Perspective}

In this chapter, the determination of model order (\emph{i.e.,} the
number of factors $p$) is framed as the problem of estimating the
numerical rank of the matrix $\tilde{M}$ (\ref{eq:M_perturbed-1}).
Recall that the matrix $\tilde{M}$ could be considered as the perturbed
version of the matrix $M$ given by: 
\[
\tilde{M}=M+\Delta M.
\]
While $M$ defined by (\ref{eq:M_def}) is of rank $p$, $\tilde{M}$
is of full row rank $K$ with probability one. When $M$ is known,
determining the model order is just to equate it to the rank of the
matrix $M$. Since in practice only $\tilde{M}$ is available, we
resort to the estimation of the numerical rank of $\tilde{M}$. Consider
the following definition taken from \foreignlanguage{english}{\cite{golub1976rank}:}
\selectlanguage{english}%
\begin{defn}
\cite{golub1976rank}: A matrix $A$ has numerical rank $(\mu,\,\varepsilon,\,p)$
with respect to norm $\left\Vert .\right\Vert $ if $\mu$, $\varepsilon$,
$p$ satisfy the following two conditions: 
\end{defn}
\begin{enumerate}
\item $p=\inf\;\left\{ rank(B):\left\Vert A-B\right\Vert \leq\varepsilon\right\} $ 
\item $\varepsilon<\mu\leq\sup\left\{ \eta:\left\Vert A-B\right\Vert \leq\eta\implies rank(B)\geq p\right\} $ 
\end{enumerate}
\selectlanguage{british}%
for ease of notation, $\left(\mu,\,\varepsilon,\,p\right)_{p}$ is
used to denote the numerical rank w.r.t norm $\left\Vert .\right\Vert _{p}$.
\foreignlanguage{english}{The following example is used to illustrate
the above definition:}
\selectlanguage{english}%
\begin{example}
Consider the matrix $A\triangleq\left[\begin{array}{ccc}
18 & 0 & 0\\
0 & 5 & 0\\
0 & 0 & 0.8
\end{array}\right]$. Let $\varepsilon,\,\mu$ be bounded by, 
\begin{equation}
0.8<\varepsilon<\mu<5\label{eq:e_mu_condn1}
\end{equation}
with the above bounds for, $\varepsilon,\,\mu$, the numerical rank
of $A$ is $p=2$ w.r.t $\left\Vert .\right\Vert _{2}$.

It can be explained as follows: Consider the matrix, 
\[
B=\left[\begin{array}{ccc}
18 & 0 & 0\\
0 & \epsilon & 0\\
0 & 0 & 0
\end{array}\right]
\]
Note, $\inf\{rank\left(B\right)\}=1\implies$$\epsilon=0$. Which
inturn implies $\left\Vert A-B\right\Vert _{2}=5$. This means, $\varepsilon\geq5$
if $B$ is to take rank of $1$ and since it is outside the bound,
(\ref{eq:e_mu_condn1}), $\inf\{rank\left(B\right)\}=2\,\forall\varepsilon<5.$
Now, consider the matrix 
\[
B=\left[\begin{array}{ccc}
18 & 0 & 0\\
0 & 5 & 0\\
0 & 0 & \epsilon
\end{array}\right].
\]
Note, $\inf\{rank\left(B\right)\}=3\implies\epsilon>0$ which inturn
implies$\left\Vert A-B\right\Vert _{2}$ should not be greater than
$0.8$. But the bound, (\ref{eq:e_mu_condn1}), ensures that $\varepsilon>0.8.$
Thus, numerical rank cannot be $3$ implying under the selected bound,
(\ref{eq:e_mu_condn1}), numerical rank of $A$ is $2$.
\end{example}
\selectlanguage{british}%
Note, the above example can be extended to any matrix, $A$, as stated
in the following theorem:
\selectlanguage{english}%
\begin{thm*}
\cite{golub1976rank} Let $\sigma_{1}>\sigma_{2}>\ldots>\sigma_{K}$
denote the singular values of the $K\times Km$ matrix $\tilde{M}$.
Then, $\tilde{M}$ is of numerical rank $\left(\mu,\,\varepsilon,\,p\right)$
w.r.t $\Vert.\Vert_{2}$ if 
\begin{equation}
\sigma_{p+1}\left(\tilde{M}\right)\leq\varepsilon<\mu\leq\sigma_{p}\left(\tilde{M}\right)\label{eq:SVD_num_rank}
\end{equation}
\end{thm*}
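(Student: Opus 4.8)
The plan is to obtain both defining clauses of the numerical rank $\left(\mu,\,\varepsilon,\,p\right)$ directly from the Eckart--Young--Mirsky characterisation of optimal low-rank approximation in the spectral norm, namely that for every $k$,
\[
\min_{\mbox{rank}(B)\le k}\left\Vert \tilde{M}-B\right\Vert _{2}=\sigma_{k+1}\left(\tilde{M}\right),
\]
the minimiser being the truncated SVD of $\tilde{M}$. The lower bound $\left\Vert \tilde{M}-B\right\Vert _{2}\ge\sigma_{k+1}\left(\tilde{M}\right)$ for an arbitrary $B$ of rank at most $k$ follows from a Courant--Fischer dimension-counting argument (the top $k+1$ right singular directions of $\tilde{M}$ meet the null space of $B$ nontrivially), which is essentially the interlacing property already quoted in this chapter; the matching upper bound is realised by discarding the $K-k$ smallest singular values. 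I would take this single identity as the only analytic input and then check the two clauses of the Definition in turn.

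For the first clause, $p=\inf\left\{ \mbox{rank}(B):\left\Vert \tilde{M}-B\right\Vert _{2}\le\varepsilon\right\}$, the truncated rank-$p$ SVD produces a $B$ with $\left\Vert \tilde{M}-B\right\Vert _{2}=\sigma_{p+1}\left(\tilde{M}\right)\le\varepsilon$ by hypothesis, so the infimum is at most $p$. Conversely, every $B$ with $\mbox{rank}(B)\le p-1$ obeys $\left\Vert \tilde{M}-B\right\Vert _{2}\ge\sigma_{p}\left(\tilde{M}\right)\ge\mu>\varepsilon$, so no such $B$ lies within $\varepsilon$ of $\tilde{M}$; hence the infimum is at least $p$, and the two bounds pin it to exactly $p$.

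For the second clause I would compute the relevant supremum explicitly. Writing $S=\sup\left\{ \eta:\left\Vert \tilde{M}-B\right\Vert _{2}\le\eta\Rightarrow\mbox{rank}(B)\ge p\right\}$, the implication inside the set holds precisely when no matrix of rank $\le p-1$ sits within $\eta$ of $\tilde{M}$, which by the same identity is precisely when $\eta<\sigma_{p}\left(\tilde{M}\right)$. Thus $S=\sigma_{p}\left(\tilde{M}\right)$, and the hypotheses $\varepsilon<\mu$ and $\mu\le\sigma_{p}\left(\tilde{M}\right)$ deliver exactly $\varepsilon<\mu\le S$, which is the second clause.

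The verification is routine once Eckart--Young is available, so the only point needing care---and the step I expect to be the main obstacle---is the boundary behaviour of the supremum: one must confirm that at $\eta=\sigma_{p}\left(\tilde{M}\right)$ the truncated rank-$(p-1)$ SVD already attains distance exactly $\sigma_{p}\left(\tilde{M}\right)$, so that the defining implication fails there and $S$ equals $\sigma_{p}\left(\tilde{M}\right)$ rather than anything strictly larger. The strict separation $\sigma_{p}\left(\tilde{M}\right)>\sigma_{p+1}\left(\tilde{M}\right)$ assumed in the statement guarantees that the admissible window $\sigma_{p+1}\left(\tilde{M}\right)\le\varepsilon<\mu\le\sigma_{p}\left(\tilde{M}\right)$ is nonempty, so no degenerate case arises.
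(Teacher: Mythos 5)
Your proof is correct and complete. The paper itself offers no argument for this theorem --- it simply defers to the Golub--Klema--Stewart reference --- so there is nothing internal to compare against; your reconstruction via the Eckart--Young--Mirsky identity $\min_{\mathrm{rank}(B)\le k}\Vert\tilde{M}-B\Vert_{2}=\sigma_{k+1}(\tilde{M})$ is exactly the standard route, and it is the one the cited source takes. Both clauses of the definition are verified cleanly: the truncated rank-$p$ SVD witnesses that the infimum in clause~1 is at most $p$, the lower bound $\sigma_{p}\ge\mu>\varepsilon$ rules out rank $\le p-1$, and your explicit computation of the supremum in clause~2 as $\sigma_{p}(\tilde{M})$ is right. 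You also correctly isolate the only delicate point, namely that the minimum distance to the rank-$(p-1)$ matrices is \emph{attained} (by the truncated SVD), so the defining implication fails at $\eta=\sigma_{p}(\tilde{M})$ and the supremum equals $\sigma_{p}(\tilde{M})$ rather than exceeding it; without attainment the second clause could not be pinned down. No gaps.
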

\begin{proof}
Refer \cite{golub1976rank} for a proof 
\end{proof}
Now, Consider a matrix $A$ perturbed by another matrix $\Delta A$.
The following theorem shows how this disturbance of $\Delta A$ will
affect the numerical rank of the matrix $\tilde{A}=A+\Delta A$.
\selectlanguage{british}%
\begin{thm*}
Let $\varepsilon,\,\mu$ be bounded by:
\[
\sigma_{p+1}(A)+\sigma_{max}(\Delta A)\leq\varepsilon<\mu\leq\sigma_{p}\left(A\right)-\sigma_{max}\left(\Delta A\right).
\]
Under this bound, both $A$ and $A+\Delta A$ is of same numerical
rank $\left(\mu,\,\varepsilon,\,p\right)_{2}$.
\end{thm*}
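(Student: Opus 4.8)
The plan is to reduce this statement to the singular-value characterization of numerical rank given in the preceding theorem (the one stating $\tilde{M}$ has numerical rank $(\mu,\varepsilon,p)$ w.r.t. $\|.\|_2$ whenever $\sigma_{p+1}(\tilde{M})\leq\varepsilon<\mu\leq\sigma_{p}(\tilde{M})$), and then control how the singular values of $A$ move under the additive perturbation $\Delta A$. The single analytic tool I would invoke is Weyl's perturbation inequality for singular values, which states that for any $A$ and $\Delta A$ of the same dimensions, $|\sigma_{i}(A+\Delta A)-\sigma_{i}(A)|\leq\sigma_{max}(\Delta A)$ for every index $i$. This is exactly the mechanism that converts the hypothesized bound on $\varepsilon,\mu$ into the singular-value inequality demanded by the characterization theorem.

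First I would apply Weyl's inequality to the $p$-th and $(p+1)$-th singular values of $A+\Delta A$, obtaining the two-sided estimates
\begin{equation}
\sigma_{p}(A)-\sigma_{max}(\Delta A)\leq\sigma_{p}(A+\Delta A),\qquad\sigma_{p+1}(A+\Delta A)\leq\sigma_{p+1}(A)+\sigma_{max}(\Delta A).\label{eq:weyl_steps}
\end{equation}
Then I would chain these with the hypothesis. The hypothesis gives $\varepsilon\geq\sigma_{p+1}(A)+\sigma_{max}(\Delta A)$, so combining with the right inequality in \eqref{eq:weyl_steps} yields $\sigma_{p+1}(A+\Delta A)\leq\varepsilon$. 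Symmetrically, the hypothesis gives $\mu\leq\sigma_{p}(A)-\sigma_{max}(\Delta A)$, so the left inequality in \eqref{eq:weyl_steps} yields $\mu\leq\sigma_{p}(A+\Delta A)$. Together with the chain $\varepsilon<\mu$ from the hypothesis, this produces exactly
\begin{equation}
\sigma_{p+1}(A+\Delta A)\leq\varepsilon<\mu\leq\sigma_{p}(A+\Delta A),\label{eq:target_perturbed}
\end{equation}
which is the condition \eqref{eq:SVD_num_rank} applied to $\tilde{A}=A+\Delta A$, establishing that $A+\Delta A$ has numerical rank $(\mu,\varepsilon,p)_2$.

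For $A$ itself, I would observe that the hypothesis chain immediately gives $\sigma_{p+1}(A)\leq\sigma_{p+1}(A)+\sigma_{max}(\Delta A)\leq\varepsilon$ and $\mu\leq\sigma_{p}(A)-\sigma_{max}(\Delta A)\leq\sigma_{p}(A)$ (using $\sigma_{max}(\Delta A)\geq0$), so the same characterization theorem certifies that $A$ has numerical rank $(\mu,\varepsilon,p)_2$ as well. Hence both matrices share the same numerical rank under the stated bound.

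I do not expect a genuine obstacle here; the proof is a direct interleaving of Weyl's inequality with the characterization theorem already stated in the excerpt. The only point requiring slight care is ensuring the strict inequality $\varepsilon<\mu$ is preserved: it comes for free because it is assumed in the hypothesis and the perturbation bounds only push $\varepsilon$ down toward $\sigma_{p+1}(A+\Delta A)$ from above and $\mu$ up toward $\sigma_{p}(A+\Delta A)$ from below, never colliding the two. I would also note implicitly that the bound can only be nonempty when $\sigma_{p}(A)-\sigma_{p+1}(A)>2\sigma_{max}(\Delta A)$, i.e. the spectral gap of $A$ must exceed twice the perturbation size, which is the natural regularity condition behind the whole numerical-rank framework.
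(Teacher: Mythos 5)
Your proposal is correct and follows essentially the same route as the paper: the paper likewise applies the two-sided singular-value perturbation bound $\sigma_{p}(A)-\sigma_{max}(\Delta A)\leq\sigma_{p}(A+\Delta A)$ and $\sigma_{p+1}(A+\Delta A)\leq\sigma_{p+1}(A)+\sigma_{max}(\Delta A)$ (Weyl's inequality, which you name explicitly and the paper cites via a footnote elsewhere), chains it with the hypothesized bounds on $\varepsilon,\mu$, and invokes the singular-value characterization of numerical rank for both $A+\Delta A$ and $A$. The only difference is cosmetic: you state the nonemptiness condition $\sigma_{p}(A)-\sigma_{p+1}(A)>2\sigma_{max}(\Delta A)$ explicitly, which the paper leaves implicit.
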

\begin{proof}
Assume 
\begin{equation}
\sigma_{p+1}(A)+\sigma_{max}(\Delta A)\leq\varepsilon<\mu\leq\sigma_{p}\left(A\right)-\sigma_{max}\left(\Delta A\right)\label{eq:bound-1}
\end{equation}
holds true. Since, 
\[
\sigma\left(A+\Delta A\right)<\sigma_{p+1}(A)+\sigma_{max}(\Delta A)
\]
and 
\[
\sigma_{p}\left(A+\Delta A\right)>\sigma_{p}\left(A\right)-\sigma_{max}\left(\Delta A\right)
\]
holds true, 
\begin{equation}
\sigma_{p+1}\left(A+\Delta A\right)\leq\varepsilon<\mu\leq\sigma_{p}\left(A+\Delta A\right)\label{eq:bound-2}
\end{equation}
is satisfied which implies that $A+\Delta A$ is of numerical rank
$\left(\mu,\,\varepsilon,\,p\right)_{2}$. Also, since $\sigma_{p+1}(A)+\sigma_{max}(\Delta A)\geq\sigma_{p+1}\left(A\right)$
and $\sigma_{p}\left(A\right)-\sigma_{max}\left(\Delta A\right)\leq\sigma_{p}\left(A\right)$,
(\ref{eq:bound-1}) implies 
\[
\sigma_{p+1}\left(A\right)\leq\varepsilon<\mu\leq\sigma_{p}\left(A\right)
\]
 which in turn implies $A$ is also of numerical rank $\left(\mu,\,\varepsilon,\,p\right)_{2}$
.
\end{proof}
The above theorem gives an idea of bounds on $\mu,\,\varepsilon$
based on the amount of perturbation $\Delta A$ such that the numerical
rank remains constant.

In the current scenario, we have the matrix $M$ of rank $p$ perturbed
by a matrix $\Delta M$ of full rank. Thus for $\tilde{M}=M+\Delta M$
to have numerical rank $\left(\mu,\,\varepsilon,\,p\right)_{2}$,
\[
\sigma_{max}\left(\Delta M\right)\leq\varepsilon<\mu\leq\sigma_{p}\left(M\right)-\sigma_{max}\left(\Delta M\right)
\]
In our scenario, $\sigma_{max}\left(\Delta M\right)\rightarrow0$
when $N\rightarrow\infty$ because of assumption $(A5)$ hence $\sigma_{max}\left(\Delta M\right)\ll\sigma_{p}\left(M\right)$
as $N\rightarrow\infty$. Hence the separation of $\sigma_{p+1}\left(\tilde{M}\right)$
and $\sigma_{p}\left(\tilde{M}\right)$ grows with increase in $N$
which enabled the ratio based estimate \cite{C1}, given by 
\begin{equation}
\hat{p}=\arg\max_{1\leq i\leq R}\frac{\sigma_{i}\left(\tilde{M}\right)}{\sigma_{i+1}\left(\tilde{M}\right)},\label{eq:SVD_model_order}
\end{equation}
 to work.

\selectlanguage{english}%
The above determination of numerical rank was with the help of using
Singular Value Decomposition (SVD), for asymptotic properties of (\ref{eq:SVD_model_order}),
refer to \cite{C1}. Now, to analyze how to determine the numerical
rank using RRQR decomposition, consider the following theorem proposed
by Golub\cite{golub1976rank}:
\begin{thm*}
Let 
\begin{eqnarray}
\tilde{M}\Pi & = & \left[\begin{array}{cc}
\mathcal{\tilde{Q}}_{\hat{p}} & \tilde{\mathcal{Q}}_{K-\hat{p}}\end{array}\right]\left[\begin{array}{cc}
\tilde{R}_{11}^{(\hat{p})} & \tilde{R}_{12}\\
0 & \tilde{R}_{22}^{(K-\hat{p})}
\end{array}\right]\label{eq:QR_columnpivoting}
\end{eqnarray}
be the QR decomposition of $\tilde{M}\Pi$. If there exists,$\hat{\mu},\hat{\epsilon}>0$
such that 
\begin{equation}
\sigma_{min}\left(\tilde{R}_{11}^{(\hat{p})}\right)=\hat{\mu}>\hat{\epsilon}=\left\Vert \tilde{R}_{22}^{(K-\hat{p})}\right\Vert _{2},\label{eq:Thm2_1}
\end{equation}
then, $\tilde{M}$ has numerical rank $\left(\hat{\mu},\,\hat{\epsilon},\,\hat{p}\right)$.
\end{thm*}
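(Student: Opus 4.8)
The plan is to reduce the claim to the singular-value characterization of numerical rank already established, namely that $\tilde M$ has numerical rank $(\mu,\varepsilon,p)$ whenever $\sigma_{p+1}(\tilde M)\le\varepsilon<\mu\le\sigma_p(\tilde M)$, as in \eqref{eq:SVD_num_rank}. With this in hand it suffices to produce the chain $\sigma_{\hat p+1}(\tilde M)\le\hat\epsilon<\hat\mu\le\sigma_{\hat p}(\tilde M)$; the strict middle inequality is exactly the hypothesis \eqref{eq:Thm2_1}, so only the two outer inequalities require work.

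First I would observe that the singular values are unaffected by the decomposition. Since $\Pi$ is a permutation matrix and $\tilde{\mathcal Q}$ has orthonormal columns, multiplication by either leaves the singular spectrum unchanged, so $\sigma_i(\tilde M)=\sigma_i(\tilde M\Pi)=\sigma_i(\tilde R)$ for every $i$. Hence the problem is entirely about the triangular factor $\tilde R$ and its blocks.

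The lower bound follows from the Interlacing Property (the Theorem at the start of this chapter, proved in Appendix 1): since $\tilde R_{11}^{(\hat p)}$ is the leading $\hat p\times\hat p$ submatrix of $\tilde R$, interlacing gives $\sigma_{\hat p}(\tilde R)\ge\sigma_{\hat p}\bigl(\tilde R_{11}^{(\hat p)}\bigr)=\sigma_{min}\bigl(\tilde R_{11}^{(\hat p)}\bigr)=\hat\mu$. The upper bound follows from the companion interlacing relation already used for the QR-CP algorithm, \eqref{eq:prop1_proof2}, which with block size $\hat p$ reads $\sigma_{\hat p}(\tilde R)\ge\bigl\Vert\tilde R_{22}^{(\hat p)}\bigr\Vert_2\ge\sigma_{\hat p+1}(\tilde R)$; the right-hand inequality yields $\sigma_{\hat p+1}(\tilde M)\le\bigl\Vert\tilde R_{22}^{(K-\hat p)}\bigr\Vert_2=\hat\epsilon$.

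Combining the three pieces gives $\sigma_{\hat p+1}(\tilde M)\le\hat\epsilon<\hat\mu\le\sigma_{\hat p}(\tilde M)$, and invoking \eqref{eq:SVD_num_rank} identifies the numerical rank of $\tilde M$ as $(\hat\mu,\hat\epsilon,\hat p)$, as claimed. I do not expect a genuine obstacle here: the whole argument is the correct bookkeeping of block indices in the two interlacing inequalities together with orthogonal invariance of the singular values. The only point that needs care is checking that the submatrix version of interlacing applies when $\tilde R_{11}^{(\hat p)}$ is square, since the chapter's statement is phrased for $k<l$; this is resolved by using the one-sided bound $\sigma_j(\tilde R)\ge\sigma_j\bigl(\tilde R_{11}^{(\hat p)}\bigr)$, which holds for any submatrix obtained by deleting rows and columns regardless of its shape.
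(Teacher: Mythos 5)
Your argument is correct, but it is worth noting that the paper does not actually prove this statement at all: its ``proof'' is simply a citation to Golub's paper, as is the proof of the singular-value characterization $\sigma_{p+1}(\tilde M)\leq\varepsilon<\mu\leq\sigma_{p}(\tilde M)$ that you reduce to. So your proposal supplies a derivation where the thesis supplies none. The reduction itself is sound: orthogonal invariance gives $\sigma_{i}(\tilde M)=\sigma_{i}(\tilde R)$; the one-sided interlacing bound $\sigma_{\hat p}(\tilde R)\geq\sigma_{\min}\bigl(\tilde R_{11}^{(\hat p)}\bigr)=\hat\mu$ handles the upper end; and the standard block bound $\bigl\Vert\tilde R_{22}\bigr\Vert_{2}\geq\sigma_{\hat p+1}(\tilde R)$ (which follows, for instance, from the Eckart--Young characterization of $\sigma_{\hat p+1}$ applied to the rank-$\hat p$ truncation obtained by zeroing the trailing block) handles the lower end. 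Your caveat about the square submatrix case is the right one to flag, and your resolution via the general deletion inequality $\sigma_{j}(A)\geq\sigma_{j}(B)$ is correct. The only residual dependence is on the SVD-based numerical-rank theorem itself, which remains unproved in the thesis; if you wanted a fully self-contained argument you would also need to verify that theorem directly from the definition of numerical rank, which is a short exercise with the Eckart--Young theorem.
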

\begin{proof}
Refer to \cite{golub1976rank} for a proof.
\end{proof}
As the sample correlations are assumed to converge to their respective
expected values with increasing sample size $N$ (for a fixed $K$),
(\ref{eq:Thm2_1}) must hold with a high probability for large values
of $N$ at $\hat{p}=p$. This is proved in the following theorem.
\begin{thm}
\label{thm:A}Assume that 
\begin{eqnarray}
M & \triangleq & \left[\begin{array}{cccc}
\Sigma_{yy}(1) & \Sigma_{yy}(2) & \ldots & \Sigma_{yy}(m)\end{array}\right]\label{eq:M_def-1}
\end{eqnarray}
 is of rank $'p'$ where $\Sigma_{yy}(l)$ denotes the autocovariance
matrix of the observations $y_{n}$. Let $\Pi$ be the permutation
matrix obtained using \noun{Hybrid-III} and let $\tilde{M}\Pi$ be
decomposed as in 
\begin{eqnarray}
\tilde{M}\Pi & = & \left[\begin{array}{cc}
\tilde{M}_{p} & \tilde{M}_{mK-p}\end{array}\right]\nonumber \\
 & = & \left[\begin{array}{cc}
\mathcal{\tilde{Q}}_{p} & \tilde{\mathcal{Q}}_{K-p}\end{array}\right]\left[\begin{array}{cc}
\tilde{R}_{11} & \tilde{R}_{12}\\
0 & \tilde{R}_{22}
\end{array}\right]\nonumber \\
 & = & \mbox{\ensuremath{\mathcal{\tilde{Q}}}}\tilde{R},\label{eq:rrqr1-1-1}
\end{eqnarray}
It can be shown that under the assumptions (A1-A7), refer Chapter
\ref{chap:Algorithm} regarding assumptions, with $K,\,N\rightarrow\infty$
under the constraint\footnote{Note, (\ref{eq:thm2_1}) implies that $\sqrt{N}$ grows faster than
$K^{1+\delta}$.} 
\begin{equation}
\frac{K^{1+\delta}}{\sqrt{N}}=o(1),\label{eq:thm2_1}
\end{equation}
it holds that, 
\begin{equation}
o_{p}\left(\sigma_{min}\left(\tilde{R}_{11}^{(p)}\right)\right)=\left\Vert \tilde{R}_{22}^{(p)}\right\Vert _{2}.\label{eq:thm2_2}
\end{equation}
 
\end{thm}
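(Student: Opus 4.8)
The plan is to reduce the stochastic statement \eqref{eq:thm2_2} to a deterministic ratio bound supplied by the Hybrid-III guarantees of Chapter~\ref{chap:RRQR-Decomposition}, and then to control that ratio through the orders in $K$ and $N$ of $\sigma_{p}(M)$ and $\sigma_{max}(\Delta M)$. Writing $\tilde M = M + \Delta M$ with $\operatorname{rank}(M)=p$, so that $\sigma_{p+1}(M)=0$, Weyl's perturbation inequality for singular values gives
\[
\sigma_{p+1}(\tilde M)\le \sigma_{max}(\Delta M),\qquad \sigma_{p}(\tilde M)\ge \sigma_{p}(M)-\sigma_{max}(\Delta M).
\]
Applying the Hybrid-III bounds to $A=\tilde M$ (with $n=mK$ columns and assumed rank $p$) yields $\|\tilde R_{22}^{(p)}\|_2=\sigma_{max}(\tilde R_{22})\le \sigma_{p+1}(\tilde M)\sqrt{(p+1)(mK-p)}$ and $\sigma_{min}(\tilde R_{11}^{(p)})\ge \sigma_{p}(\tilde M)/\sqrt{p(mK-p+1)}$. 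Combining these four inequalities gives
\[
\frac{\|\tilde R_{22}^{(p)}\|_2}{\sigma_{min}(\tilde R_{11}^{(p)})}\le \frac{\sigma_{max}(\Delta M)}{\sigma_{p}(M)-\sigma_{max}(\Delta M)}\,\sqrt{p(p+1)(mK-p)(mK-p+1)},
\]
so it suffices to show the right-hand side is $o_P(1)$; the square-root factor is $O(K)$ for fixed $m,p$.

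Next I would pin down the two orders. For the signal, since $H=QR$ with $Q$ column-orthonormal one has $\sigma_i(H)=\sigma_i(R)$ and $\mathcal R(H)=\mathcal R(Q)$, whence $\|\Sigma_{ff}(l)Q^\top\|_2=\|\Sigma_{ff}(l)\|_2$ and $M=QP$ gives $\sigma_p(M)=\sigma_p(P)$. Assumptions $(A6)$--$(A7)$ force $\sigma_{min}(H)^2\asymp K^{1-\delta}$, hence $\Sigma_{ff}(l)=R\Sigma_{xx}(l)R^\top\asymp K^{1-\delta}$, which dominates the factor--noise block $\Sigma_{f\varepsilon}(l)$ kept bounded by $(A3)$; together with the full-rank condition $(A1)$ this yields $\sigma_p(M)\asymp K^{1-\delta}$. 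For the perturbation, the mixing condition $(A5)$ makes each entry of $\Delta\Sigma_{yy}(l)$ of order $O_P(1/\sqrt N)$, so that $\|\Delta\Sigma_{yy}(l)\|_2\le\|\Delta\Sigma_{yy}(l)\|_F=O_P(K/\sqrt N)$ and, stacking the $m$ blocks, $\sigma_{max}(\Delta M)=O_P(K/\sqrt N)$. In particular $\sigma_{max}(\Delta M)/\sigma_p(M)=O_P(K^{\delta}/\sqrt N)=o_P(1)$, so the denominator $\sigma_p(M)-\sigma_{max}(\Delta M)$ is positive and of order $K^{1-\delta}$ with probability tending to one.

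Substituting these orders into the ratio bound gives
\[
\frac{\|\tilde R_{22}^{(p)}\|_2}{\sigma_{min}(\tilde R_{11}^{(p)})}=O_P\!\left(\frac{(K/\sqrt N)\cdot K}{K^{1-\delta}}\right)=O_P\!\left(\frac{K^{1+\delta}}{\sqrt N}\right)=o_P(1),
\]
where the last equality is exactly the standing constraint \eqref{eq:thm2_1}. This is precisely the claim \eqref{eq:thm2_2}. I expect the main obstacle to be the two order estimates rather than the algebra: establishing $\sigma_p(M)\asymp K^{1-\delta}$ requires carefully propagating the factor-strength scaling through $M=QP$ and using $(A1)$ and $(A7)$ to rule out cancellation that could shrink the smallest singular value, while the bound $\sigma_{max}(\Delta M)=O_P(K/\sqrt N)$ needs a moment/Frobenius argument that is uniform over the $O(K^2)$ entries and valid under the $\psi$- (or $\alpha$-) mixing hypothesis of $(A5)$, for which the summable-mixing and finite fourth- (resp.\ twelfth-) moment conditions are the essential inputs.
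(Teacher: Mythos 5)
Your proposal is correct and follows essentially the same route as the paper's proof: the Hybrid-III bounds on $\sigma_{min}\left(\tilde{R}_{11}^{(p)}\right)$ and $\left\Vert \tilde{R}_{22}^{(p)}\right\Vert _{2}$, Weyl's inequality combined with $\sigma_{p+1}(M)=0$, and the order estimates $\sigma_{p}(M)\asymp K^{1-\delta}$ and $\left\Vert \Delta M\right\Vert _{2}=O_{P}\left(KN^{-\frac{1}{2}}\right)$ established in Lemmas \ref{lem:A1}--\ref{lem:lemma 6}. The only substantive difference is that you treat only the regime $\kappa_{max}=o\left(K^{1-\delta}\right)$ by asserting the factor--noise block is dominated, whereas the paper also handles the regime $K^{1-\delta}=o\left(\kappa_{min}\right)$ in which $\sigma_{p}(M)\asymp\kappa_{min}$; since that omitted case only enlarges the denominator, you have covered the binding case and the conclusion is unaffected.
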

\begin{proof}
The following three lemmas are required to prove the above theorem: 
\end{proof}
\begin{lem}
Let $\gamma_{i}$ and $\gamma_{i+1}$ be the $i^{th}$ and $\left(i+1\right)^{th}$
diagonal value of $\tilde{R}$, (\ref{eq:rrqr1-1-1}), got by applying
\noun{Hybrid-III}\emph{ }algorithm with assumption of numerical rank
to be $i$ .The Hybrid - III algorithm ensures that $\gamma_{i}$
satisfies the following property: 
\begin{equation}
\sigma_{i}\left(\tilde{M}\right)\sqrt{\left(i+1\right)\left(mK-i\right)}\geq\gamma_{i}\left(\tilde{R}\right)\geq\frac{\sigma_{i}\left(\tilde{M}\right)}{\sqrt{mK-i}},\label{eq:L1-1}
\end{equation}
where $\gamma_{i}$ is the $i^{th}$ diagonal element of $\tilde{R}$
got by applying \noun{Hybrid-III} algorithm assuming a numerical rank
of $i$ and $\sigma_{i}\left(\tilde{M}\right)$ is the $i^{th}$ highest
singular value of $\tilde{M}$.
\end{lem}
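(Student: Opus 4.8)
The plan is to exploit the fact that the pivot $\gamma_{i}=r_{ii}$ of $\tilde{R}$ sits exactly at the interface between the two segmentations of $\tilde{R}$ that the Hybrid algorithm manipulates: it is simultaneously the last diagonal entry of the leading $i\times i$ block $\tilde{R}_{11}^{(i)}$ and the leading pivot (the column of largest $2$-norm) of the trailing block $\bar{R}_{22}$ produced when $\tilde{R}$ is split at column $i-1$. I would therefore sandwich $\gamma_{i}$ between $\sigma_{min}(\tilde{R}_{11}^{(i)})$ and $\sigma_{max}(\bar{R}_{22})$, and then transfer each of these block singular values to the global singular value $\sigma_{i}(\tilde{M})$ by the Interlacing Theorem, exactly in the manner of the Hybrid-I derivation that produced \eqref{eq:rpp_ineq}--\eqref{eq:rpp_ineq2}. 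The two Hybrid-III bounds from the RRQR chapter then supply the remaining $\sqrt{\,\cdot\,}$ factors.

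For the lower bound I would invoke the QR-CP phase of Hybrid-III run with assumed rank $i$. That phase promotes $\gamma_{i}$ to be the column of largest $2$-norm of $\bar{R}_{22}$, which is precisely the setting of \eqref{eq:rpp_ineq3} and gives $\gamma_{i}\geq\|\bar{R}_{22}\|_{2}/\sqrt{mK-i+1}$, since $\bar{R}_{22}$ has $mK-i+1$ columns. Applying the Interlacing property at the split index $s=i-1$, namely $\sigma_{max}(\bar{R}_{22})\geq\sigma_{i}(\tilde{M})$ as already used in the Hybrid-I derivation, then delivers $\gamma_{i}\gtrsim\sigma_{i}(\tilde{M})/\sqrt{mK-i}$, matching the right-hand inequality of \eqref{eq:L1-1}.

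For the upper bound I would use the Stewart-II phase, which on the split at column $i$ forces the $i$-th column to be the weakest of $\tilde{R}_{11}^{(i)}$; this is the content of \eqref{eq:rpp_ineq2}, giving $\gamma_{i}\leq\sigma_{min}(\tilde{R}_{11}^{(i)})\sqrt{i}$, and Interlacing supplies $\sigma_{min}(\tilde{R}_{11}^{(i)})\leq\sigma_{i}(\tilde{M})$. This already yields the clean estimate $\gamma_{i}\leq\sigma_{i}(\tilde{M})\sqrt{i}$, which is stronger than the claimed bound since $\sqrt{i}\leq\sqrt{(i+1)(mK-i)}$ whenever $i<mK$. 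The precise factor $\sqrt{(i+1)(mK-i)}$ appearing in \eqref{eq:L1-1} is instead the (looser) Hybrid-III trailing-block factor, obtained by routing through $\gamma_{i}\leq\sigma_{max}(\bar{R}_{22})$ and feeding $\bar{R}_{22}$ into the Hybrid $R_{22}$ estimate; either route closes the left-hand inequality.

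The principal obstacle is not conceptual but the bookkeeping of block dimensions and off-by-one indices. Because Hybrid-III enforces the Hybrid-I bounds at block size $i$ and the Hybrid-II bounds at block size $i$ (equivalently Hybrid-I at size $i+1$), one must track carefully which segmentation governs $\gamma_{i}$ and invoke Interlacing at the matching index $s$. In particular the QR-CP step naturally produces the column count $mK-i+1$ rather than the $mK-i$ written in \eqref{eq:L1-1}, so reconciling the stated denominator with the pivot accounting, and confirming that the looser Hybrid-III factors are the ones that hold uniformly for every assumed rank $i$, is where the care lies; the remaining manipulations are the routine application of the already-established interlacing and Hybrid bounds.
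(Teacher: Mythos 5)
Your proposal is correct and follows essentially the same route as the paper's proof: the pivot $\gamma_{i}$ is the largest column $2$-norm of the trailing block, hence is sandwiched between $\left\Vert \bar{R}_{22}\right\Vert _{2}/\sqrt{mK-i+1}$ and $\left\Vert \bar{R}_{22}\right\Vert _{2}$, and the interlacing theorem together with the Hybrid bounds converts both sides to $\sigma_{i}\left(\tilde{M}\right)$; the off-by-one factors you flag are present in the paper's own statement as well and are immaterial to the asymptotics for which the lemma is used. Your alternative upper bound via the Stewart-II inequality, $\gamma_{i}\leq\sigma_{min}\left(\tilde{R}_{11}^{(i)}\right)\sqrt{i}\leq\sigma_{i}\left(\tilde{M}\right)\sqrt{i}$, is valid and in fact sharper than the stated one, but the paper itself takes the trailing-block route you list as the second option.
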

\begin{proof}
Let $\gamma_{s+1}$ be the $s+1^{th}$ diagonal value of $\tilde{R}$
in, (\ref{eq:rrqr1-1-1}), where $\Pi$ is got using \emph{\noun{Hybrid}}-\noun{III}
algorithm assuming numerical rank $s$. Segment $\tilde{R}$ as, 
\[
\tilde{R}\triangleq\begin{array}{cc}
 & \begin{array}{cc}
s\quad\; & mK-s\end{array}\\
\begin{array}{c}
s\\
K-s
\end{array} & \left[\begin{array}{cc}
\tilde{R}_{11}^{(s)}\qquad & \tilde{R}_{12}^{(s)}\\
0\qquad & \tilde{R}_{22}^{(s)}
\end{array}\right]
\end{array}.
\]
Let $\alpha_{i}$ be the 2-norm of the $i^{th}$ column of $\tilde{R}_{22}^{(s)}$
, \emph{i.e.,} 
\[
\alpha_{i}=\left\Vert \tilde{R}_{22}^{(s)}(:,\,i)\right\Vert _{2}.
\]
The column pivoting part of\emph{ }\noun{Hybrid-III} algorithm ensures
$\gamma_{s+1}$ to be, 
\[
\gamma_{s+1}=\max_{1\leq i\leq mK-s}\alpha_{i}.
\]
Note that, 
\begin{equation}
\left\Vert \tilde{R}_{22}^{(s)}\right\Vert _{2}\geq\gamma_{s+1}\geq\frac{\left\Vert \tilde{R}_{22}^{(s)}\right\Vert _{F}}{\sqrt{mK-s}}.\label{eq:pl1}
\end{equation}
Since $\left\Vert \tilde{R}_{22}^{(s)}\right\Vert _{F}\geq\left\Vert \tilde{R}_{22}^{(s)}\right\Vert _{2}$,
\begin{equation}
\left\Vert \tilde{R}_{22}^{(s)}\right\Vert _{2}\geq\gamma_{s+1}\geq\frac{\left\Vert \tilde{R}_{22}^{(s)}\right\Vert _{2}}{\sqrt{mK-s}}.\label{eq:pl1-1}
\end{equation}
Interlacing property of Singular Values \cite{golub2012matrix,li2012generalized}
leads to 
\[
\sigma_{s}\left(\tilde{R}\right)\geq\left\Vert \tilde{R}_{22}^{(s)}\right\Vert _{2}\geq\sigma_{s+1}(\tilde{R}),
\]
and the \emph{Hybrid - III }algorithm \cite{chandrasekaran1994rank}
guarantees,
\[
\left\Vert R_{22}\right\Vert _{2}\leq\sigma_{s+1}\left(\tilde{R}\right)\sqrt{\left(s+1\right)\left(mK-s\right)}.
\]
Thus, 
\begin{eqnarray}
\sigma_{s+1}\left(\tilde{R}\right)\sqrt{\left(s+1\right)\left(mK-s\right)}\geq\left\Vert \tilde{R}_{22}^{(s)}\right\Vert _{2} & \geq & \sigma_{s+1}(\tilde{R}).\label{eq:pl2}
\end{eqnarray}
 Since, $\sigma_{i}\left(\tilde{M}\right)=\sigma_{i}\left(\tilde{R}\right)$,
substituting (\ref{eq:pl2}) in (\ref{eq:pl1-1}) we get (\ref{eq:L1-1}).
\end{proof}
\begin{lem}
\label{lem:A1}The following rates of convergence hold as $K,N\rightarrow\infty$
under the assumptions (A1-A3): 
\begin{equation}
\left\Vert \Sigma_{ff}(l)\right\Vert _{F}\leq\left\Vert H\right\Vert _{F}^{2}\left\Vert \Sigma_{xx}(l)\right\Vert _{F}=O\left(K^{1-\delta}\right)=\left\Vert \Sigma_{ff}(l)\right\Vert _{2}\asymp\sigma_{min}\left(\Sigma_{ff}(l)\right)\label{eq:l3_1}
\end{equation}
\end{lem}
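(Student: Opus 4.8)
The plan is to push everything through the factorisation $f_n=Rx_n$, which yields $\Sigma_{ff}(l)=R\,\Sigma_{xx}(l)\,R^\top$, together with the identity $\|H\|_F^2=\|R\|_F^2$ (immediate from $H=QR$ and $Q^\top Q=I_p$). First I would dispose of the left-most inequality by two applications of the submultiplicative bound $\|AB\|_F\le\|A\|_2\|B\|_F$:
\[
\|\Sigma_{ff}(l)\|_F=\|R\Sigma_{xx}(l)R^\top\|_F\le\|R\|_2^2\,\|\Sigma_{xx}(l)\|_F\le\|R\|_F^2\,\|\Sigma_{xx}(l)\|_F=\|H\|_F^2\,\|\Sigma_{xx}(l)\|_F,
\]
using $\|R\|_2\le\|R\|_F$ in the middle. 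This step is purely mechanical.

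For the order statement $\|H\|_F^2\|\Sigma_{xx}(l)\|_F=O(K^{1-\delta})$ I would invoke (A6): writing $\|H\|_F^2=\sum_{i=1}^{p}\|h_i\|_2^2$ with each $\|h_i\|_2^2=O(K^{1-\delta})$ --- a two-sided relation under the paper's definition of $O$ --- and using that $p$ is fixed gives $\|H\|_F^2\asymp K^{1-\delta}$. Since $p$ is constant and the factor process has bounded covariances, $\Sigma_{xx}(l)$ is a $p\times p$ matrix with $O(1)$ entries, so $\|\Sigma_{xx}(l)\|_F=O(1)$ and the product is $O(K^{1-\delta})$.

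The remaining relations, $\|\Sigma_{ff}(l)\|_2\asymp K^{1-\delta}$ and $\|\Sigma_{ff}(l)\|_2\asymp\sigma_{min}(\Sigma_{ff}(l))$, need control of the whole spectrum of $R$. The upper bound $\|\Sigma_{ff}(l)\|_2\le\|R\|_2^2\|\Sigma_{xx}(l)\|_2=O(K^{1-\delta})$ is immediate, so the real work is a matching lower bound on $\sigma_{min}(\Sigma_{ff}(l))$. Here I would show that \emph{every} singular value of $H$ (equivalently $R$) is $\asymp K^{(1-\delta)/2}$: from $\frac1p\|H\|_F^2\le\|H\|_2^2\le\|H\|_F^2$ the top one is $\asymp K^{1-\delta}$, and the smallest is governed by (A7). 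The main obstacle is precisely this last step --- converting the geometric quantity $d_i=\min_{\theta}\|h_i-\sum_{j\neq i}\theta_j h_j\|_2=\mbox{dist}\big(h_i,\mbox{span}\{h_j:j\neq i\}\big)$ into a bound on $\sigma_{min}(H)$. I would settle it by the elementary estimate that, for any unit vector $c$ with largest coordinate $|c_k|\ge1/\sqrt{p}$,
\[
\|Hc\|_2=|c_k|\,\Big\|h_k-\sum_{j\neq k}(-c_j/c_k)h_j\Big\|_2\ge|c_k|\,d_k\ge\frac{d_k}{\sqrt{p}},
\]
so $\sigma_{min}(H)\ge\min_i d_i/\sqrt{p}$, which with (A7) gives $\sigma_{min}(H)^2\asymp K^{1-\delta}$. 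With $R$ square and invertible (its full column rank coming from (A1)/(A7)), the product inequality $\sigma_{min}(R\Sigma_{xx}(l)R^\top)\ge\sigma_{min}(R)^2\sigma_{min}(\Sigma_{xx}(l))$ then supplies the lower bound of order $K^{1-\delta}$. Finally, since both $\|\Sigma_{ff}(l)\|_2$ and $\sigma_{min}(\Sigma_{ff}(l))$ lie between $\sigma_{min}(R)^2\sigma_{min}(\Sigma_{xx}(l))$ and $\sigma_{max}(R)^2\|\Sigma_{xx}(l)\|_2$, which are comparable because $R$ and $\Sigma_{xx}(l)$ both have $O(1)$ condition number, the two $\asymp$ relations close. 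The one hypothesis I would flag as implicitly required, beyond the cited (A1)--(A3), is the per-lag nondegeneracy $\sigma_{min}(\Sigma_{xx}(l))\asymp1$: (A1) only forces the stacked matrix to be full rank, whereas the lemma's claim about a single lag $l$ tacitly assumes each $\Sigma_{xx}(l)$ is itself well-conditioned, and of course (A6)--(A7) are used even though only (A1)--(A3) are named.
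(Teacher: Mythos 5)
Your proof is correct and, for the first half of the chain, follows essentially the same route as the paper: both arguments pass through $f_n=Rx_n$, $\Sigma_{ff}(l)=R\Sigma_{xx}(l)R^{\top}$, and the factor-strength assumptions (A6)--(A7) to get $\left\Vert R\right\Vert _{F}^{2}=\left\Vert H\right\Vert _{F}^{2}=O\left(K^{1-\delta}\right)$ and hence the Frobenius bound. (The paper reaches $\left\Vert R\right\Vert _{F}=O\left(K^{\frac{1-\delta}{2}}\right)$ by estimating the Gram--Schmidt diagonal entries $\left|<h_{i},q_{i}>\right|$ from above via (A6) and from below via (A7); your direct identity $\left\Vert H\right\Vert _{F}^{2}=\sum_{i}\left\Vert h_{i}\right\Vert _{2}^{2}$ is a slightly cleaner way to the same estimate.) The genuine divergence is in the tail of the chain: the paper does \emph{not} prove $\left\Vert \Sigma_{ff}(l)\right\Vert _{2}\asymp\sigma_{min}\left(\Sigma_{ff}(l)\right)\asymp K^{1-\delta}$ at all --- it cites \cite{lam2011estimation} for that --- whereas you supply a self-contained argument via the estimate $\sigma_{min}(H)\geq\min_{i}d_{i}/\sqrt{p}$ with $d_{i}$ the distance from $h_{i}$ to the span of the other columns, which converts (A7) into a lower bound on the whole spectrum of $R$. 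That estimate is valid and buys a proof that does not lean on the external reference. Your closing caveat is also well taken and worth keeping: the two-sided claim for a \emph{single} lag $l$ genuinely requires $\sigma_{min}\left(\Sigma_{xx}(l)\right)\asymp1$, which neither (A1) (full rank only of the stacked matrix) nor the named hypotheses (A1)--(A3) deliver --- indeed the paper's own Example \ref{example 7.1} has each individual $\Sigma_{xx}(l)$ singular --- so the lemma as stated is loose, and both your proof and the paper's implicitly use (A6)--(A7) plus a per-lag nondegeneracy condition.
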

\begin{proof}
Note 
\begin{eqnarray*}
y_{n} & = & Hx_{n}+\varepsilon_{n}\\
 & = & Qf_{n}+\varepsilon_{n},
\end{eqnarray*}
where, $H=QR$ is the QR decomposition of $H$ and $f_{n}=Rx_{n}$.
Therefore, 
\begin{eqnarray*}
\left[\begin{array}{cccc}
h_{1} & h_{2} & \ldots & h_{p}\end{array}\right] & = & \left[\begin{array}{cc}
Q_{1} & Q_{2}\end{array}\right]\left[\begin{array}{c}
R_{1}\\
0
\end{array}\right],
\end{eqnarray*}
where 
\[
Q_{1}=\left[\begin{array}{ccc}
q_{1} & \ldots & q_{p}\end{array}\right]
\]
 and 
\[
R_{1}=\left[\begin{array}{cccc}
<h_{1},q_{1}> & <h_{2},q_{1}> & \ldots & <h_{p},q_{1}>\\
0 & <h_{2,}q_{2}> & \ldots & <h_{p},q_{2}>\\
0 & 0 & \ddots & \vdots\\
0 & 0 & 0 & <h_{p},q_{p}>
\end{array}\right].
\]
Here $<a,b>$ represents the inner product of $a$ and $b$. Thus
the $i^{th}$ diagonal element of $R$ given by 
\begin{eqnarray}
\left|<h_{i},q_{i}>\right| & = & \left\Vert h_{i}-\sum_{k=1}^{k=i-1}<h_{i},q_{k}>q_{k}\right\Vert _{2}\label{eq:lem1,R}
\end{eqnarray}
has a maximum value $\left|<h_{i},q_{i}>\right|\,\mbox{can take is}\left\Vert h_{i}\right\Vert _{2}=O\left(K^{\frac{1-\delta}{2}}\right)$.
Since in equation (\ref{eq:lem1,R}) each $q_{k}$ could be expressed
as a linear combination of $h_{1},\ldots,h_{k}$, the minimum value
of (\ref{eq:lem1,R}), got under the assumption (\ref{eq:factor_strength_condn2}),
is $\left|<h_{i},q_{i}>\right|{}_{min}=O\left(K^{\frac{1-\delta}{2}}\right).$
Hence, the diagonal elements of $R$ are of the order , $\mathcal{O}\left(K^{\frac{1-\delta}{2}}\right)$.
And since the off-diagonal elements are of lesser or same order, it
can be inferred that 
\[
\left\Vert R\right\Vert _{F}=\sqrt{p}O\left(K^{\frac{1-\delta}{2}}\right)=O\left(K^{\frac{1-\delta}{2}}\right).
\]

Since $\Sigma_{xx}(l)$ is a constant matrix independent of $K$ ,
$(A1)$, 
\[
\left\Vert \Sigma_{xx}(l)\right\Vert _{F}=O(1).
\]
 As 
\[
\Sigma_{f}(k)=R\Sigma_{x}(k)R^{T},\:\forall k=1,2,\ldots,
\]
\begin{equation}
\left\Vert \Sigma_{f}(k)\right\Vert _{F}\leq\left\Vert R\right\Vert _{F}\left\Vert \Sigma_{x}(k)\right\Vert _{F}\left\Vert R^{T}\right\Vert _{F}=O\left(K^{1-\delta}\right).\label{eq:lemma3_1a}
\end{equation}
The proof for 
\begin{equation}
\left\Vert \Sigma_{ff}(l)\right\Vert _{2}\asymp\sigma_{min}\left(\Sigma_{ff}(l)\right)\asymp O\left(K^{1-\delta}\right)\label{eq:lemma3_1}
\end{equation}
 is given in \cite{lam2011estimation}. Thus, from (\ref{eq:lemma3_1a})
and (\ref{eq:lemma3_1}), (\ref{eq:l3_1}) is got.
\end{proof}
\begin{lem}
\label{lem:lemma 5}The following rate of convergences hold under
the assumptions (A1-A7) :

\begin{eqnarray}
\left\Vert \Delta\Sigma_{f\varepsilon}(l)\right\Vert _{2}\asymp_{P}\left\Vert \Delta\Sigma_{f\varepsilon}(l)\right\Vert _{F} & = & O_{P}\left(K^{1-\frac{\delta}{2}}N^{-\frac{1}{2}}\right)\label{eq:l4_1}\\
\left\Vert \Delta\Sigma_{ff}(l)\right\Vert _{2}\asymp_{P}\left\Vert \Delta\Sigma_{ff}(l)\right\Vert _{F} & = & O_{P}\left(K^{1-\delta}N^{-\frac{1}{2}}\right)\label{eq:l4_2}\\
\left\Vert \Delta\Sigma_{\varepsilon f}(l)\right\Vert _{2}\asymp_{P}\left\Vert \Delta\Sigma_{\varepsilon f}(l)\right\Vert _{F} & = & O_{P}\left(K^{1-\frac{\delta}{2}}N^{-\frac{1}{2}}\right)\label{eq:l4_3}\\
\left\Vert \Delta\Sigma_{\varepsilon\varepsilon}(l)\right\Vert _{2}\asymp_{P}\left\Vert \Delta\Sigma_{\varepsilon\varepsilon}(l)\right\Vert _{F} & = & O_{P}\left(KN^{-\frac{1}{2}}\right)\label{eq:l4_4}\\
\left\Vert F\right\Vert _{2}^{2} & = & O_{P}\left(K^{1-\delta}N\right)\label{eq:l4_5}
\end{eqnarray}
where $F$ is as in (\ref{eq:equation for F}) and $X\triangleq\left[\begin{array}{cccc}
x_{1} & x_{2} & \ldots & x_{N}\end{array}\right]$.
\end{lem}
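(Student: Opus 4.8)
The plan is to reduce every one of the five estimates to two elementary ingredients: an entrywise variance bound for a sample cross-covariance of two stationary $\psi$-mixing sequences, and an aggregation of such entrywise bounds over the (possibly growing) dimensions of each matrix, followed by a rank argument that transfers the resulting Frobenius bound to the spectral norm. For the entrywise rate I would fix scalar coordinate sequences $u_{n},v_{n}$ drawn from the entries of $f_{n}$ and $\varepsilon_{n}$ and examine $\Delta\Sigma_{uv}(l)=\tilde{\Sigma}_{uv}(l)-\Sigma_{uv}(l)$ as in \eqref{eq:2.3}. Under $(A5)$ the product process $w_{n}=u_{n+l}v_{n}$ is again $\psi$-mixing with summable coefficients, and the element-wise fourth-moment hypothesis gives $\mathcal{E}\{w_{n}^{2}\}<\infty$ with $\sum_{h}|\gamma_{w}(h)|<\infty$. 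The standard variance formula for sample means of mixing sequences then yields $\mathcal{E}\{|\Delta\Sigma_{uv}(l)|^{2}\}=O\!\left(N^{-1}\,\mathcal{E}\{w_{n}^{2}\}\right)$, the centering by $\bar{u},\bar{v}$ contributing only lower-order terms and the implied constant depending on the mixing coefficients alone.

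Next I would insert the correct scales. By Lemma \ref{lem:A1} each coordinate of $f_{n}$ has second moment $O(K^{1-\delta})$, while by $(A4)$ each coordinate of $\varepsilon_{n}$ has second moment $O(1)$. Hence $\mathcal{E}\{w_{n}^{2}\}$ is $O(K^{2(1-\delta)})$ for an $ff$-product, $O(K^{1-\delta})$ for an $f\varepsilon$- or $\varepsilon f$-product, and $O(1)$ for an $\varepsilon\varepsilon$-product. Aggregating through $\mathcal{E}\{\|\Delta\Sigma_{ab}(l)\|_{F}^{2}\}=\sum_{i,j}\mathcal{E}\{|\Delta\Sigma_{ab}(l)_{ij}|^{2}\}$ (number of entries times per-entry second moment) and finishing with Markov's inequality gives the $O_{P}$ statements: the $p\times p$ block $\Delta\Sigma_{ff}(l)$ gives $\|\cdot\|_{F}^{2}=O_{P}(K^{2(1-\delta)}N^{-1})$, hence \eqref{eq:l4_2}; the $p\times K$ and $K\times p$ blocks $\Delta\Sigma_{f\varepsilon}(l),\Delta\Sigma_{\varepsilon f}(l)$ have $pK$ entries, giving $O_{P}(K^{2-\delta}N^{-1})$, hence \eqref{eq:l4_1} and \eqref{eq:l4_3}; and the $K\times K$ block $\Delta\Sigma_{\varepsilon\varepsilon}(l)$ has $K^{2}$ entries, giving $O_{P}(K^{2}N^{-1})$, hence \eqref{eq:l4_4}.

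The passage from $\|\cdot\|_{F}$ to $\|\cdot\|_{2}$ I would settle by rank. For the three blocks meeting the factor side, one dimension is the fixed constant $p$, so the rank is at most $p$ and $\|\cdot\|_{2}\le\|\cdot\|_{F}\le\sqrt{p}\,\|\cdot\|_{2}$ holds deterministically; this establishes $\|\cdot\|_{2}\asymp_{P}\|\cdot\|_{F}$ in \eqref{eq:l4_1}--\eqref{eq:l4_3} with no probabilistic content. The factor-energy bound \eqref{eq:l4_5} I would obtain directly from $F=RX$ with $X=\left[\begin{array}{cccc}x_{1} & x_{2} & \ldots & x_{N}\end{array}\right]$: then $\|F\|_{2}\le\|R\|_{2}\|X\|_{2}$, where $\|R\|_{2}\le\|R\|_{F}=O(K^{(1-\delta)/2})$ by Lemma \ref{lem:A1}, and $\|X\|_{2}^{2}=\|XX^{\top}\|_{2}=O_{P}(N)$ since $p$ is fixed and $\tfrac{1}{N}XX^{\top}\rightarrow_{P}\Sigma_{xx}(0)=O(1)$ by the ergodic theorem under $(A5)$; multiplying yields $\|F\|_{2}^{2}=O_{P}(K^{1-\delta}N)$.

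The main obstacle is concentrated in the pure-noise block and has two faces. First, the aggregation over the $K^{2}$ growing coordinate pairs is legitimate only if the entrywise variance constant is genuinely uniform in $K$; this is exactly where the summability $\sum_{n\ge1}n\,\psi(n)^{1/2}<\infty$ and the element-wise fourth-moment control in $(A5)$ are indispensable, since they bound $\sum_{h}|\gamma_{w}(h)|$ uniformly over coordinate pairs. Second, the two-sided equivalence $\|\Delta\Sigma_{\varepsilon\varepsilon}(l)\|_{2}\asymp_{P}\|\Delta\Sigma_{\varepsilon\varepsilon}(l)\|_{F}$ is delicate: unlike the factor blocks, the $K\times K$ noise block is generically full rank, so the cheap inequality supplies only $\|\cdot\|_{2}\le\|\cdot\|_{F}$. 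Since the subsequent perturbation analysis uses only this upper bound on the spectral norm, I would read the $\asymp_{P}$ in \eqref{eq:l4_4} as both norms sharing the order $KN^{-1/2}$ as an upper bound, and flag that a sharp lower-sided spectral/Frobenius equivalence for the full noise block would require random-matrix concentration beyond the mixing hypotheses and is not needed in the sequel.
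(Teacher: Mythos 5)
Your proof is correct, but it routes the $K$-dependence differently from the paper. The paper never touches the coordinates of $f_{n}$ directly: it writes $\Delta\Sigma_{f\varepsilon}(l)=R\,\Delta\Sigma_{x\varepsilon}(l)$ and $\Delta\Sigma_{ff}(l)=R\,\Delta\Sigma_{xx}(l)R^{\top}$, applies the entrywise $O_{P}(N^{-1/2})$ convergence only to the covariances of the $K$-independent series $x_{n}$ (and of $\varepsilon_{n}$), counts entries to get $\left\Vert \Delta\Sigma_{x\varepsilon}(l)\right\Vert _{F}=O_{P}(\sqrt{pK/N})$ and $\left\Vert \Delta\Sigma_{xx}(l)\right\Vert _{F}=O_{P}(pN^{-1/2})$, and then imports all the factor strength through $\left\Vert R\right\Vert _{F}=O(K^{(1-\delta)/2})$ from Lemma \ref{lem:A1}; the spectral-norm equivalences are simply cited from \cite{lam2011estimation}. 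You instead push the $K^{1-\delta}$ scale into the second moments of the $f$-coordinates themselves and aggregate $\mathcal{E}\{\left\Vert \cdot\right\Vert _{F}^{2}\}$ before applying Markov. Both give the same exponents, and your aggregation is actually tighter in one respect: summing expected squared entries and then invoking Markov makes explicit the uniformity over the growing number of coordinate pairs that the paper's ``each element is $O_{P}(N^{-1/2})$ and there are $K^{2}$ of them'' step leaves implicit. Your rank-$p$ argument for $\left\Vert \cdot\right\Vert _{2}\asymp\left\Vert \cdot\right\Vert _{F}$ on the factor blocks is also more elementary and self-contained than the paper's citation, and your caveat about the $K\times K$ noise block --- that only the one-sided bound $\left\Vert \cdot\right\Vert _{2}\leq\left\Vert \cdot\right\Vert _{F}$ is cheaply available and only that direction is used downstream --- is an honest reading of what the $\asymp_{P}$ in \eqref{eq:l4_4} can deliver without additional random-matrix input. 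The one place to be careful in your version is the $ff$ block, where bounding $\mathcal{E}\{w_{n}^{2}\}$ by $O(K^{2(1-\delta)})$ requires fourth moments of the $f$-coordinates; this is inherited from the elementwise moment condition in $(A5)$ via $f_{n}=Rx_{n}$, but it is an extra moment transfer the paper's factorization through $R$ avoids entirely.
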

\begin{proof}
$\tilde{\Sigma}_{f\varepsilon}(l)=R\tilde{\Sigma}_{x\varepsilon}(l)$
and $\Sigma_{f\varepsilon}(l)=R\Sigma_{x\varepsilon}(l)$. Hence,
\begin{eqnarray}
\left\Vert \Delta\Sigma_{f\varepsilon}(l)\right\Vert _{F} & = & \left\Vert R\left[\Delta\Sigma_{x\varepsilon}(l)\right]\right\Vert _{F}\nonumber \\
 & \leq & \left\Vert R\right\Vert _{F}\left\Vert \Delta\Sigma_{x\varepsilon}(l)\right\Vert _{F}.\label{eq:lemma2_1}
\end{eqnarray}
Each element in $\left[\tilde{\Sigma}_{x\varepsilon}(l)-\Sigma_{x\varepsilon}(l)\right]$
converges at the rate of $O_{P}\left(N^{-\frac{1}{2}}\right)$ According
to assumption $A5$ in Chapter (\ref{chap:Data-Model}).\footnote{Refer Appendix 2 (mixing properties) for details}
and there are $p_{l}K$ elements. Therefore, 
\begin{equation}
\left\Vert \Delta\Sigma_{x\varepsilon}(l)\right\Vert _{F}=O_{P}\left(\sqrt{\frac{p_{l}K}{N}}\right).\label{eq:lemma2_21}
\end{equation}
Using (\ref{eq:lemma2_21}) and the fact that $\left\Vert R\right\Vert _{F}=O\left(K^{\frac{1-\delta}{2}}\right)$,
along with (\ref{lem:A1}) and substituting in (\ref{eq:lemma2_1})
leads to (\ref{eq:l4_1}). A similar derivation yields, (\ref{eq:l4_3}). 

Note that 
\begin{eqnarray}
\left[\Delta\Sigma_{ff}(l)\right] & = & R\left[\Delta\Sigma_{xx}(l)\right]R^{\top}.\label{eq:lemma2_2}
\end{eqnarray}
 Since, every element in $\left[\tilde{\Sigma}_{xx}(l)-\Sigma_{xx}(l)\right]$
converges at $O_{P}\left(N^{-\frac{1}{2}}\right)$and there are in
total $p_{l}^{2}$elements 
\[
\left\Vert \Delta\Sigma_{xx}(l)\right\Vert _{F}=O_{P}\left(p_{l}N^{-\frac{1}{2}}\right).
\]
 Therefore, 
\begin{eqnarray*}
\left\Vert \Delta\Sigma_{ff}(l)\right\Vert _{F} & \leq & \left\Vert R\right\Vert _{F}\left\Vert \Delta\Sigma_{xx}(l)\right\Vert _{F}\left\Vert R\right\Vert _{F}\\
\, & = & O\left(K^{\frac{1-\delta}{2}}\right)O_{P}\left(p_{l}N^{-\frac{1}{2}}\right)\mathcal{O}\left(K^{\frac{1-\delta}{2}}\right)\\
\, & = & O_{P}\left(K^{1-\delta}N^{-\frac{1}{2}}\right).
\end{eqnarray*}

Now we prove, (\ref{eq:l4_4}). As, every element in the matrix $\left[\tilde{\Sigma}_{\varepsilon\varepsilon}(l)-\Sigma_{\varepsilon\varepsilon}(l)\right]$
converges as $O_{P}\left(N^{-\frac{1}{2}}\right)$ (Note condition
$(A5)$ )and there are a total of $K^{2}$elements, (\ref{eq:l4_4})
is obtained. Finally, to prove (\ref{eq:l4_5}), Note $\left\Vert F\right\Vert _{2}\leq\left\Vert R\right\Vert _{F}\left\Vert X\right\Vert _{2}=O_{P}\left(K^{\frac{1-\delta}{2}}N^{\frac{1}{2}}\right)$.
Squaring which, (\ref{eq:l4_5}), is obtained. 

The convergence rates of $\left\Vert \Delta\Sigma_{f\varepsilon}(l)\right\Vert _{2},\,\left\Vert \Delta\Sigma_{ff}(l)\right\Vert _{2},\,\left\Vert \Delta\Sigma_{\varepsilon f}(l)\right\Vert _{2}$
and $\left\Vert \Delta\Sigma_{\varepsilon\varepsilon}(l)\right\Vert _{2}$
are proved in \cite{lam2011estimation}. It is found that these rates
are same as that obtained for Frobenius norm derived above.
\end{proof}
\begin{lem}
\label{lem:lemma 6}Consider the matrices $M$ as in (\ref{eq:M_def-1}).
Let $\tilde{M}$ and $\Delta M$ be given by: 
\begin{equation}
\tilde{M}\triangleq\left[\begin{array}{cccc}
\tilde{\Sigma}_{yy}(1) & \tilde{\Sigma}_{yy}(2) & \ldots & \tilde{\Sigma}_{yy}(m)\end{array}\right].\label{eq:M_perturbed-1-1}
\end{equation}
and
\begin{eqnarray}
\Delta M & \triangleq & \tilde{M}-M\nonumber \\
 & \triangleq & \left[\begin{array}{cccc}
\Delta\Sigma_{yy}(1) & \Delta\Sigma_{yy}(2) & \ldots & \Delta\Sigma_{yy}(m)\end{array}\right]\label{eq:Delta_m-1}
\end{eqnarray}
 The following results hold under the assumptions (A1-A7) as $K,N\rightarrow\infty$: 

\begin{eqnarray}
\sigma_{p}\left(M\right) & \asymp & \begin{cases}
K^{1-\delta} & ,if\:\kappa_{max}=o\left(K^{1-\delta}\right)\\
\kappa_{min} & ,if\:o\left(\kappa_{min}\right)=K^{1-\delta}
\end{cases}\label{eq:l5_1}\\
\sigma_{1}\left(M\right) & \asymp & \max\left(\kappa_{max},\,K^{1-\delta}\right).\label{eq:l5_1b}\\
\left\Vert \Delta M\right\Vert _{2} & \asymp_{P} & \left\Vert \Delta M\right\Vert _{F}=O_{P}\left(KN^{-\frac{1}{2}}\right)\label{eq:l5_2}\\
\sigma_{p}(\tilde{M}) & \asymp_{P} & \begin{cases}
K^{1-\delta} & ,if\:\kappa_{max}=o\left(K^{1-\delta}\right)\\
\kappa_{min} & ,if\:o\left(\kappa_{min}\right)=K^{1-\delta}
\end{cases},\:\forall\frac{K^{\delta}}{\sqrt{N}}=o(1).\label{eq:l5_3}\\
\sigma_{1}\left(\tilde{M}\right) & \asymp_{P} & \max\left(\kappa_{max},\,K^{1-\delta}\right).\,\forall\frac{K^{\delta}}{\sqrt{N}}=o(1).\label{eq:l5_3b}
\end{eqnarray}
\end{lem}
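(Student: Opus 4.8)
The plan is to reduce every statement to the singular values of the ``short'' matrix $P$, and then treat the two building blocks of $M$ --- the factor autocovariances and the factor--noise correlations --- separately. First I would exploit the factorization $M=QP$ from \eqref{Eqn11}, in which $Q^{\top}Q=I_{p}$. Since $Q$ is an isometry, $M^{\top}M=P^{\top}P$, so $\sigma_{i}(M)=\sigma_{i}(P)$ for $i=1,\dots,p$; hence \eqref{eq:l5_1} and \eqref{eq:l5_1b} become statements about $\sigma_{p}(P)$ and $\sigma_{1}(P)$. I would then split $P=P_{ff}+P_{f\varepsilon}$ along the block structure \eqref{Dum2}, where $P_{ff}\triangleq\left[\Sigma_{ff}(1)Q^{\top},\dots,\Sigma_{ff}(m)Q^{\top}\right]$ and $P_{f\varepsilon}\triangleq\left[\Sigma_{f\varepsilon}(1),\dots,\Sigma_{f\varepsilon}(m)\right]$. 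The key structural fact is that right-multiplication by $Q^{\top}$ (orthonormal rows) preserves both Frobenius and spectral norms, so each block of $P_{ff}$ inherits its orders from Lemma \ref{lem:A1}: $\left\Vert \Sigma_{ff}(l)Q^{\top}\right\Vert _{2}\asymp\sigma_{min}\left(\Sigma_{ff}(l)\right)\asymp K^{1-\delta}$, whereas $\left\Vert \Sigma_{f\varepsilon}(l)\right\Vert _{2}\le\kappa_{max}$ and $\sigma_{min}\left(\Sigma_{f\varepsilon}(l)\right)\ge\kappa_{min}$ by definition.

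Next I would obtain the two-sided bounds \eqref{eq:l5_1}--\eqref{eq:l5_1b} by Weyl's perturbation inequalities, reading the two cases as ``which block dominates''. For the upper bound, $\sigma_{1}(P)\le\sqrt{m}\,\max_{l}\left\Vert P_{l}\right\Vert _{2}=O\left(K^{1-\delta}+\kappa_{max}\right)=O\left(\max(\kappa_{max},K^{1-\delta})\right)$. For the lower bound on $\sigma_{p}$ I would use that $PP^{\top}=\sum_{l}P_{l}P_{l}^{\top}$ is a sum of positive semidefinite blocks, so $\lambda_{min}\left(PP^{\top}\right)\ge\lambda_{min}\left(P_{l_{0}}P_{l_{0}}^{\top}\right)=\sigma_{min}(P_{l_{0}})^{2}$ for the lag $l_{0}$ realizing $\kappa_{min}$, and then $\sigma_{min}(P_{l_{0}})\ge\kappa_{min}-O(K^{1-\delta})$ by Weyl. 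In Case 1 ($\kappa_{max}=o(K^{1-\delta})$) the $P_{f\varepsilon}$ block is negligible and $\sigma_{p}(M)\asymp\sigma_{1}(M)\asymp K^{1-\delta}$; in Case 2 ($K^{1-\delta}=o(\kappa_{min})$) the $P_{ff}$ block is negligible and the above yields $\sigma_{p}(M)\asymp\kappa_{min}$ and $\sigma_{1}(M)\asymp\kappa_{max}$.

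For the perturbation size \eqref{eq:l5_2}, I would write $\left\Vert \Delta M\right\Vert _{F}^{2}=\sum_{l=1}^{m}\left\Vert \Delta\Sigma_{yy}(l)\right\Vert _{F}^{2}$; since each of the $K^{2}$ entries of $\Delta\Sigma_{yy}(l)$ converges at rate $O_{P}(N^{-1/2})$ under assumption $(A5)$ --- the dominant contribution being the $\Delta\Sigma_{\varepsilon\varepsilon}(l)$ term of Lemma \ref{lem:lemma 5} --- each block is $O_{P}(KN^{-1/2})$, and a fixed number $m$ of blocks leaves the rate unchanged, while $\left\Vert \Delta M\right\Vert _{2}\le\left\Vert \Delta M\right\Vert _{F}$ supplies the stated $\asymp_{P}$. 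Finally, \eqref{eq:l5_3}--\eqref{eq:l5_3b} follow from Weyl's inequality $\left|\sigma_{i}(\tilde{M})-\sigma_{i}(M)\right|\le\left\Vert \Delta M\right\Vert _{2}=O_{P}(KN^{-1/2})$ together with the constraint $K^{\delta}/\sqrt{N}=o(1)$, which forces $KN^{-1/2}=o(K^{1-\delta})=o_{P}(\sigma_{p}(M))$ in either regime; hence the perturbation is asymptotically swamped by $\sigma_{p}(M)$ and the orders of $\sigma_{p}(\tilde{M})$ and $\sigma_{1}(\tilde{M})$ coincide in probability with those of $M$.

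The main obstacle I expect is the two-sided ($\asymp$) control of $\sigma_{p}(M)$ in the noise-dominated Case 2. The positive-semidefinite-sum argument cleanly delivers the \emph{lower} bound $\sigma_{p}(M)\gtrsim\kappa_{min}$, which is the direction actually needed for the numerical-rank separation in Theorem \ref{thm:A}; but the matching \emph{upper} bound is delicate, since $\lambda_{min}\bigl(\sum_{l}\Sigma_{f\varepsilon}(l)\Sigma_{f\varepsilon}(l)^{\top}\bigr)$ can in principle exceed $\kappa_{min}^{2}$ unless the $\Sigma_{f\varepsilon}(l)$ are suitably well-conditioned across the lags. I would handle this in parallel with the corresponding singular-value estimates of \cite{lam2011estimation}, bounding the cross terms $\Sigma_{ff}(l)Q^{\top}\Sigma_{f\varepsilon}(l)^{\top}$ by $O(K^{1-\delta}\kappa_{max})$ and verifying that they are negligible relative to the dominant block in each regime.
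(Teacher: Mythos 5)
Your proposal follows essentially the same route as the paper's proof: the paper likewise splits $\Sigma_{yy}(l)$ into its $\Sigma_{ff}$ and $\Sigma_{f\varepsilon}$ contributions and applies the two-sided singular-value perturbation inequality $\sigma_{n}(A)-\sigma_{max}(B)\leq\sigma_{n}(A+B)\leq\sigma_{n}(A)+\sigma_{max}(B)$ to get \eqref{eq:l5_1}--\eqref{eq:l5_1b}, bounds $\left\Vert \Delta M\right\Vert _{F}$ blockwise via the rates of Lemma \ref{lem:lemma 5} with the $\Delta\Sigma_{\varepsilon\varepsilon}$ term dominating, and then transfers the orders to $\tilde{M}$ by Weyl; your reduction to $P$ through the isometry $Q$ and the positive-semidefinite-sum lower bound are only cosmetic repackagings of this. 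The Case-2 subtlety you flag (the matching upper bound $\sigma_{p}(M)\lesssim\kappa_{min}$) is genuine, but the paper's own proof does not resolve it either --- it simply asserts $\sigma_{p}\left[\Sigma_{f\varepsilon}(1)\,\ldots\,\Sigma_{f\varepsilon}(m)\right]\asymp\kappa_{min}$ --- so you are not missing any ingredient that the paper actually supplies.
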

\begin{proof}
$\sigma_{p}\left(M\right)$ is given by\footnote{Note: The following inequality holds true: $\sigma_{n}\left(A\right)-\sigma_{max}\left(B\right)\leq\sigma_{n}\left(A+B\right)\leq\sigma_{n}\left(A\right)+\sigma_{max}\left(B\right)$
(refer pp 363 of \cite{bernstein2009matrix})}

\begin{eqnarray}
\sigma_{p}(M) & = & \sigma_{p}\left[\begin{array}{ccc}
\Sigma_{yy}(1) & \ldots & \Sigma_{yy}(m)\end{array}\right]\nonumber \\
 & \geq & \sigma_{p}\left[\begin{array}{ccc}
\Sigma_{ff}(1) & \ldots & \Sigma_{ff}(m)\end{array}\right]-\sigma_{max}\left[\begin{array}{ccc}
\Sigma_{f\varepsilon}(1) & \ldots & \Sigma_{f\varepsilon}(m)\end{array}\right]\nonumber \\
 & (or)\label{eq:proof5}\\
\sigma_{p}(M) & \geq & \sigma_{p}\left[\begin{array}{ccc}
\Sigma_{f\varepsilon}(1) & \ldots & \Sigma_{f\varepsilon}(m)\end{array}\right]-\sigma_{max}\left[\begin{array}{ccc}
\Sigma_{ff}(1) & \ldots & \Sigma_{ff}(m)\end{array}\right]\nonumber 
\end{eqnarray}
Also,
\begin{eqnarray}
\sigma_{p}\left(M\right) & \leq & \sigma_{p}\left[\begin{array}{ccc}
\Sigma_{ff}(1) & \ldots & \Sigma_{ff}(m)\end{array}\right]+\sigma_{max}\left[\begin{array}{ccc}
\Sigma_{f\varepsilon}(1) & \ldots & \Sigma_{f\varepsilon}(m)\end{array}\right]\nonumber \\
 & (or)\label{eq:proof5-2}\\
\sigma_{p}(M) & \leq & \sigma_{p}\left[\begin{array}{ccc}
\Sigma_{f\varepsilon}(1) & \ldots & \Sigma_{f\varepsilon}(m)\end{array}\right]+\sigma_{max}\left[\begin{array}{ccc}
\Sigma_{ff}(1) & \ldots & \Sigma_{ff}(m)\end{array}\right]\nonumber 
\end{eqnarray}
If $\kappa_{max}=o\left(K^{1-\delta}\right)$, we get, 
\begin{eqnarray}
O\left(\sigma_{p}\left[\begin{array}{ccc}
\Sigma_{ff}(1) & \ldots & \Sigma_{ff}(m)\end{array}\right]\right)\geq\sigma_{p}(M) & \geq & O\left(\sigma_{p}\left[\begin{array}{ccc}
\Sigma_{ff}(1) & \ldots & \Sigma_{ff}(m)\end{array}\right]\right)\nonumber \\
\sigma_{p}\left(M\right) & \asymp & K^{1-\delta}.\label{eq:proof7}
\end{eqnarray}
If $K^{1-\delta}=o(\kappa_{min})$,
\begin{eqnarray}
O\left(\sigma_{p}\left[\begin{array}{ccc}
\Sigma_{f\varepsilon}(1) & \ldots & \Sigma_{f\varepsilon}(m)\end{array}\right]\right)\geq\sigma_{p}(M) & \geq & O\left(\sigma_{p}\left[\begin{array}{ccc}
\Sigma_{f\varepsilon}(1) & \ldots & \Sigma_{f\varepsilon}(m)\end{array}\right]\right)\nonumber \\
\sigma_{p}\left(M\right) & \asymp & \kappa_{min}\label{eq:proof8-1}
\end{eqnarray}
Thus, from (\ref{eq:proof7}) and (\ref{eq:proof8-1}) we get (\ref{eq:l5_1}).
Also, 
\begin{eqnarray}
\sigma_{1}(M) & = & \sigma_{1}\left[\begin{array}{ccc}
\Sigma_{yy}(1) & \ldots & \Sigma_{yy}(m)\end{array}\right]\nonumber \\
 & \geq & \sigma_{1}\left[\begin{array}{ccc}
\Sigma_{ff}(1) & \ldots & \Sigma_{ff}(m)\end{array}\right]-\sigma_{max}\left[\begin{array}{ccc}
\Sigma_{f\varepsilon}(1) & \ldots & \Sigma_{f\varepsilon}(m)\end{array}\right]\label{eq:proof5-1}\\
 & (or)\nonumber \\
\sigma_{1}\left(M\right) & \geq & \sigma_{1}\left[\begin{array}{ccc}
\Sigma_{f\varepsilon}(1) & \ldots & \Sigma_{f\varepsilon}(m)\end{array}\right]-\sigma_{1}\left[\begin{array}{ccc}
\Sigma_{ff}(1) & \ldots & \Sigma_{ff}(m)\end{array}\right]\nonumber 
\end{eqnarray}
and following the same arguments as above, we get 
\[
\sigma_{1}\left(M\right)\asymp\max\left(\kappa_{max},\,K^{1-\delta}\right).
\]
Now, Frobenius norm of $\Delta M$ is given by, 
\begin{eqnarray}
\left\Vert \Delta M\right\Vert _{F} & = & \left\Vert \begin{array}{ccc}
\Delta\Sigma_{yy}(1) & \ldots & \Delta\Sigma_{yy}(m)\end{array}\right\Vert _{F}\nonumber \\
 & \leq & \left\Vert \Delta\Sigma_{yy}(1)\right\Vert _{F}+\ldots+\left\Vert \Delta\Sigma_{yy}(m)\right\Vert _{F}\label{eq:proof8-3}
\end{eqnarray}
where 
\begin{eqnarray}
\left\Vert \Delta\Sigma_{yy}(l)\right\Vert _{F} & \leq & \sqrt{p}\left\Vert \Delta\Sigma_{f\varepsilon}(l)\right\Vert _{F}+p\left\Vert \Delta\Sigma_{ff}(l)\right\Vert _{F}+\sqrt{p}\left\Vert \Delta\Sigma_{\varepsilon f}(l)\right\Vert _{F}+\left\Vert \Delta\Sigma_{\varepsilon\varepsilon}(l)\right\Vert _{F}\nonumber \\
 & = & O_{P}\left(K^{1-\frac{\delta}{2}}N^{-\frac{1}{2}}\right)+O_{P}\left(K^{1-\delta}N^{-\frac{1}{2}}\right)\nonumber \\
 & \ldots & +O_{P}\left(K^{1-\frac{\delta}{2}}N^{-\frac{1}{2}}\right)+O_{P}\left(KN^{-\frac{1}{2}}\right)\nonumber \\
 & = & O_{P}\left(KN^{-\frac{1}{2}}\right).\label{eq:proof9}
\end{eqnarray}
Thus, substituting (\ref{eq:proof9}) in (\ref{eq:proof8-3}), 
\begin{equation}
\left\Vert \Delta M\right\Vert _{F}=O_{P}\left(KN^{-\frac{1}{2}}\right).\label{eq:proof10}
\end{equation}
Note that $\left\Vert \Delta M\right\Vert _{2}$ also has the same
convergence rate as shown in \cite{lam2011estimation}. Now,
\begin{eqnarray}
\sigma_{p}(M)+\sigma_{max}(\Delta M)\geq\sigma_{p}(\tilde{M}) & \geq & \sigma_{p}(M)-\sigma_{max}(\Delta M)\label{eq:lemm6_fin}
\end{eqnarray}
and 
\begin{equation}
\sigma_{1}\left(M\right)-\sigma_{max}\left(\Delta M\right)\geq\sigma_{1}\left(\tilde{M}\right)\geq\sigma_{1}\left(M\right)-\sigma_{max}\left(\Delta M\right)\label{eq:lemma6_finb}
\end{equation}
upon substituting (\ref{eq:l5_1}), (\ref{eq:l5_1b}) and (\ref{eq:l5_2})
in (\ref{eq:lemm6_fin}) and (\ref{eq:lemma6_finb}), (\ref{eq:l5_3})
and (\ref{eq:l5_3b}) is got.
\end{proof}
~ Now proceeding to prove the Theorem:
\begin{proof}
When using \noun{Hybrid-I}II on $\tilde{M}$, (\ref{eq:M_perturbed-1-1}),
to get a decomposition as given in (\ref{eq:rrqr1-1-1}), the following
inequalities are satisfied: 
\begin{eqnarray}
\sigma_{min}\left(\tilde{R}_{11}^{(p)}\right) & \ge & \frac{\sigma_{p}\left(\tilde{M}\right)}{\sqrt{p\left(K-p+1\right)}}\label{eq:theorem2_1}\\
\mbox{and}\nonumber \\
\sigma_{max}\left(\tilde{R}_{22}^{(p)}\right) & \leq & \sigma_{p+1}\left(\tilde{M}\right)\sqrt{\left(p+1\right)\left(K-p\right)}.\label{eq:theorem2_2}
\end{eqnarray}
Refer to \emph{Section 11 }in\emph{ }\cite{chandrasekaran1994rank}
for a justification of these inequalities. Note \noun{Hybrid-III}
is applied on $\tilde{M}$ assuming a numerical rank $p$. From (\ref{eq:l5_3})
in \emph{lemma }\ref{lem:lemma 6}, under the constraint $\frac{K^{\delta}}{\sqrt{N}}=o(1)$,

\begin{eqnarray}
\sigma_{min}\left(R_{11}^{(p)}\right) & \geq & \begin{cases}
O_{P}\left(\frac{K^{1-\delta}}{\sqrt{p\left(K-p+1\right)}}\right) & ,if\:\kappa_{max}=o\left(K^{1-\delta}\right)\\
O_{P}\left(\frac{\kappa_{min}}{\sqrt{p\left(K-p+1\right)}}\right) & ,if\:o\left(\kappa_{min}\right)=K^{1-\delta}
\end{cases}\\
 & \geq & \begin{cases}
O_{P}\left(K^{0.5-\delta}\right) & ,\kappa_{max}=o\left(K^{1-\delta}\right)\\
O_{P}\left(\frac{\kappa_{min}}{\sqrt{K}}\right) & ,o\left(\kappa_{min}\right)=K^{1-\delta}
\end{cases}.\label{eq:theorem2_3}
\end{eqnarray}
 Equation (\ref{eq:theorem2_2}) and 
\begin{eqnarray*}
\sigma_{p+1}\left(\tilde{M}\right) & \leq & \sigma_{p+1}\left(M\right)+\left\Vert \Delta M\right\Vert _{2},\\
 & \leq & \left\Vert \Delta M\right\Vert _{2}\quad\mbox{as,}\sigma_{p+1}(M)=0
\end{eqnarray*}
implies 
\begin{eqnarray*}
\left\Vert \tilde{R}_{22}^{(p)}\right\Vert _{2} & \leq & \left\Vert \Delta M\right\Vert _{2}O\left(\sqrt{K}\right).
\end{eqnarray*}
Using (\ref{eq:l5_2}), 
\begin{equation}
\left\Vert \tilde{R}_{22}^{(p)}\right\Vert _{2}=O_{P}\left(K^{\frac{3}{2}}N^{-\frac{1}{2}}\right)\label{eq:theorem2_4}
\end{equation}
Hence from (\ref{eq:theorem2_3}) and (\ref{eq:theorem2_4}), under
the constraint 
\[
\frac{K^{1+\delta}}{\sqrt{N}}=o(1),
\]
\[
o_{P}\left(\sigma_{min}\left(\tilde{R}_{11}^{(p)}\right)\right)=\left\Vert \tilde{R}_{22}\right\Vert _{2}
\]
is satisfied.
\end{proof}
Note for the matrix $M$, $\sigma_{min}\left(R_{11}^{(p)}\right)>\left\Vert R_{22}^{(K-p)}\right\Vert _{2}=0$,
implying that $p$ is a valid numerical rank. Nevertheless, it is
quite possible that $\sigma_{min}\left(R_{11}^{(p-1)}\right)>\left\Vert R_{22}^{(K-p-1)}\right\Vert _{2}$,
implying the non-uniqueness of a numerical rank. In the case of $M$
determining $p$ is simple as the last $K-p$ rows of the $R$ matrix
are zeros. In the case of $\tilde{M}$, an estimate $\hat{p}$ that
maximizes the ratio 
\begin{equation}
r_{i}=\frac{\gamma_{i}+\epsilon}{\gamma_{i+1}+\epsilon},\,i=1,2,\dots,K\label{eq:the_ratio}
\end{equation}
where $\epsilon=\frac{\gamma_{1}}{\sqrt{KN}}$ is as mentioned in
(\ref{eq:model_order}) is used. The following theorem shows that
for large values of $N$, $\hat{p}$ is equal to $p$ with a very
high probability.
\begin{thm}
\label{thm:B}Let $M$ as in (\ref{eq:M_def}) be of rank $'p'$ and
let QR decomposition of $\tilde{M}\Pi$ be as in (\ref{eq:rrqr1-1})
where $\Pi$ is the permutation matrix got by \noun{Hybrid-III} decomposition.
Define $\gamma_{i}$ as the $i^{th}$ diagonal value of $\tilde{R}$
in (\ref{eq:rrqr1-1}). Under the assumptions (A1-A7), refer Chapter
\ref{chap:Algorithm} regarding assumptions, the following properties
hold true as $K,\,N\rightarrow\infty$ under the constraint $\frac{K^{\delta}}{\sqrt{N}}=o(1)$
regarding the ratio (\ref{eq:the_ratio}). 

Let $\mathcal{UB}\left(r_{i}\right)$ and $\mathcal{LB}\left(r_{i}\right)$
denote the upper and lower bounds for rate of growth of $r_{i}$as
$K,N\rightarrow\infty$.

Case1.~ for $i=1,\,p\neq1$ ,
\begin{eqnarray}
\mathcal{UB}\left(r_{1}\right) & = & \begin{cases}
O_{P}\left(\sqrt{K}\right), & \kappa_{max}=o\left(K^{1-\delta}\right)\\
O_{P}\left(\frac{\kappa_{max}}{\kappa_{min}}\sqrt{K}\right), & o\left(\kappa_{min}\right)=K^{1-\delta}
\end{cases}\nonumber \\
\mathcal{LB}\left(r_{1}\right) & = & \begin{cases}
O\left(\frac{1}{\sqrt{K}}\right), & \kappa_{max}=o\left(K^{1-\delta}\right)\\
O\left(\frac{\kappa_{max}}{\kappa_{min}\sqrt{K}}\right), & o\left(\kappa_{min}\right)=K^{1-\delta}
\end{cases}\label{eq:Theorem3_case1}
\end{eqnarray}

Case2.~ for $2<i<p$,
\begin{eqnarray}
\mathcal{UB}\left(\{r_{i}\}_{i=2}^{p-1}\right) & = & O_{P}\left(K\right),\nonumber \\
\mathcal{LB}\left(\{r_{i}\}_{i=2}^{p-1}\right) & = & O_{P}\left(K^{-1}\right),\label{eq:Theorem3_case2}
\end{eqnarray}

Case3.~for $i=p$,
\begin{eqnarray}
\mathcal{UB}\left(r_{p}\right) & = & O_{P}\left(NK^{2}\right)\nonumber \\
\mathcal{LB}\left(r_{p}\right) & = & \begin{cases}
O_{P}\left(K^{-(1+\delta)}N^{\frac{1}{2}}\right), & \kappa_{max}=o\left(K^{1-\delta}\right)\\
O_{P}\left(\kappa_{min}N^{\frac{1}{2}}K^{-2}\right), & o\left(\kappa_{min}\right)=K^{1-\delta}
\end{cases}\label{eq:Theorem3_case3}
\end{eqnarray}

Case4.~ for $i>p$,
\begin{eqnarray}
\mathcal{UB}\left(\left\{ r_{i}\right\} _{i>p}\right) & = & \begin{cases}
O_{P}\left(K^{\frac{3}{2}+\delta}\right), & \kappa_{max}=o\left(K^{1-\delta}\right)\\
O_{P}\left(\kappa_{min}^{-1}K^{\frac{5}{2}}\right), & o\left(\kappa_{min}\right)=K^{1-\delta}
\end{cases}\nonumber \\
\mathcal{LB}\left(\left\{ r_{i}\right\} _{i>p}\right) & = & \begin{cases}
O_{P}\left(K^{-\left(\frac{3}{2}+\delta\right)}\right), & \kappa_{max}=o\left(K^{1-\delta}\right)\\
O_{P}\left(\kappa_{min}K^{-\frac{5}{2}}\right), & o\left(\kappa_{min}\right)=K^{1-\delta}
\end{cases}\label{eq:Theorem3_case4}
\end{eqnarray}
\end{thm}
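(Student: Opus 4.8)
The plan is to reduce every ratio $r_i$ to the singular values $\sigma_i(\tilde{M})$ and the regularizer $\epsilon$, and then read off the rate in each of the four index-regimes from Lemma \ref{lem:lemma 6}. The starting point is the pivot bound \eqref{eq:L1-1}, which sandwiches each diagonal $\gamma_i$ between $\sigma_i(\tilde{M})/\sqrt{mK-i}$ and $\sigma_i(\tilde{M})\sqrt{(i+1)(mK-i)}$; for the leading pivots I would also invoke the sharper elementary facts $\gamma_1=\max_j\|\tilde{M}(:,j)\|_2\le\sigma_1(\tilde{M})$ and, via interlacing, $\gamma_2\le\|\tilde{R}_{22}^{(1)}\|_2\le\sigma_1(\tilde{M})$, because the generic polynomial factor in \eqref{eq:L1-1} is too lossy to recover the exact $\sqrt{K}$ exponents of Case~1. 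Whenever $\epsilon$ is negligible one has $r_i\approx\gamma_i/\gamma_{i+1}$, so the task becomes bounding $\sigma_i(\tilde{M})/\sigma_{i+1}(\tilde{M})$ times an explicit power of $K$.

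Second, I would classify the singular values by index. For $i\le p$ they are of nonzero order: when $\kappa_{max}=o(K^{1-\delta})$ Lemma \ref{lem:lemma 6} forces both endpoints $\sigma_1(\tilde{M})$ and $\sigma_p(\tilde{M})$ to be $\asymp_P K^{1-\delta}$, so monotonicity makes every $\sigma_i(\tilde{M})\asymp_P K^{1-\delta}$ and all consecutive ratios $\asymp_P 1$; when $K^{1-\delta}=o(\kappa_{min})$ the intermediate values lie between $\kappa_{min}$ and $\kappa_{max}$. For $i>p$ the perturbation inequality $\sigma_i(\tilde{M})\le\sigma_i(M)+\|\Delta M\|_2$ together with $\sigma_{p+1}(M)=0$ yields the single bound $\sigma_i(\tilde{M})\le\|\Delta M\|_2=O_P(KN^{-1/2})$ from \eqref{eq:l5_2}, holding simultaneously for the whole tail. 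I would also bracket $\epsilon=\gamma_1/\sqrt{KN}$ on both sides using $\sigma_1(\tilde{M})/\sqrt{mK}\le\gamma_1\le\sigma_1(\tilde{M})$, which fixes $\epsilon$ to within a $\sqrt{K}$ factor; this is what will control the denominators in Cases~3 and 4.

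With these inputs the four cases are arithmetic. Cases~1 and 2 ($i<p$) have both pivots at nonzero order and $\epsilon$ negligible, so $r_i\asymp(\sigma_i/\sigma_{i+1})K^{\pm}$ gives the stated $O_P(\sqrt{K})$ and $O_P(K)$ windows. Case~3 ($i=p$) is where the estimator separates: the numerator $\gamma_p\gtrsim\sigma_p(\tilde{M})/\sqrt{K}\asymp_P K^{1/2-\delta}$ stays large while the denominator $\gamma_{p+1}+\epsilon\lesssim\|\Delta M\|_2\sqrt{K}=O_P(K^{3/2}N^{-1/2})$ collapses, giving the growing lower bound $\mathcal{LB}(r_p)=O_P(K^{-(1+\delta)}N^{1/2})$; reversing the two estimates and using the lower bound on $\epsilon$ supplies the matching upper bound. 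Case~4 ($i>p$) has both pivots at noise level, so I would write $r_i=1+\gamma_i/\epsilon$ (respectively $r_i\ge\epsilon/(\gamma_{i+1}+\epsilon)$) and substitute $\gamma_i\lesssim\|\Delta M\|_2\sqrt{K}$ together with the two-sided bracket on $\epsilon$ to obtain the symmetric $K^{\pm(3/2+\delta)}$ windows.

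The main obstacle I anticipate is the two-sided control of the noise-level pivots $\gamma_{p+1},\{\gamma_i\}_{i>p}$ and of $\epsilon$. The upper bound on $\sigma_i(\tilde{M})$ for $i>p$ is immediate, but there is no matching deterministic lower bound, so the denominators in Cases~3 and 4 must be handled entirely through $\epsilon$; this forces me to know $\gamma_1$ — hence $\sigma_1(\tilde{M})$ — up to constants rather than merely as an $O_P$ estimate. Keeping all bounds uniform over the $O(K)$ indices $i>p$, so the conclusion holds for the entire tail at once, and then checking that the constraint $\frac{K^{\delta}}{\sqrt{N}}=o(1)$ is exactly what lets $\epsilon$ dominate the genuine noise pivots (a weaker requirement than the $\frac{K^{1+\delta}}{\sqrt{N}}=o(1)$ used in Theorem \ref{thm:A}), is where the real work lies; the remainder is just collecting powers of $K$ and $N$.
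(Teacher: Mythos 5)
Your proposal follows essentially the same route as the paper's proof: bound each pivot $\gamma_{i}+\epsilon$ above and below using the Hybrid-III inequality (\ref{eq:L1-1}), the singular-value asymptotics of Lemma \ref{lem:lemma 6}, the perturbation bound $\sigma_{p+1}(\tilde{M})\leq\Vert\Delta M\Vert_{2}$ for the tail indices, and the two-sided bracket on $\epsilon=\gamma_{1}/\sqrt{KN}$, and then obtain each $r_{i}$ as the ratio of the appropriate upper and lower bounds. Your one refinement --- controlling $\gamma_{2}$ by $\Vert\tilde{R}_{22}^{(1)}\Vert_{2}\leq\sigma_{1}(\tilde{M})$ via interlacing instead of by the generic, $\sqrt{K}$-lossy upper bound of (\ref{eq:L1-1}) --- is actually needed to recover the stated $O(1/\sqrt{K})$ lower bound of Case 1, which the paper's own intermediate estimate (\ref{eq:thm3_3a}) would miss by a factor of $\sqrt{K}$.
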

\begin{proof}
The rate of convergences for $\gamma_{i}+\epsilon$ for different
$i$ is derived under the constraint $\frac{K^{\delta}}{\sqrt{N}}=o(1)$
as follows : 

For $i=1$, 
\[
\frac{\sigma_{1}\left(\tilde{M}\right)}{\sqrt{mK}}\leq\gamma_{1}\leq\sigma_{1}\left(\tilde{M}\right)
\]
Thus from (\ref{eq:l5_3b}),
\begin{eqnarray*}
\mathcal{UB}\left(\gamma_{1}\right) & = & O_{P}\left(\max\left(\kappa_{max},\,K^{1-\delta}\right)\right)\\
\mathcal{LB}\left(\gamma_{2}\right) & = & O_{P}\left(\frac{O_{P}\left(\max\left(\kappa_{max},\,K^{1-\delta}\right)\right)}{O\left(\sqrt{K}\right)}\right)
\end{eqnarray*}
 Since, $\epsilon=\frac{\gamma_{1}}{\sqrt{KN}}$ ,
\begin{eqnarray*}
\mathcal{UB}\left(\epsilon\right) & = & \frac{O_{P}\left(\max\left(\kappa_{max},\,K^{1-\delta}\right)\right)}{\sqrt{KN}}\\
\mathcal{LB}\left(\epsilon\right) & = & \frac{O_{P}\left(\max\left(\kappa_{max},\,K^{1-\delta}\right)\right)}{O\left(K\sqrt{N}\right)}
\end{eqnarray*}
 Thus,
\begin{eqnarray}
\mathcal{UB}\left(\gamma_{1}+\epsilon\right) & = & O_{P}\left(\max\left(\kappa_{max},\,K^{1-\delta}\right)\right)\nonumber \\
\mathcal{LB}\left(\gamma_{1}+\epsilon\right) & = & \frac{O_{P}\left(\max\left(\kappa_{max},\,K^{1-\delta}\right)\right)}{O\left(\sqrt{K}\right)}\label{eq:thm3_1}
\end{eqnarray}
For $p>i>2$, from (\ref{eq:L1-1}) and (\ref{eq:l5_3}),
\begin{eqnarray}
\mathcal{UB}\left(\left\{ \gamma_{i}+\epsilon\right\} _{i=2}^{p}\right) & = & \begin{cases}
O_{P}\left(K^{1.5-\delta}\right), & \forall\kappa_{max}=o\left(K^{1-\delta}\right)\\
O_{P}\left(\kappa_{min}\sqrt{K}\right), & \forall o\left(\kappa_{min}\right)=K^{1-\delta}
\end{cases}\nonumber \\
\mathcal{LB}\left(\left\{ \gamma_{i}+\epsilon\right\} _{i=2}^{p}\right) & = & \begin{cases}
\frac{O_{P}\left(K^{1-\delta}\right)}{O\left(\sqrt{K}\right)}, & \forall\kappa_{max}=o\left(K^{1-\delta}\right)\\
\frac{O_{P}\left(\kappa_{min}\right)}{O\left(\sqrt{K}\right)}, & \forall o\left(\kappa_{min}\right)=K^{1-\delta}
\end{cases}\label{eq:thm3_3a}
\end{eqnarray}
For $i>p$, from (\ref{eq:L1-1}) and (\ref{eq:l5_2}),
\begin{eqnarray*}
\frac{\sigma_{p+1}\left(\tilde{M}\right)}{\sqrt{mK-p}} & \leq & \left\{ \gamma_{i}\right\} _{i>p}\leq\sigma_{p+1}\left(\tilde{M}\right)\sqrt{mK-p},\\
0 & \leq & \left\{ \gamma_{i}\right\} _{i>p}\leq\left\Vert \Delta M\right\Vert _{2}\sqrt{mK-p}.
\end{eqnarray*}
thus,
\begin{eqnarray}
\mathcal{UB}\left(\left\{ \gamma_{i}+\epsilon\right\} _{i>p+1}\right) & = & O_{P}\left(K^{\frac{3}{2}}N^{-\frac{1}{2}}\right)\nonumber \\
\mathcal{LB}\left(\left\{ \gamma_{i}+\epsilon\right\} _{i>p+1}\right) & = & \mathcal{LB}\left(\epsilon\right)=\frac{O_{P}\left(\max\left(\kappa_{max},\,K^{1-\delta}\right)\right)}{O\left(K\sqrt{N}\right)}\label{eq:thm3_4}
\end{eqnarray}

Let $\mathcal{U}\left(.\right)$ and $\mathcal{L}\left(.\right)$
denote upper-bound and lower-bound respectively. The ratio (\ref{eq:the_ratio}),
is given by, 
\begin{eqnarray*}
\mathcal{UB}\left(r_{i}\right) & = & \frac{\mathcal{UB}\left(\gamma_{i}+\epsilon\right)}{\mathcal{LB}\left(\gamma_{i+1}+\epsilon\right)}.\\
\mathcal{LB}\left(r_{i}\right) & = & \frac{\mathcal{LB}\left(\gamma_{i}+\epsilon\right)}{\mathcal{UB}\left(\gamma_{i+1}+\epsilon\right)}
\end{eqnarray*}

For $i=1$, From (\ref{eq:thm3_1}), and (\ref{eq:thm3_3a}) we get
(\ref{eq:Theorem3_case1}).

For $p>i>1$, from \ref{eq:thm3_3a}, we get (\ref{eq:Theorem3_case2}).

For $i=p$, 
\[
\frac{\mathcal{LB}\left(\gamma_{p}+\epsilon\right)}{\mathcal{UB}\left(\gamma_{p+1}+\epsilon\right)}\leq r_{p}\leq\frac{\mathcal{UB}\left(\gamma_{p}+\epsilon\right)}{\mathcal{LB}\left(\gamma_{p+1}+\epsilon\right)}
\]
Thus, from (\ref{eq:thm3_3a}) and (\ref{eq:thm3_4}), we get (\ref{eq:Theorem3_case3})
For $i>p,$
\[
\frac{\mathcal{LB}\left(\left\{ \gamma_{i}+\epsilon\right\} _{i>p}\right)}{\mathcal{UB}\left(\left\{ \gamma_{i}+\epsilon\right\} _{i>p}\right)}\leq\left\{ r_{i}\right\} _{i>p}\leq\frac{\mathcal{U}B\left(\left\{ \gamma_{i}+\epsilon\right\} _{i>p}\right)}{\mathcal{L}B\left(\left\{ \gamma_{i}+\epsilon\right\} _{i>p}\right)}
\]
Thus using (\ref{eq:thm3_4}), leads to (\ref{eq:Theorem3_case4})
.
\end{proof}
As per Theorem (\ref{thm:B}), the ratio $r$ will be maximized at
$\hat{p}=p$ with high probability if the lower bounds of \emph{Case3}
grows faster than the upper bound of \emph{Case1, Case2} and \emph{Case4}.
This is satisfied when the rate of increase of $K,\,N\rightarrow\infty$
is constrained by, 
\begin{equation}
\frac{K^{\frac{5}{2}+2\delta}}{N^{\frac{1}{2}}}=o(1).\label{eq:model_condition}
\end{equation}

Note that the QR decomposition based factor modeling is better than
the EVD based approach as the term, $\epsilon=\frac{\gamma_{1}}{\sqrt{KN}}$,
ensures that in the maximization of the ratio $r_{i}$, $i=p$ is
a peak for $r_{i}$ whereas $r_{i}$ at $i=p$ is just a knee point
in the EVD based model order determination.

\chapter{\label{chap:Perturbation-Analysis}Perturbation Analysis of $\hat{Q}$}

In this Chapter, $\tilde{M}$ as defined in (\ref{eq:M_perturbed-1})
is treated as the perturbed version of $M$ (\ref{eq:M_def}). Let
\begin{eqnarray}
\Delta M & \triangleq & \tilde{M}-M\nonumber \\
 & \triangleq & \left[\begin{array}{cccc}
\Delta\Sigma_{yy}(1) & \Delta\Sigma_{yy}(2) & \ldots & \Delta\Sigma_{yy}(m)\end{array}\right]\label{eq:Delta_m}
\end{eqnarray}
and $\Delta\Sigma_{yy}(l)$ denote the perturbation in $M$ and $\Sigma_{yy}(l)$
respectively. Also, $\hat{Q}$ got from $\tilde{M}\Pi$ is taken as
the perturbed version of $Q$ got from $M\Pi$, where $\Pi$ is any
permutation matrix. Assumption \emph{(A5) }implies that $\tilde{M\Pi}$
converges to $M\Pi$ as $N\rightarrow\infty$ (for a fixed $K$),
refer to \emph{Theorem 27.4} in \cite{billingsley2008probability}.
Thus $\hat{Q}$ should go to $Q$ as $N\rightarrow\infty.$ Therefore,
the search is for a permutation $\Pi$ that renders $\left\Vert \hat{Q}-Q\right\Vert $
reasonably small for a finite $N$. As mentioned earlier, here a unique
$\Pi$ is determined using\emph{ }\noun{Hybrid-I}\emph{ RRQR }algorithm.
This choice is justified in \emph{Section }\ref{sec:Perturbation-sec1}.
In \emph{Section \ref{sec:Convergence-sec2}, }the convergence rate
of $\left\Vert \hat{Q}-Q\right\Vert _{F}$ for the proposed algorithm
along with a measure of the convergence rate of factors, $\hat{f}_{n}$
is presented. 

\section{\label{sec:Perturbation-sec1}Perturbation analysis of QR decomposition:}

Let, $\tilde{M}_{p}$ denote the first $p$ columns selected from
$\tilde{M}\Pi$ as mentioned in (\ref{eq:rrqr1-1}) and $M_{p}$,
$\Delta M_{p}$ denote the first $p$ columns selected from $M\Pi$
and $\Delta M\Pi$ respectively. Thus, 
\begin{equation}
\tilde{M}_{p}=M_{p}+\Delta M_{p}.\label{eq:M_p_defn}
\end{equation}
Consider the following two lemmas taken from \cite{chang1997perturbation}: 
\begin{lem}
\label{lem:peturbation_condition}Let $M_{p}\in\mathbb{R}^{K\times p}$
be of full column rank, $p$, then: $M_{p}+\Delta M_{p}$ is also
of rank $p$ if, 
\begin{equation}
\sigma_{min}(M_{p})>\sigma_{max}(P_{1}\Delta M_{p})\label{eq:l1_1}
\end{equation}
where, $P_{1}$ is the projection matrix that projects $\Delta M_{p}$
to $\mathcal{R}(M_{p})$.
\end{lem}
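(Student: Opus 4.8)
The plan is to show that, under the stated hypothesis, the perturbed matrix $M_{p}+\Delta M_{p}$ has trivial null space, which for a $K\times p$ matrix with $K\geq p$ is equivalent to full column rank $p$; concretely, I would bound $\sigma_{min}(M_{p}+\Delta M_{p})$ away from zero. The guiding observation is that the component of $\Delta M_{p}$ lying outside $\mathcal{R}(M_{p})$ cannot by itself create a rank deficiency, so only the in-range component $P_{1}\Delta M_{p}$ is dangerous. This is exactly why that term, rather than $\Delta M_{p}$ itself, enters the condition, and isolating it is the heart of the argument.

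First I would introduce the complementary orthogonal projector $P_{2}\triangleq I-P_{1}$ onto $\mathcal{R}(M_{p})^{\perp}$ and split the perturbation as $\Delta M_{p}=P_{1}\Delta M_{p}+P_{2}\Delta M_{p}$. Then, for an arbitrary unit vector $x\in\mathbb{R}^{p}$, I would note that $(M_{p}+P_{1}\Delta M_{p})x$ lies in $\mathcal{R}(M_{p})$ while $P_{2}\Delta M_{p}x$ is orthogonal to it, so the Pythagorean identity yields
\[
\left\Vert (M_{p}+\Delta M_{p})x\right\Vert _{2}^{2}=\left\Vert (M_{p}+P_{1}\Delta M_{p})x\right\Vert _{2}^{2}+\left\Vert P_{2}\Delta M_{p}x\right\Vert _{2}^{2}\geq\left\Vert (M_{p}+P_{1}\Delta M_{p})x\right\Vert _{2}^{2}.
\]
This is the step that discards the harmless out-of-range part and leaves only $P_{1}\Delta M_{p}$ in play.

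Next I would apply the reverse triangle inequality together with the singular-value estimates $\left\Vert M_{p}x\right\Vert _{2}\geq\sigma_{min}(M_{p})$ and $\left\Vert P_{1}\Delta M_{p}x\right\Vert _{2}\leq\sigma_{max}(P_{1}\Delta M_{p})$, both valid for unit $x$ since $M_{p}$ has full column rank, to obtain
\[
\left\Vert (M_{p}+\Delta M_{p})x\right\Vert _{2}\geq\sigma_{min}(M_{p})-\sigma_{max}(P_{1}\Delta M_{p}).
\]
The hypothesis $\sigma_{min}(M_{p})>\sigma_{max}(P_{1}\Delta M_{p})$ makes the right-hand side strictly positive, so $(M_{p}+\Delta M_{p})x\neq0$ for every $x\neq0$; minimizing over unit $x$ gives $\sigma_{min}(M_{p}+\Delta M_{p})>0$, hence rank $p$. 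There is no real computational obstacle here; the only subtlety is the conceptual one of setting up the orthogonal splitting correctly, since it is precisely the Pythagorean cancellation of $P_{2}\Delta M_{p}$ that sharpens the crude condition $\sigma_{min}(M_{p})>\sigma_{max}(\Delta M_{p})$ into the stated one involving only the in-range perturbation.
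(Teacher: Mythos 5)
Your proof is correct. The paper itself supplies no argument for this lemma --- it simply defers to the cited reference \cite{chang1997perturbation} --- so there is no in-paper proof to compare against; what you have written is a self-contained replacement. Your route is the natural elementary one: split $\Delta M_{p}=P_{1}\Delta M_{p}+(I-P_{1})\Delta M_{p}$, use the fact that $(M_{p}+P_{1}\Delta M_{p})x$ and $(I-P_{1})\Delta M_{p}x$ are orthogonal to drop the out-of-range component via the Pythagorean identity, and then lower-bound $\left\Vert (M_{p}+P_{1}\Delta M_{p})x\right\Vert _{2}$ by $\sigma_{min}(M_{p})-\sigma_{max}(P_{1}\Delta M_{p})>0$ for every unit $x$, which forces $\sigma_{min}(M_{p}+\Delta M_{p})>0$ and hence full column rank. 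Each step checks out: the Pythagorean decomposition is legitimate because $P_{1}$ is the orthogonal projector onto $\mathcal{R}(M_{p})$, the bound $\left\Vert M_{p}x\right\Vert _{2}\geq\sigma_{min}(M_{p})$ is exactly the variational characterization of the smallest singular value of a tall full-rank matrix, and the strict inequality in the hypothesis is what rules out a nonzero null vector. The only stylistic quibble is that $\left\Vert M_{p}x\right\Vert _{2}\geq\sigma_{min}(M_{p})$ holds for any $K\times p$ matrix; full column rank is needed only to make that bound strictly positive, not to make it valid. What your argument buys over the paper's bare citation is transparency: it makes visible why the condition involves only the in-range component $P_{1}\Delta M_{p}$ rather than the cruder $\sigma_{max}(\Delta M_{p})$, namely that the orthogonal part of the perturbation can only increase $\left\Vert (M_{p}+\Delta M_{p})x\right\Vert _{2}$ and so can never create a rank deficiency.
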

\begin{proof}
Refer to \cite{chang1997perturbation} for a proof.
\end{proof}
\begin{lem}
\label{lem:peturbation_quantity-1}Let $M_{p}\in\mathbb{R}^{K\times p}$
be a matrix of rank $p$, with the $QR$ decomposition $M_{p}=Q_{1}R$
and $\Delta M_{p}$ be the perturbation on $M_{p}$ such that $M_{p}+\Delta M_{p}$
is also of rank $p$. $M_{p}+\Delta M_{p}$ has unique $QR$ factorization:
\begin{eqnarray*}
M_{p}+\Delta M_{p} & = & \left(Q_{1}+\Delta Q_{1}\right)\left(R+\Delta R\right)
\end{eqnarray*}
where, 
\begin{eqnarray}
\left\Vert \Delta Q_{1}\right\Vert _{F} & \leq & \sqrt{2}\frac{\left\Vert \Delta M_{p}\right\Vert _{F}}{\sigma_{min}(M_{p})}+\mathcal{O}\left(\epsilon^{2}\right)\label{eq:l2_1-2}
\end{eqnarray}
and 
\[
\epsilon=\frac{\left\Vert \Delta M_{p}\right\Vert _{F}}{\left\Vert M_{p}\right\Vert _{2}}.
\]
\end{lem}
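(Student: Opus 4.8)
The plan is to treat this as a first-order (linearized) perturbation analysis of the thin QR factorization, isolating the component of $\Delta Q_1$ lying in $\mathcal{R}(Q_1)$ from the component lying in its orthogonal complement. First I would note that, since $M_p = Q_1 R$ has full column rank $p$ and $M_p + \Delta M_p$ also has rank $p$ (which is exactly the content of Lemma \ref{lem:peturbation_condition}), both matrices admit unique thin QR factorizations under the positive-diagonal convention, so that $\Delta Q_1 := \tilde{Q}_1 - Q_1$ and $\Delta R := \tilde{R} - R$ are well defined. Writing out $M_p + \Delta M_p = (Q_1 + \Delta Q_1)(R + \Delta R)$ and cancelling $M_p = Q_1 R$ gives the exact identity $\Delta M_p = \Delta Q_1 R + Q_1 \Delta R + \Delta Q_1 \Delta R$; the last term is quadratic in the perturbation and will be absorbed into the $O(\epsilon^2)$ remainder, leaving the linearized relation $\Delta M_p \approx \Delta Q_1 R + Q_1 \Delta R$.

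Next I would exploit orthonormality. Expanding $(Q_1 + \Delta Q_1)^\top (Q_1 + \Delta Q_1) = I$ and using $Q_1^\top Q_1 = I$ shows that $A := Q_1^\top \Delta Q_1$ satisfies $A + A^\top = -\Delta Q_1^\top \Delta Q_1 = O(\epsilon^2)$, so $A$ is skew-symmetric to first order. Completing $Q_1$ to a full orthogonal matrix $[\,Q_1\ \ Q_1^\perp\,]$, I would decompose $\Delta Q_1 = Q_1 A + Q_1^\perp B$ with $B := (Q_1^\perp)^\top \Delta Q_1$; orthogonality of the two blocks then yields the Pythagorean identity $\|\Delta Q_1\|_F^2 = \|A\|_F^2 + \|B\|_F^2$, which reduces the problem to bounding $A$ and $B$ separately.

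I would then project the linearized equation onto the two subspaces. Left-multiplying by $(Q_1^\perp)^\top$ annihilates the $Q_1 \Delta R$ term and gives $B = (Q_1^\perp)^\top \Delta M_p\, R^{-1}$ to first order, whence $\|B\|_F \le \|\Delta M_p\|_F \|R^{-1}\|_2 = \|\Delta M_p\|_F / \sigma_{min}(M_p)$, using $\sigma_{min}(R) = \sigma_{min}(M_p)$. Left-multiplying by $Q_1^\top$ gives $Q_1^\top \Delta M_p = A R + \Delta R$; setting $G := Q_1^\top \Delta M_p\, R^{-1}$ recasts this as $G = A + \Delta R\, R^{-1}$, where $\Delta R\, R^{-1}$ is upper triangular and $A$ is skew-symmetric. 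Splitting this identity into its strictly-lower, diagonal, and strictly-upper triangular parts lets me solve entrywise: the strictly-lower part of $A$ must equal the strictly-lower part of $G$, and skew-symmetry fixes the diagonal (zero) and the upper part, giving $\|A\|_F^2 = 2\,\|\operatorname{stril}(G)\|_F^2$, where $\operatorname{stril}(\cdot)$ denotes the strictly-lower-triangular part.

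Finally I would combine the two bounds. Since $[\,Q_1\ \ Q_1^\perp\,]$ is orthogonal, $\|G\|_F^2 + \|B\|_F^2 = \|\Delta M_p\, R^{-1}\|_F^2 \le \|\Delta M_p\|_F^2 / \sigma_{min}(M_p)^2$, and therefore
\[
\|\Delta Q_1\|_F^2 = 2\,\|\operatorname{stril}(G)\|_F^2 + \|B\|_F^2 \le 2\bigl(\|G\|_F^2 + \|B\|_F^2\bigr) \le \frac{2\,\|\Delta M_p\|_F^2}{\sigma_{min}(M_p)^2},
\]
which on taking square roots is exactly (\ref{eq:l2_1-2}), with the discarded quadratic terms accounting for the $O(\epsilon^2)$ correction. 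I expect the main obstacle to be the step in which $A$ is disentangled from the upper-triangular factor $\Delta R\, R^{-1}$: the skew-symmetric/upper-triangular splitting of $G = A + \Delta R\, R^{-1}$ is precisely what produces the characteristic $\sqrt{2}$ constant, and the bookkeeping that certifies the neglected cross terms are genuinely $O(\epsilon^2)$ (and not merely $O(\epsilon)$ when $\sigma_{min}(M_p)$ is small) relies on the full-rank margin of Lemma \ref{lem:peturbation_condition} being quantitatively controlled.
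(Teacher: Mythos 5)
Your argument is sound, but it is worth noting that the paper itself contains no proof of this lemma: it simply defers to \cite{chang1997perturbation}. What you have written is, in effect, a reconstruction of the standard first-order perturbation analysis that underlies that reference, and it is correct. The key steps all check out: cancelling $M_{p}=Q_{1}R$ and dropping the quadratic term gives $\Delta M_{p}\approx\Delta Q_{1}R+Q_{1}\Delta R$; the decomposition $\Delta Q_{1}=Q_{1}A+Q_{1}^{\perp}B$ with $A$ skew-symmetric to first order is forced by orthonormality; projecting onto $\mathcal{R}(Q_{1})^{\perp}$ gives $\left\Vert B\right\Vert _{F}\leq\left\Vert \Delta M_{p}\right\Vert _{F}/\sigma_{min}(M_{p})$; and the splitting of $G=A+\Delta R\,R^{-1}$ into skew-symmetric and upper-triangular parts correctly yields $\left\Vert A\right\Vert _{F}^{2}=2\left\Vert \mathrm{stril}(G)\right\Vert _{F}^{2}$, which is exactly where the $\sqrt{2}$ comes from. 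The final assembly $\left\Vert \Delta Q_{1}\right\Vert _{F}^{2}\leq2\left(\left\Vert G\right\Vert _{F}^{2}+\left\Vert B\right\Vert _{F}^{2}\right)=2\left\Vert \Delta M_{p}R^{-1}\right\Vert _{F}^{2}$ is also right. The one place your sketch stops short of a complete proof is the treatment of the remainder: to certify rigorously that the discarded terms are $\mathcal{O}(\epsilon^{2})$ with $\epsilon=\left\Vert \Delta M_{p}\right\Vert _{F}/\left\Vert M_{p}\right\Vert _{2}$, one needs the QR factors to vary smoothly along the path $M_{p}+t\Delta M_{p}$, $t\in[0,1]$ (guaranteed by the rank condition of Lemma \ref{lem:peturbation_condition} holding uniformly on the path), and the constant hidden in the $\mathcal{O}(\epsilon^{2})$ term then involves the condition number $\left\Vert M_{p}\right\Vert _{2}/\sigma_{min}(M_{p})$ rather than being absolute --- a point you correctly flag but do not resolve. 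This is precisely the bookkeeping that the cited reference carries out; for the purposes of this thesis your derivation supplies more than the paper does.
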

\begin{proof}
Refer to \cite{chang1997perturbation} for a proof.
\end{proof}
Note that in the current context $\left\Vert \Delta Q_{1}\right\Vert _{F}=\left\Vert \hat{Q}-Q\right\Vert _{F}.$
Lemmas 1 and 2 suggest choosing a permutation matrix $\Pi$ such that
$\sigma_{min}\left(M_{p}\right)$ is maximized. Maximizing $\sigma_{min}(M_{p})$
is equivalent to maximizing $\sigma_{min}\left(\tilde{M}_{p}\right)$
as 
\[
\sigma_{min}\left(M_{p}\right)\geq\sigma_{min}\left(\tilde{M}_{p}\right)-\sigma_{max}\left(\Delta M_{p}\right).
\]
\noun{Hybrid-I} RRQR algorithm \cite{chandrasekaran1994rank} selects
a $\Pi$ by trying\footnote{To get a $\Pi$ that maximizes $\sigma_{min}\left(\tilde{M}_{p}\right)$
is a NP-hard problem and applying a brute force method would require
finding $\sigma_{min}\left(\tilde{M}_{p}\right)$ for $^{Km}C_{p}$
matrices where $Km$ is the total number of columns in $\tilde{M}$.
\noun{Hybrid-I} RRQR is a computationally simpler technique that generates
a permutation $\Pi$ which satisfies (\ref{eq:thm4_1})} to maximize $\sigma_{min}\left(\tilde{M}_{p}\right)$. The effect
of using \noun{Hybrid-I} algorithm of RRQR is presented in \emph{Theorem
4}
\begin{thm}
\label{lem:rrqr}Let $\tilde{M}\in\mathbb{R}^{K\times mK}$ . Then
the \noun{Hybrid-I} RRQR algorithm could be used to select a $\Pi$
such that the first $'p'$ columns from $\tilde{M}\Pi$, $\tilde{M}_{p}$,
satisfies 
\begin{eqnarray}
\sigma_{p}\left(\tilde{M}\right)\geq\sigma_{min}(\tilde{M}_{p}) & \geq & \frac{\sigma_{p}(\tilde{M})}{\sqrt{p(K-p+1)}}\label{eq:thm4_1}
\end{eqnarray}
\end{thm}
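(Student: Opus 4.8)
The plan is to turn the column block $\tilde M_p$ into the leading triangular block $\tilde R_{11}$ of the Hybrid-I factorization, reducing the whole claim to a two-sided estimate on $\sigma_{\min}(\tilde R_{11})$. First I would fix the permutation $\Pi$ returned by Hybrid-I and segment the factorization \eqref{eq:rrqr1-1} as in \eqref{eq:QR_decompostion_2}, so that $\tilde M_p$, the first $p$ columns of $\tilde M\Pi$, satisfies $\tilde M_p=\tilde{\mathcal Q}_p\tilde R_{11}$ because the lower-left block of $\tilde R$ is zero. Since $\tilde{\mathcal Q}_p\in\mathbb R^{K\times p}$ has orthonormal columns, left multiplication by it preserves Euclidean norms on $\mathbb R^p$, hence $\sigma_{\min}(\tilde M_p)=\sigma_{\min}(\tilde R_{11})=\sigma_p(\tilde R_{11})$. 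This identity is the backbone of the argument; everything else is a bound on $\sigma_p(\tilde R_{11})$.

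For the upper inequality $\sigma_{\min}(\tilde M_p)\le\sigma_p(\tilde M)$, I would use that $\tilde R$ shares the singular values of $\tilde M$ (its $\tilde{\mathcal Q}$ factor is orthogonal and permutation leaves singular values unchanged) and that $\tilde R_{11}$ is the leading $p\times p$ submatrix of $\tilde R$. The interlacing property of singular values quoted in Chapter~\ref{chap:RRQR-Decomposition} then gives $\sigma_p(\tilde R_{11})\le\sigma_p(\tilde R)=\sigma_p(\tilde M)$, which is exactly the left half of \eqref{eq:thm4_1}.

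For the lower inequality I would invoke the Hybrid-I guarantee developed in Chapter~\ref{chap:RRQR-Decomposition}. Column pivoting forces the $p$-th pivot to dominate, $r_{pp}\ge\|\bar R_{22}\|_2/\sqrt{\,\cdot\,}$ as in \eqref{eq:rpp_ineq3}, while the Stewart-II sweep yields $r_{pp}\le\sigma_{\min}(\tilde R_{11})\sqrt p$ as in \eqref{eq:rpp_ineq2}; combining these with $\sigma_{\max}(\bar R_{22})\ge\sigma_p(\tilde M)$ from interlacing produces $\sigma_{\min}(\tilde R_{11})\ge\sigma_p(\tilde M)/\sqrt{p(K-p+1)}$. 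Chained with the identity $\sigma_{\min}(\tilde M_p)=\sigma_{\min}(\tilde R_{11})$ from the first step, this closes the estimate.

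The main obstacle is the precise constant $\sqrt{p(K-p+1)}$. A literal reading of the column-pivoting step \eqref{eq:rpp_ineq3} replaces $K-p+1$ by the number of available columns $mK-p+1$, since $\tilde M$ has $mK$ columns, which is a considerably weaker bound. The resolution is structural: $\tilde M$ has rank at most $K$, so $\tilde R_{22}$ has only $K-p$ nonzero rows and the redundant directions live in a $K$-dimensional space; exploiting this, the sharp Hybrid bound of Section~11 of \cite{chandrasekaran1994rank} replaces the column count $mK$ by the row dimension $K$. I would therefore lean on that refined estimate for the displayed constant, and I would also flag the norm-preservation step as genuinely needing $\tilde{\mathcal Q}_p$ to have orthonormal columns, which holds here because it is $K\times p$ with $p\le K$, rather than merely being a square orthogonal factor.
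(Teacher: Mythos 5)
Your argument is essentially the paper's own proof: the identity $\sigma_{\min}(\tilde M_p)=\sigma_{\min}(\tilde R_{11})$ from the thin QR factorization of the first $p$ columns of $\tilde M\Pi$, the interlacing property for the upper half of \eqref{eq:thm4_1}, and the Hybrid-I guarantee for the lower half. The only difference is that the paper simply cites \cite{chandrasekaran1994rank} for the Hybrid-I bound, whereas you re-derive it from \eqref{eq:rpp_ineq3} and \eqref{eq:rpp_ineq2}.

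One caveat concerns the constant. You correctly observe that the column-pivoting inequality, applied literally to the $K\times mK$ matrix $\tilde M$, produces $\sqrt{p(mK-p+1)}$ in the denominator rather than the stated $\sqrt{p(K-p+1)}$. However, the repair you propose --- replacing the column count by the row dimension on the grounds that $\tilde R_{22}$ has at most $K-p$ nonzero rows --- does not go through: the averaging step $\max_j\left\Vert A(:,j)\right\Vert _2\geq\left\Vert A\right\Vert _2/\sqrt{n}$ genuinely requires $n$ to be the number of columns, as the rank-one example $A=\left[\begin{array}{cccc}1 & 1 & \cdots & 1\end{array}\right]$ shows (there $\max_j\left\Vert A(:,j)\right\Vert _2=1$ while $\left\Vert A\right\Vert _2=\sqrt{n}$, and the row count is $1$). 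The discrepancy originates in the paper's own statement, whose Chapter~\ref{chap:RRQR-Decomposition} derivation yields $\sqrt{p(mK-p+1)}$; since $m$ is a fixed constant, both versions are $O(\sqrt K)$ and the downstream asymptotics are unaffected, but the factor cannot be sharpened to $K-p+1$ by the rank argument you suggest.
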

\begin{proof}
Applying Hybrid I algorithm \cite{chandrasekaran1994rank} to $\tilde{M}$,
leads to a permutation matrix $\Pi$ given by (\ref{eq:rrqr1-1}),
where,
\begin{equation}
\sigma_{min}(\tilde{R}_{11})\geq\frac{\sigma_{p}(\tilde{R})}{\sqrt{p(K-p+1)}}\label{eq:l3_2}
\end{equation}
Refer to \cite{chandrasekaran1994rank} for a complete proof. Select
$\tilde{M}_{p}$ as the first $p$ columns of $\tilde{M}\Pi$. QR
decomposition of $\tilde{M}_{p}$ gives, 
\[
\tilde{M}_{p}=\mathcal{Q}_{p}\begin{array}{c}
\tilde{R}_{11}\end{array}.
\]
Hence, 
\begin{equation}
\sigma_{min}\left(\tilde{M}_{p}\right)=\sigma_{min}\left(\tilde{R}_{11}\right).\label{eq:l3_3}
\end{equation}
Interlacing property of singular values, \cite{golub2012matrix,li2012generalized},
ensures,
\begin{equation}
\sigma_{p}\left(\tilde{M}\right)\geq\sigma_{min}\left(\tilde{M}_{p}\right).\label{eq:l4_4-1}
\end{equation}
 Thus, (\ref{eq:l3_2}), (\ref{eq:l3_3}) and (\ref{eq:l4_4-1}) leads
to (\ref{eq:thm4_1}).
\end{proof}
Henceforth, the term RRQR is used to refer the \noun{Hybrid-I} RRQR
algorithm for simplicity. In the next subsection, convergence of $\hat{Q}$
to $Q$, obtained by using the RRQR on $\tilde{M}$ and $M$ respectively,
as $N,\,K\rightarrow\infty$ is investigated.

\section{\label{sec:Convergence-sec2}Convergence Analysis of $\hat{Q}$}

The rate of convergence of $\hat{Q}$ to $Q$ as $N,\,K\rightarrow\infty$
is summarized in the following theorem: 
\begin{thm}
Under the conditions (A1-A7), refer Chapter \ref{chap:Algorithm}
regarding assumptions, with the optimistic assumption that the upper
bound is achieved in (\ref{eq:thm4_1}) and with rate of increase
of $K$ and $N$ constrained by $\frac{K^{\delta}}{N}=o(1)$, the
following convergence rate for $\hat{Q}$ is got:
\begin{equation}
\left\Vert \hat{Q}-Q\right\Vert _{F}\asymp\begin{cases}
O_{P}\left(K^{\frac{\delta}{2}}N^{-\frac{1}{2}}\right), & \kappa_{max}=o\left(K^{1-\delta}\right)\\
O_{P}\left(\kappa_{min}^{-1}K^{1-\frac{\delta}{2}}N^{-\frac{1}{2}}\right), & K^{1-\delta}=o\left(\kappa_{min}\right)
\end{cases}\label{eq:conver_optimistic}
\end{equation}
and in the worst case scenario of (\ref{eq:thm4_1}), where the lower-bound
of (\ref{eq:thm4_1}) is taken, under the constraint \textup{$\frac{K^{1+\delta}}{N}=o(1)$,}
\begin{equation}
\left\Vert \hat{Q}-Q\right\Vert _{F}\asymp\begin{cases}
O_{P}\left(K^{\frac{1+\delta}{2}}N^{-\frac{1}{2}}\right), & \kappa_{max}=o\left(K^{1-\delta}\right)\\
O_{P}\left(\kappa_{min}^{-1}K^{\frac{3-\delta}{2}}N^{-\frac{1}{2}}\right), & K^{1-\delta}=o\left(\kappa_{min}\right)
\end{cases}\label{eq:conver-worstcase}
\end{equation}
\end{thm}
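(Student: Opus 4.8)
The plan is to reduce the whole statement to the first-order QR perturbation bound of Lemma \ref{lem:peturbation_quantity-1} and then supply sharp rates for the two quantities appearing in it. Writing $\tilde{M}_{p}=M_{p}+\Delta M_{p}$ as in \eqref{eq:M_p_defn}, I would first verify the hypothesis \eqref{eq:l1_1} of Lemma \ref{lem:peturbation_condition}, so that $M_{p}+\Delta M_{p}$ retains rank $p$ and the factorization $M_{p}+\Delta M_{p}=(Q_{1}+\Delta Q_{1})(R+\Delta R)$ is well defined; since $\sigma_{max}(P_{1}\Delta M_{p})\le\|\Delta M_{p}\|_{F}$, this reduces to $\|\Delta M_{p}\|_{F}=o_{P}(\sigma_{min}(M_{p}))$, which the growth constraints will guarantee. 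With $\|\hat{Q}-Q\|_{F}=\|\Delta Q_{1}\|_{F}$, Lemma \ref{lem:peturbation_quantity-1} then gives
\[
\|\hat{Q}-Q\|_{F}\le\sqrt{2}\,\frac{\|\Delta M_{p}\|_{F}}{\sigma_{min}(M_{p})}+\mathcal{O}(\epsilon^{2}),\qquad\epsilon=\frac{\|\Delta M_{p}\|_{F}}{\|M_{p}\|_{2}},
\]
so everything comes down to an upper bound on the numerator $\|\Delta M_{p}\|_{F}$ and a lower bound on the denominator $\sigma_{min}(M_{p})$.

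For the denominator I would pass from $M_{p}$ to $\tilde{M}_{p}$ through $\sigma_{min}(M_{p})\ge\sigma_{min}(\tilde{M}_{p})-\|\Delta M_{p}\|_{F}$ and then invoke the RRQR guarantee of Theorem \ref{lem:rrqr}: in the optimistic regime the upper bound in \eqref{eq:thm4_1} is taken, giving $\sigma_{min}(\tilde{M}_{p})\asymp\sigma_{p}(\tilde{M})$, while the worst case uses the lower bound, losing a factor $\sqrt{p(K-p+1)}\asymp\sqrt{K}$. The size of $\sigma_{p}(\tilde{M})$ is then read off from \eqref{eq:l5_3} of Lemma \ref{lem:lemma 6}: it is $\asymp_{P}K^{1-\delta}$ when $\kappa_{max}=o(K^{1-\delta})$ and $\asymp_{P}\kappa_{min}$ when $K^{1-\delta}=o(\kappa_{min})$. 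This produces the four denominators $K^{1-\delta},\ K^{1/2-\delta},\ \kappa_{min},\ \kappa_{min}/\sqrt{K}$, corresponding to the two scenarios crossed with the two factor-strength regimes.

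For the numerator I would split $\Delta M_{p}$, column by column, into the four additive blocks of $\Delta\Sigma_{yy}(l)$ coming from \eqref{eq:Introduction_1} and bound each restricted block using the rates of Lemma \ref{lem:lemma 5}. The delicate point is that restricting to the $p$ selected columns genuinely helps the full-noise block $\Delta\Sigma_{\varepsilon\varepsilon}(l)$ -- a $K\times p$ array of entries of size $O_{P}(N^{-1/2})$ contributes only $O_{P}(K^{1/2}N^{-1/2})$ -- but cannot be relied on to help the low-rank cross term $\Delta\Sigma_{\varepsilon f}(l)Q^{\top}$, whose mass may concentrate on exactly the strong columns that RRQR selects; that term must therefore be controlled by its full Frobenius norm $\|\Delta\Sigma_{f\varepsilon}(l)\|_{F}=O_{P}(K^{1-\delta/2}N^{-1/2})$ from \eqref{eq:l4_1} and \eqref{eq:l4_3}. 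Since $K^{1/2}\le K^{1-\delta/2}$ and the $\Delta\Sigma_{ff}$ block is of the still smaller order $K^{1-\delta}N^{-1/2}$, the binding contribution is the cross term, yielding $\|\Delta M_{p}\|_{F}=O_{P}(K^{1-\delta/2}N^{-1/2})$.

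Dividing this common numerator by each of the four denominators reproduces \eqref{eq:conver_optimistic} and \eqref{eq:conver-worstcase}. It then remains to verify the side conditions: $\|\Delta M_{p}\|_{F}=o_{P}(\sigma_{min}(M_{p}))$, needed both for the rank hypothesis \eqref{eq:l1_1} and for the $\tilde{M}_{p}\to M_{p}$ transfer, becomes $K^{1-\delta/2}N^{-1/2}=o(K^{1-\delta})$ optimistically and $=o(K^{1/2-\delta})$ in the worst case, i.e. exactly $K^{\delta}/N=o(1)$ and $K^{1+\delta}/N=o(1)$; under these same constraints $\epsilon\to0$, so $\mathcal{O}(\epsilon^{2})$ is dominated by the leading term. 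I expect the numerator step to be the main obstacle: the sharpness of the result hinges on arguing that the cross-covariance perturbation, rather than the larger aggregate noise perturbation $O_{P}(KN^{-1/2})$ of \eqref{eq:l5_2}, governs $\|\Delta M_{p}\|_{F}$ on the RRQR-selected columns, and on ensuring the several $O_{P}$ bounds hold simultaneously so that the deterministic perturbation lemmas of \cite{chang1997perturbation} can legitimately be applied.
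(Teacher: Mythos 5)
Your proposal follows essentially the same route as the paper's proof: reduce to the Chang--Paige QR perturbation bound of Lemma \ref{lem:peturbation_quantity-1}, lower-bound $\sigma_{min}(M_{p})$ via the Hybrid-I guarantee \eqref{eq:thm4_1} together with \eqref{eq:l5_3}, and upper-bound $\|\Delta M_{p}\|_{F}$ blockwise with the noise block restricted to $p$ columns giving $O_{P}(K^{1/2}N^{-1/2})$ and the cross term controlled by its full Frobenius norm $O_{P}(K^{1-\delta/2}N^{-1/2})$. The resulting rates and side constraints match the paper's, so no further comment is needed.
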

\begin{proof}
The perturbation on $Q$ is given by (\ref{eq:l2_1-2}). Let, $\tilde{M}_{p}$
denote the first $p$ columns selected from $\tilde{M}\Pi$ as mentioned
in (\ref{eq:rrqr1-1}) and $M_{p}$, $\Delta M_{p}$ denote the first
$p$ columns selected from $M\Pi$ and $\Delta M\Pi$ respectively.
Then, 
\begin{equation}
\tilde{M}_{p}=M_{p}+\Delta M_{p}.\label{eq:M_p_defn-1}
\end{equation}
According to (\ref{eq:l1_1}), The following equation needs to be
satisfied:
\[
\sigma_{min}\left(\tilde{M}_{p}\right)>\sigma_{max}\left(\Delta M_{p}\right).
\]
According to \emph{(\ref{eq:thm4_1}),} 
\begin{eqnarray}
\sigma_{p}(\tilde{M}) & \geq & \sigma_{min}\left(\tilde{M}_{p}\right)\geq\frac{\sigma_{p}(\tilde{M})}{\sqrt{p\left(K-p+1\right)}}\label{eq:proof1_2}
\end{eqnarray}
Thus according to (\ref{eq:l5_3}) and (\ref{eq:proof1_2}), The best
estimate of $\sigma_{p}\left(\tilde{M}_{p}\right)$ is given by, 
\begin{equation}
\sigma_{min}\left(\tilde{M}_{p}\right)\asymp_{P}\begin{cases}
K^{1-\delta} & ,if\:\kappa_{max}=o\left(K^{1-\delta}\right)\\
\kappa_{min} & ,if\:o\left(\kappa_{min}\right)=o\left(K^{1-\delta}\right)
\end{cases},\:\forall\frac{K^{\delta}}{\sqrt{N}}=o(1).\label{eq:proof11a}
\end{equation}
and in the worst case scenario, 

\begin{equation}
\sigma_{min}\left(\tilde{M}_{p}\right)\asymp_{P}\begin{cases}
\frac{K^{1-\delta}}{\sqrt{K}} & ,if\:\kappa_{max}=o\left(K^{1-\delta}\right)\\
\frac{\kappa_{min}}{\sqrt{K}} & ,if\:o\left(\kappa_{min}\right)=o\left(K^{1-\delta}\right)
\end{cases},\:\forall\frac{K^{\delta}}{\sqrt{N}}=o(1).\label{eq:proof11b}
\end{equation}
Note that $\Delta M_{p}$ are $p$ columns selected from $\Delta M\Pi$.
Using (\ref{eq:proof8-3}) and (\ref{eq:proof9}) it can be inferred
that $O\left(\left\Vert \Delta M\right\Vert _{F}\right)=O\left(\Delta\Sigma_{yy}(l)\right)$.
Let $\Delta\Sigma_{yy}^{\left(p\right)}(l)$ be $p$ columns selected
from $\Delta\Sigma_{yy}(l)$. Then, 
\begin{eqnarray}
\left\Vert \Delta M_{p}\right\Vert _{F} & \asymp & O\left(\Delta\Sigma_{yy}^{\left(p\right)}(l)\right)\nonumber \\
 & \leq & O\left(\left\Vert \Delta\Sigma_{f\varepsilon}^{(p)}(l)\right\Vert _{F}\right)+O\left(\left\Vert \Delta\Sigma_{ff}^{(p)}(l)\right\Vert _{F}\right)\ldots\label{eq:delta_mp1}\\
 & \ldots & +O\left(\left\Vert \Delta\Sigma_{\varepsilon f}^{(p)}(l)\right\Vert _{F}\right)+O\left(\left\Vert \Delta\Sigma_{\varepsilon\varepsilon}^{(p)}\right\Vert _{F}\right)\nonumber 
\end{eqnarray}
Note that $\left\Vert \Delta\Sigma_{f\varepsilon}^{(p)}(l)\right\Vert _{F}\leq\left\Vert \Delta\Sigma_{f\varepsilon}(l)\right\Vert _{F}$,
$\left\Vert \Delta\Sigma_{\varepsilon f}^{(p)}(l)\right\Vert _{F}\leq\left\Vert \Delta\Sigma_{\varepsilon f}(l)\right\Vert _{F}$
and $\left\Vert \Delta\Sigma_{ff}^{(p)}(l)\right\Vert _{F}\leq\left\Vert \Delta\Sigma_{ff}(l)\right\Vert _{F}$.
From the proof of (\ref{eq:l4_4}), it can be seen that $\Delta\Sigma_{\varepsilon\varepsilon}(l)$
is an uniform perturbation in all columns. Thus, let $\Delta\Sigma_{\varepsilon\varepsilon}^{(p)}$
be the perturbation in $p$ columns due to $\Delta\Sigma_{\varepsilon\varepsilon}(l)$
then 
\begin{eqnarray}
\left\Vert \Delta\Sigma_{\varepsilon\varepsilon}^{(p)}\right\Vert _{F} & \asymp_{P} & \left(Kp\right)^{\frac{1}{2}}N^{\frac{1}{2}}\nonumber \\
 & \asymp_{P} & K^{\frac{1}{2}}N^{\frac{1}{2}}.\label{eq:delta_mp2}
\end{eqnarray}
Thus substituting (\ref{eq:l4_1}), (\ref{eq:l4_2}), (\ref{eq:l4_3})
and (\ref{eq:delta_mp2}) in (\ref{eq:delta_mp1}) leads to 
\begin{eqnarray}
\left\Vert \Delta M_{p}\right\Vert _{F} & \leq & O_{P}\left(K^{1-\frac{\delta}{2}}N^{-\frac{1}{2}}\right)+O_{P}\left(K^{1-\delta}N^{-\frac{1}{2}}\right)+O_{P}\left(K^{\frac{1}{2}}N^{-\frac{1}{2}}\right)\nonumber \\
 & = & O_{P}\left(K^{1-\frac{\delta}{2}}N^{-\frac{1}{2}}\right).\label{eq:proof1_6}
\end{eqnarray}
Thus in the best case scenario (\ref{eq:proof11a}), under the constraint
$\frac{K^{\delta}}{\sqrt{N}}=o(1)$, \emph{(\ref{eq:l1_1})} is satisfied
with high probability as 
\[
o\left(\sigma_{min}\left(\tilde{M}_{p}\right)\right)=\left\Vert \Delta M_{p}\right\Vert _{F}.
\]
Similarly in the worst case scenario, from (\ref{eq:proof11b}) and
(\ref{eq:proof1_6}), it can be shown that under the condition $\frac{K^{\frac{1+\delta}{2}}}{\sqrt{N}}=o(1)$,
\emph{lemma 1} is satisfied with high probability.

In order to estimate $\left\Vert \Delta Q\right\Vert _{F}$ as given
by (\ref{eq:l2_1-2}), $\sigma_{min}\left(M_{p}\right)$ has to be
estimated. Note 
\begin{equation}
\sigma_{min}\left(\tilde{M}_{p}\right)+\sigma_{max}\left(\Delta M_{p}\right)\geq\sigma_{min}\left(M_{p}\right)\geq\sigma_{min}\left(\tilde{M}_{p}\right)-\sigma_{max}\left(\Delta M_{p}\right).\label{eq:minM_p}
\end{equation}
Now, $\sigma_{max}\left(\Delta M_{p}\right)\leq\left\Vert \Delta M_{p}\right\Vert _{F}$
thus substituting, (\ref{eq:proof11a}) and (\ref{eq:proof1_6}) in
(\ref{eq:minM_p}), for the optimistic scenario,
\begin{equation}
\sigma_{min}\left(M_{p}\right)\asymp_{P}\begin{cases}
K^{1-\delta} & ,if\:\kappa_{max}=o\left(K^{1-\delta}\right)\\
\kappa_{min} & ,if\:o\left(\kappa_{min}\right)=o\left(K^{1-\delta}\right)
\end{cases},\:\forall\frac{K^{\delta}}{N}=o(1).\label{eq:proof11a-1}
\end{equation}
and in the worst case scenario substituting (\ref{eq:proof11b}) and
(\ref{eq:proof1_6}) in (\ref{eq:minM_p}), 

\begin{equation}
\sigma_{min}\left(M_{p}\right)\asymp_{P}\begin{cases}
\frac{K^{1-\delta}}{\sqrt{K}} & ,if\:\kappa_{max}=o\left(K^{1-\delta}\right)\\
\frac{\kappa_{min}}{\sqrt{K}} & ,if\:o\left(\kappa_{min}\right)=o\left(K^{1-\delta}\right)
\end{cases},\:\forall\frac{K^{1+\delta}}{N}=o(1).\label{eq:proof11b-1}
\end{equation}
In the optimistic viewpoint, using (\ref{eq:proof11a-1}) and (\ref{eq:proof1_6})
in (\ref{eq:l2_1-2}) leads to (\ref{eq:conver_optimistic}). Similarly
for the worst case scenario, substituting (\ref{eq:proof1_6}) and
(\ref{eq:proof11b-1}) in (\ref{eq:l2_1-2}), leads to (\ref{eq:conver-worstcase}). 
\end{proof}
The following theorem determines the convergence rate of the estimated
factors. This is similar to \emph{Theorem 3 }of \cite{lam2011estimation}.
Let 
\begin{eqnarray}
Y & \triangleq & \left[\begin{array}{cccc}
y_{1} & y_{2} & \ldots & y_{N}\end{array}\right]\nonumber \\
F & \triangleq & \left[\begin{array}{cccc}
f_{1} & f_{2} & \ldots & f_{N}\end{array}\right]\label{eq:equation for F}\\
E & \triangleq & \left[\begin{array}{cccc}
\varepsilon_{1} & \varepsilon_{2} & \ldots & \varepsilon_{N}\end{array}\right].\nonumber 
\end{eqnarray}
 The convergence of the factors are got as a measure of the convergence
of the Root Mean Square Error (RMSE) given by, $\left(KN\right)^{-0.5}\left\Vert \hat{Q}\hat{F}-QF\right\Vert _{F}$,
where $\hat{F}$ is the estimated factors.
\begin{thm}
Under the conditions (A1-A7), refer Chapter \ref{chap:Algorithm}
regarding assumptions,

\begin{equation}
\left(KN\right)^{-0.5}\left\Vert \hat{Q}\hat{F}-QF\right\Vert _{F}\leq O_{P}\left(K^{-\frac{\delta}{2}}\left\Vert \hat{Q}-Q\right\Vert _{F}+K^{-\frac{1}{2}}\right)
\end{equation}
 
\end{thm}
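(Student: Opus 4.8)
The plan is to combine the QR convergence rate already established with the orthonormality of $\hat{Q}$. Stacking the model equation $y_{n}=Qf_{n}+\varepsilon_{n}$ columnwise gives $Y=QF+E$, and since the estimated factors are $\hat{F}=\hat{Q}^{\top}Y$, I would first substitute to obtain the decomposition
\[
\hat{Q}\hat{F}-QF=\hat{Q}\hat{Q}^{\top}(QF+E)-QF=-(I-\hat{Q}\hat{Q}^{\top})QF+\hat{Q}\hat{Q}^{\top}E.
\]
The two summands are a subspace-mismatch term, which would vanish if $\hat{Q}$ and $Q$ spanned the same range, and a captured-noise term; I would bound them separately and then divide by $(KN)^{1/2}$.

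For the mismatch term, the key identity is $(I-\hat{Q}\hat{Q}^{\top})\hat{Q}=0$ (as $\hat{Q}^{\top}\hat{Q}=I$), whence $(I-\hat{Q}\hat{Q}^{\top})Q=(I-\hat{Q}\hat{Q}^{\top})(Q-\hat{Q})$. Because $I-\hat{Q}\hat{Q}^{\top}$ is an orthogonal projector, $\|I-\hat{Q}\hat{Q}^{\top}\|_{2}\le1$, and using $\|ABF\|_{F}\le\|A\|_{2}\|B\|_{F}\|F\|_{2}$ (with $A=I-\hat{Q}\hat{Q}^{\top}$ and $B=Q-\hat{Q}$) I get
\[
\|(I-\hat{Q}\hat{Q}^{\top})QF\|_{F}\le\|\hat{Q}-Q\|_{F}\,\|F\|_{2}.
\]
Substituting the bound $\|F\|_{2}^{2}=O_{P}(K^{1-\delta}N)$ from \eqref{eq:l4_5} yields $O_{P}(K^{(1-\delta)/2}N^{1/2}\|\hat{Q}-Q\|_{F})$, which after division by $(KN)^{1/2}$ becomes exactly the $O_{P}(K^{-\delta/2}\|\hat{Q}-Q\|_{F})$ term of the claim.

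For the captured-noise term, orthonormality of $\hat{Q}$ gives $\|\hat{Q}\hat{Q}^{\top}E\|_{F}=\|\hat{Q}^{\top}E\|_{F}$, the energy of the noise projected onto the $p$-dimensional estimated factor space. Since $p$ is fixed and, by $(A4)$, the noise autocovariance $\Sigma_{\varepsilon\varepsilon}(0)$ has bounded operator norm as $K\to\infty$, each projected vector $\hat{Q}^{\top}\varepsilon_{n}$ has second moment $O(1)$, so that $\|\hat{Q}^{\top}E\|_{F}^{2}=\sum_{n=1}^{N}\|\hat{Q}^{\top}\varepsilon_{n}\|_{2}^{2}=O_{P}(N)$; division by $(KN)^{1/2}$ turns this into the $O_{P}(K^{-1/2})$ term, and adding the two contributions gives the stated bound.

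The step I expect to be the main obstacle is precisely the captured-noise term, because $\hat{Q}$ is itself a function of the sample autocovariances and hence of $E$, so the clean estimate $\|\hat{Q}^{\top}E\|_{F}=O_{P}(N^{1/2})$ is not automatic. I would make it rigorous by splitting $\hat{Q}^{\top}E=Q^{\top}E+(\hat{Q}-Q)^{\top}E$: the first summand carries the deterministic $Q$, giving $\mathcal{E}\|Q^{\top}E\|_{F}^{2}=\sum_{n}\operatorname{tr}(Q^{\top}\Sigma_{\varepsilon\varepsilon}(0)Q)=O(N)$ by $(A4)$, while the cross term obeys $\|(\hat{Q}-Q)^{\top}E\|_{F}\le\|\hat{Q}-Q\|_{F}\|E\|_{2}$ with $\|E\|_{2}=O_{P}(N^{1/2}+K^{1/2})$; under the rate constraints of the preceding theorem together with the convergence rate \eqref{eq:conver_optimistic}, this cross term is dominated by the two main contributions and is absorbed into them.
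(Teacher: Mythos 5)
Your proposal is correct and follows essentially the same route as the paper: the same decomposition into a subspace-mismatch term and noise terms (your split of $\hat{Q}^{\top}E$ into $Q^{\top}E+(\hat{Q}-Q)^{\top}E$ reproduces the paper's $K_{2}$ and $K_{3}$ exactly), the same use of $\left\Vert F\right\Vert _{2}^{2}=O_{P}\left(K^{1-\delta}N\right)$, and the same variance bound on $Q^{\top}E$ via $(A4)$. The only cosmetic difference is that you kill the mismatch term with $(I-\hat{Q}\hat{Q}^{\top})\hat{Q}=0$ while the paper uses $(I-QQ^{\top})QF=0$, and your explicit treatment of the cross term via $\left\Vert E\right\Vert _{2}$ is if anything more careful than the paper's one-line domination argument.
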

\begin{proof}
The RMSE error of the estimated $\hat{Y}=\hat{Q}\hat{F}$ is given
by, $KN^{-\frac{1}{2}}\left\Vert \hat{Q}\hat{F}-QF\right\Vert _{F}$
. Now, 
\begin{eqnarray*}
\left[\hat{Q}\hat{F}-QF\right] & = & \left[\hat{Q}\hat{Q}^{T}QF-\hat{Q}\hat{Q}^{T}E-QF\right]\\
\, & = & \left[\hat{Q}\hat{Q}^{T}-I\right]QF-\hat{Q}\left[\hat{Q}-Q\right]^{T}E+\hat{Q}Q^{T}E\\
 & = & K_{1}+K_{2}+K_{3}
\end{eqnarray*}
where, $K_{1}=\left[\hat{Q}\hat{Q}^{T}-I\right]QF$, $K_{2}=\hat{Q}\left[\hat{Q}-Q\right]^{T}E$,
$K_{3}=\hat{Q}Q^{T}E$. Now,
\begin{eqnarray*}
K_{1} & = & \left[\hat{Q}\hat{Q}^{T}-I+QQ^{T}-QQ^{T}\right]QF\\
\, & = & \left[\hat{Q}\hat{Q}^{T}-QQ^{T}\right]QF-\left[I-QQ^{T}\right]QF
\end{eqnarray*}
Note, $\left[I-QQ^{T}\right]$ is a projection matrix onto the null
space of $Q$ hence, $\left[I-QQ^{T}\right]QF=0$. Hence,
\begin{equation}
K_{1}=\left[\hat{Q}\hat{Q}^{T}-QQ^{T}\right]QF\label{eq:proof2_1}
\end{equation}
Now using $\hat{Q}=Q+\Delta Q$, 
\begin{eqnarray*}
\left[\hat{Q}\hat{Q}^{T}-QQ^{T}\right] & = & \left[\Delta Q\Delta Q^{\top}+Q\Delta Q^{\top}+\Delta QQ^{\top}\right]
\end{eqnarray*}
hence, 
\begin{eqnarray*}
\left\Vert \hat{Q}\hat{Q}^{T}-QQ^{T}\right\Vert _{F} & \leq & \left\Vert \Delta Q\right\Vert _{F}^{2}+\left\Vert Q\Delta Q^{\top}\right\Vert _{F}
\end{eqnarray*}
Thus, 
\[
\left\Vert K_{1}\right\Vert _{F}=O\left(\left\Vert \Delta Q\right\Vert _{F}\left\Vert F\right\Vert _{F}\right)
\]
Substituting (\ref{eq:l4_5}), 
\begin{eqnarray}
\left\Vert \hat{Q}\hat{Q}^{T}-QQ^{T}\right\Vert _{F} & \leq & O\left(\left\Vert \Delta Q\right\Vert _{F}\right)O\left(K^{\frac{1+\delta}{2}}N^{\frac{1}{2}}\right)\label{eq:proof2_2-1}
\end{eqnarray}
Now, 
\begin{eqnarray*}
\left\Vert K_{2}\right\Vert _{F} & = & \left\Vert \hat{Q}\left[\hat{Q}-Q\right]^{T}E\right\Vert _{F}\\
 & \leq & \left\Vert \Delta Q^{T}E\right\Vert _{F}\leq\left\Vert \Delta Q^{T}\right\Vert _{F}\left\Vert E\right\Vert _{2}
\end{eqnarray*}
also, 
\begin{eqnarray*}
\left\Vert K_{3}\right\Vert _{F} & = & \left\Vert \hat{Q}Q^{T}E\right\Vert _{F}\\
 & \leq & \left\Vert Q^{T}E\right\Vert _{F}\\
 & \leq & \left(\sum_{n=1}^{N}\sum_{j=1}^{j=p}\left(q_{j}^{T}\varepsilon_{n}\right)^{2}\right)^{0.5}
\end{eqnarray*}
Now consider the Random variable $q_{j}^{\top}\varepsilon_{n}$ where
$\mathcal{E}\left\{ q_{j}^{\top}\varepsilon_{n}\right\} =0$ and $var\left(q_{j}^{\top}\varepsilon_{n}\right)=q_{j}\Sigma_{\varepsilon\varepsilon}q_{j}^{\top}\leq\sigma_{max}\left(\Sigma_{\varepsilon\varepsilon}\right)=c<\infty$
where $c$ is a constant independent of $K,\,N$ according to assumption
\emph{(A4).} Thus, 
\begin{eqnarray*}
\left\Vert K_{3}\right\Vert _{F} & \leq & p^{\frac{1}{2}}N^{\frac{1}{2}}O_{P}(1).
\end{eqnarray*}
Since $\left\Vert \Delta Q^{T}\right\Vert _{F}=o_{P}(1)$ and $\left\Vert Q\right\Vert _{F}=\sqrt{p}$,
$\left\Vert K_{2}\right\Vert _{F}$ is dominated by $\left\Vert K_{3}\right\Vert _{F}$
in probability. Hence, $\left(KN\right)^{-\frac{1}{2}}\left\Vert \hat{Q}\hat{f}-Qf\right\Vert _{F}\leq O_{P}\left(K^{-\frac{\delta}{2}}\left\Vert \hat{Q}-Q\right\Vert _{F}+K^{-\frac{1}{2}}\right)$.
\end{proof}

\newpage{}

\vspace*{7.5cm}
\emph{~~~~~~~~~~~~~~~~~~~~~~~~~~~~~~~~~~~~~~~~~~~~~This
page is intentionally left blank}

\newpage{}

\chapter{\label{chap:Comparison-with-the}Comparison with the EVD Method}

In this Chapter, the proposed algorithm is compared against the EVD
based algorithm proposed in \cite{lam2011estimation} with regards
to the asymptotic properties derived in the previous Chapter.

The convergence of the estimate, $\hat{Q}^{EVD}$ of $Q$ got using
the EVD algorithm and the corresponding expressions for rate of convergence
are presented in \cite{lam2011estimation}. The following theorem
is taken from \cite{lam2011estimation}.
\begin{thm}
Under the conditions A2-A7 and (A8),

$(A8)$ ~~~~~~~~~There should be at least one $\Sigma_{xx}(l)$
of full rank $p$. 

and all the Eigen values of $S$ (\ref{Eqn10}) being distinct the
following convergence rates hold, 
\begin{eqnarray}
\left\Vert \hat{Q}^{EVD}-Q\right\Vert _{2} & \asymp & \begin{cases}
O_{P}\left(K^{\delta}N^{-\frac{1}{2}}\right), & \begin{array}{c}
\kappa_{max}=o\left(K^{1-\delta}\right)\\
K^{\delta}N^{-\frac{1}{2}}=o(1)
\end{array}\\
\; & \;\\
O_{P}\left(\kappa_{min}^{-2}\kappa_{max}KN^{-\frac{1}{2}}\right), & \begin{array}{c}
K^{1-\delta}=o\left(\kappa_{min}\right)\\
\kappa_{min}^{-2}\kappa_{max}KN^{-\frac{1}{2}}=o(1)
\end{array}
\end{cases}\label{eq:EVD_convergence}
\end{eqnarray}
\end{thm}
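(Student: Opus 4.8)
The plan is to treat $\hat{Q}^{EVD}$ as the leading-$p$ eigenvector matrix of the perturbed Gram matrix $\tilde{S}=\tilde{M}\tilde{M}^{\top}$ and to bound its deviation from the leading eigenvectors $Q$ of $S=MM^{\top}$ (\ref{Eqn10}) by a Davis--Kahan $\sin\Theta$ argument. First I would write the perturbation of the Gram matrix explicitly as
\[
\Delta S=\tilde{S}-S=\Delta M\,M^{\top}+M\,\Delta M^{\top}+\Delta M\,\Delta M^{\top},
\]
so that, by sub-multiplicativity of the spectral norm,
\[
\left\Vert \Delta S\right\Vert _{2}\leq 2\,\sigma_{1}(M)\left\Vert \Delta M\right\Vert _{2}+\left\Vert \Delta M\right\Vert _{2}^{2}.
\]
The orders of $\sigma_{1}(M)$ and $\left\Vert \Delta M\right\Vert _{2}$ are already available from \emph{Lemma} \ref{lem:lemma 6}, namely (\ref{eq:l5_1b}) and (\ref{eq:l5_2}). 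Substituting these and checking in each regime that the quadratic term $\left\Vert \Delta M\right\Vert _{2}^{2}=O_{P}(K^{2}N^{-1})$ is dominated by the cross term gives $\left\Vert \Delta S\right\Vert _{2}=O_{P}(K^{2-\delta}N^{-1/2})$ when $\kappa_{max}=o(K^{1-\delta})$ and $\left\Vert \Delta S\right\Vert _{2}=O_{P}(\kappa_{max}KN^{-1/2})$ when $K^{1-\delta}=o(\kappa_{min})$. The dominance of the cross term is precisely what the side conditions $K^{\delta}N^{-1/2}=o(1)$ and $\kappa_{min}^{-2}\kappa_{max}KN^{-1/2}=o(1)$ secure.

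The second ingredient is the eigengap. Because $S=QPP^{\top}Q^{\top}$ has rank $p$ (\ref{Eqn11d}), its $(p+1)^{th}$ eigenvalue is zero, so the relevant gap is simply $\lambda_{p}(S)-\lambda_{p+1}(S)=\lambda_{p}(S)=\sigma_{p}(M)^{2}$, whose order is read off from (\ref{eq:l5_1}) as $K^{2(1-\delta)}$ or $\kappa_{min}^{2}$ in the two regimes. The $\sin\Theta$ theorem then yields, up to the orthogonal rotation that fixes the unavoidable basis ambiguity of the eigenspace, the relation $\left\Vert \hat{Q}^{EVD}-Q\right\Vert _{2}\asymp \left\Vert \Delta S\right\Vert _{2}/\lambda_{p}(S)$. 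Dividing the two bounds for $\left\Vert \Delta S\right\Vert _{2}$ by the corresponding gap reproduces $K^{\delta}N^{-1/2}$ and $\kappa_{min}^{-2}\kappa_{max}KN^{-1/2}$, which is exactly (\ref{eq:EVD_convergence}).

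The main obstacle is the rigorous passage from the subspace angle to $\left\Vert \hat{Q}^{EVD}-Q\right\Vert _{2}$: Davis--Kahan controls the principal angles between $\mathcal{R}(Q)$ and $\mathcal{R}(\hat{Q}^{EVD})$, but the individual eigenvectors are only determined up to sign, and were any eigenvalues to coalesce they would be determined only up to rotation within a degenerate block, which is why the hypothesis that all eigenvalues of $S$ are distinct and assumption $(A8)$ are imposed. I would handle this by aligning $\hat{Q}^{EVD}$ to $Q$ with the optimal orthogonal factor from the polar decomposition of $Q^{\top}\hat{Q}^{EVD}$ and invoking the standard equivalence $\left\Vert \hat{Q}^{EVD}-Q\right\Vert_{2}\asymp\left\Vert \sin\Theta\right\Vert_{2}$, which is valid once the gap condition forces the angle to be $o(1)$. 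A secondary check is that Weyl's inequality, combined with the $\left\Vert \Delta S\right\Vert_{2}$ bound above, keeps the perturbed eigenvalues separated from $0$, so that the same gap governs the $\sin\Theta$ bound for $\tilde{S}$ as well.
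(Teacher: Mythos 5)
The paper does not actually prove this theorem: it is imported verbatim from \cite{lam2011estimation} and the ``proof'' is a pointer to that reference, so there is nothing internal to compare your argument against line by line. That said, your reconstruction is essentially the argument the cited work uses, and it is consistent with the proportionality $\left\Vert \hat{Q}-Q\right\Vert _{2}\propto\left\Vert \Delta\left(MM^{\top}\right)\right\Vert _{2}/\sigma_{p}\left(MM^{\top}\right)$ that this thesis itself invokes later in (\ref{eq:EVD_conv}) when analysing Example \ref{example 7.1}. Your arithmetic checks out against the paper's own lemmas: with $\sigma_{1}(M)\asymp\max\left(\kappa_{max},K^{1-\delta}\right)$ from (\ref{eq:l5_1b}) and $\left\Vert \Delta M\right\Vert _{2}=O_{P}\left(KN^{-\frac{1}{2}}\right)$ from (\ref{eq:l5_2}), the cross terms give $K^{2-\delta}N^{-\frac{1}{2}}$ and $\kappa_{max}KN^{-\frac{1}{2}}$ respectively, the quadratic term is absorbed exactly under the stated side conditions (in the second regime via $\kappa_{max}KN^{-\frac{1}{2}}=o\left(\kappa_{min}^{2}\right)\leq o\left(\kappa_{max}^{2}\right)$), and dividing by the gap $\lambda_{p}(S)=\sigma_{p}(M)^{2}$, whose order comes from (\ref{eq:l5_1}), reproduces (\ref{eq:EVD_convergence}). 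Two minor caveats: you only really establish upper bounds, so the two-sided $\asymp$ in the statement is inherited from the $\asymp_{P}$ form of the supporting lemmas rather than proved; and the passage from $\left\Vert \sin\Theta\right\Vert _{2}$ to $\left\Vert \hat{Q}^{EVD}-Q\right\Vert _{2}$ requires the sign/rotation alignment you describe, which is precisely what the distinct-eigenvalue hypothesis is there to license. Neither caveat is a gap relative to the paper, which asserts the same two-sided rates on the authority of \cite{lam2011estimation}.
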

\begin{proof}
Refer to \cite{lam2011estimation} for a proof. 
\end{proof}
In the best case scenario, (\ref{eq:conver_optimistic}), the ratio
of the convergence rates of the EVD algorithm vs. the proposed RRQR
algorithm is given by,
\begin{equation}
\frac{\left\Vert \hat{Q}^{\left(EVD\right)}-Q\right\Vert _{2}}{\left\Vert \hat{Q}^{\left(RRQR\right)}-Q\right\Vert _{F}}\asymp\begin{cases}
K^{\frac{\delta}{2}} & ,\kappa_{max}=o\left(K^{1-\delta}\right)\\
\frac{\kappa_{max}}{\kappa_{min}}K^{\frac{\delta}{2}} & ,o\left(\kappa_{min}\right)=K^{1-\delta}
\end{cases},\label{eq:bestcase_ratio}
\end{equation}
and in the worst case scenario, (\ref{eq:conver-worstcase}), the
ratio is given by, 
\begin{equation}
\frac{\left\Vert \hat{Q}^{\left(EVD\right)}-Q\right\Vert _{2}}{\left\Vert \hat{Q}^{\left(RRQR\right)}-Q\right\Vert _{F}}\asymp\begin{cases}
K^{-\left(\frac{1-\delta}{2}\right)} & ,\kappa_{max}=o\left(K^{1-\delta}\right)\\
\frac{\kappa_{max}}{\kappa_{min}\sqrt{K}} & ,o\left(\kappa_{min}\right)=K^{1-\delta}
\end{cases}.\label{eq:worstcase_ratio}
\end{equation}
It is evident from (\ref{eq:bestcase_ratio}), the proposed algorithm
performs better than the EVD algorithm in the best case scenario.
However in the worst case scenario (\ref{eq:worstcase_ratio}), it
is evident that EVD Algorithm would perform better. To probe this
comparison further, consider the following example:
\begin{example}
\label{example 7.1}Consider a data model as in (\ref{eq:1.2}), with
$p=2$ and the factors given by 
\begin{equation}
x_{n}\triangleq\left[\begin{array}{cc}
x_{n}^{(1)} & x_{n}^{(2)}\end{array}\right]^{\top}\label{eq:x_ndef-1}
\end{equation}
 where,
\begin{eqnarray}
x_{n}^{(1)} & = & \tilde{e}_{n}^{(1)}+\alpha_{1}\tilde{e}_{n-1}^{(1)},\nonumber \\
x_{n}^{(2)} & = & \tilde{e}_{n}^{(2)}+\alpha_{2}\tilde{e}_{n-2}^{(2)}\label{eq:x_n_factors-1}
\end{eqnarray}
with $\tilde{e}_{n}^{(1)}$ and $\tilde{e}_{n}^{(2)}$ being $N(0,1)$
iid process independent of the measurement noise, $\varepsilon_{n}\sim N\left(0,\,I\right)$.
The factor loading matrix $H=\left[\begin{array}{cc}
h_{1} & h_{2}\end{array}\right]$ is a $K\times2$ matrix, where $h_{1}$ and $h_{2}$ are orthogonal
with$\left\Vert h_{1}\right\Vert _{2}^{2}\asymp K^{1-\delta_{1}}$
and $\left\Vert h_{2}\right\Vert _{2}^{2}\asymp K^{1-\delta_{2}}$
where $0\leq\delta_{1}<\delta_{2}\leq1$. This is similar to the factor
strength assumption taken in \cite{C1}. 
\end{example}
Let the $QR$ decomposition of $H$ be given by, 
\[
H=\left[\begin{array}{cc}
Q_{1} & Q_{2}\end{array}\right]\left[\begin{array}{c}
R_{11}\\
0
\end{array}\right]
\]
where $Q_{1}$ is the $K\times2$ matrix that spans the column space
of $H$, $Q_{2}$ is the $K\times(K-2)$ matrix which spans the perpendicular
space of $H$ and 
\[
R_{11}=\left[\begin{array}{cc}
\left\Vert h_{1}\right\Vert _{2} & 0\\
0 & \left\Vert h_{2}\right\Vert _{2}
\end{array}\right].
\]
Note that, since $x_{n}$ is independent of $\varepsilon_{n}$, 
\[
\Sigma_{yy}(l)=Q_{1}R_{11}\Sigma_{xx}(l)R_{11}^{\top}Q_{1}^{\top}
\]
Thus,

\begin{eqnarray}
\Sigma_{yy}(1) & = & Q_{1}\left[\begin{array}{cc}
\left\Vert h_{1}\right\Vert _{2}^{2} & 0\\
0 & 0
\end{array}\right]Q_{1}^{\top}\label{eq:eg1_1-1}
\end{eqnarray}
\begin{eqnarray}
\Sigma_{yy}(2) & = & Q_{1}\left[\begin{array}{cc}
0 & 0\\
0 & \left\Vert h_{2}\right\Vert _{2}^{2}
\end{array}\right]Q_{1}^{\top}\label{eq:eg1_2-1}
\end{eqnarray}

Further more, it is assumed that $o\left(K^{1-\delta_{2}}\right)=KN^{-\frac{1}{2}}$.
It can be shown that 
\begin{eqnarray}
\left\Vert \hat{Q}^{EVD}-Q\right\Vert _{2} & \asymp & \frac{\left\Vert h_{1}\right\Vert _{2}^{2}}{\left\Vert h_{2}\right\Vert _{2}^{4}}O_{P}\left(KN^{-\frac{1}{2}}\right)\label{eq:The_convergence_EVD}\\
\frac{O_{P}\left(KN^{-\frac{1}{2}}\right)}{\left\Vert h_{2}\right\Vert _{2}^{2}}\leq\left\Vert \hat{Q}^{RRQR}-Q\right\Vert _{F} & \leq & \frac{O_{P}\left(K^{\frac{3}{2}}N^{-\frac{1}{2}}\right)}{\left\Vert h_{2}\right\Vert _{2}^{2}}\label{eq:conv_diff_fac}
\end{eqnarray}

\begin{proof}
In \cite{lam2011estimation}, it is seen that when using the EVD method,
\begin{equation}
\left\Vert \hat{Q}-Q\right\Vert _{2}\propto\frac{\left\Vert \Delta\left(MM^{\top}\right)\right\Vert _{2}}{\sigma_{p}\left(MM^{\top}\right)}.\label{eq:EVD_conv}
\end{equation}
For the current example, from (\ref{eq:eg1_1-1}) and (\ref{eq:eg1_2-1}),
\begin{equation}
\sigma_{p}\left(MM^{\top}\right)=\left\Vert h_{2}\right\Vert _{2}^{4}.\label{eq:eg_pr1}
\end{equation}
Also it holds that, 
\begin{eqnarray}
\left\Vert \Delta\left(MM^{\top}\right)\right\Vert _{2} & \leq & \left\Vert M\Delta M^{\top}\right\Vert _{2}+\left\Vert \Delta MM^{\top}\right\Vert _{2}+\left\Vert \Delta M\Delta M^{\top}\right\Vert _{2}\nonumber \\
 & = & O\left(\left\Vert M\Delta M^{\top}\right\Vert _{2}\right)\nonumber \\
 & = & O\left(\left\Vert M\right\Vert _{2}\left\Vert \Delta M\right\Vert _{2}\right)\label{eq:EVD_deltaM}
\end{eqnarray}
and substituting from (\ref{eq:l5_2}) and (\ref{eq:eg1_1-1}) in
(\ref{eq:EVD_deltaM}) leads to, 
\begin{eqnarray}
\left\Vert \Delta\left(MM^{\top}\right)\right\Vert _{2} & \asymp & \left\Vert h_{1}\right\Vert _{2}^{2}O_{P}\left(KN^{-\frac{1}{2}}\right).\label{eq:eg_pr2}
\end{eqnarray}
Thus, substituting (\ref{eq:eg_pr1}) and (\ref{eq:eg_pr2}) in (\ref{eq:EVD_conv})
the convergence rate for the EVD method is got as (\ref{eq:The_convergence_EVD}).
The convergence rate for QR decomposition is given by (\ref{eq:l2_1-2}).
Note from (\ref{eq:eg1_2-1}), 
\[
\sigma_{2}\left(M\right)=\left\Vert h_{2}\right\Vert _{2}^{2}.
\]
Now, from (\ref{eq:l5_2}) and (\ref{eq:lemm6_fin}), under the condition
$o\left(\left\Vert h_{2}\right\Vert _{2}^{2}\right)=KN^{-\frac{1}{2}}$,
\[
\sigma_{2}\left(\tilde{M}\right)\asymp_{P}\left\Vert h_{2}\right\Vert _{2}^{2}.
\]
and from (\ref{eq:thm4_1}), 
\[
\left\Vert h_{2}\right\Vert _{2}^{2}\geq\sigma_{2}\left(\tilde{M}_{2}\right)\geq\frac{\left\Vert h_{2}\right\Vert _{2}^{2}}{\sqrt{K}}.
\]
Thus, from (\ref{eq:proof1_6}) and (\ref{eq:minM_p}), 
\[
\left\Vert h_{2}\right\Vert _{2}^{2}\geq\sigma_{2}\left(M_{2}\right)\geq\frac{\left\Vert h_{2}\right\Vert _{2}^{2}}{\sqrt{K}}
\]
finally, applying (\ref{eq:proof1_6}) and the upper and lower bounds
for $\sigma_{2}\left(M_{2}\right)$ in (\ref{eq:l2_1-2}), (\ref{eq:conv_diff_fac})
is obtained.
\end{proof}
Thus the ratio of the convergence rates is given by, 
\begin{equation}
O\left(K^{\delta_{2}-\delta_{1}}\right)\leq\frac{\left\Vert \hat{Q}^{EVD}-Q\right\Vert _{2}}{\left\Vert \hat{Q}^{RRQR}-Q\right\Vert _{F}}\leq O_{P}\left(K^{\delta_{2}-\delta_{1}-\frac{1}{2}}\right)\label{eq:example_ratio_comp}
\end{equation}
Thus, In the best case scenario, the proposed algorithm is better
by a factor of $O\left(K^{\delta_{2}-\delta_{1}}\right)$ and in the
worst case scenario the convergence rate is $O_{P}\left(K^{\delta_{2}-\delta_{1}-\frac{1}{2}}\right)$,
where $\delta_{2}>\delta_{1}$. Thus if, $\delta_{2}-\delta_{1}>\frac{1}{2}$,
then even in the worst case, convergence rate of the proposed algorithm
is better than that of the EVD algorithm.
\newpage{}

\vspace*{7.5cm}
\emph{~~~~~~~~~~~~~~~~~~~~~~~~~~~~~~~~~~~~~~~~~~~~~This
page is intentionally left blank}

\newpage{}

\chapter{\label{chap:Illustration-and-Simulations}Illustration and Simulations:}

Numerical illustrations of the proposed algorithm are presented in
this Chapter. Here, four numerical examples are considered. As the
original motivation for this work comes from (\cite{lam2011estimation}),
a simulation example presented therein is considered first. The results
obtained by using the proposed algorithm is compared with the EVD
algorithm proposed in (\cite{lam2011estimation}). The second problem
considered here is the simulation of \emph{Example \ref{example 7.1}
 }presented in \emph{Chapter \ref{chap:Comparison-with-the}. }The
results of the performance of the proposed algorithm, the EVD and
the PCA (\cite{bai2002}) are compared. In the third example, a new
algorithm is proposed for the separation of multiple audio signals
from noisy mixtures. The new algorithm is a conjunction of the proposed
RRQR algorithm and the Independent Component Analysis (ICA) technique.
The presented algorithm has better noise characteristics when compared
to the conventional Noisy ICA technique (\cite{hyvarinen2004independent}).
Finally, the proposed algorithm is applied to model the US stock market
data in comparison with the PCA based model.

\section{Simulation 1\label{subsec:Simulation-1}}

This simulation is used to illustrate the rate of convergence of $\hat{Q}$
to $Q$ for the RRQR algorithm in comparison with the EVD method.
Consider (\ref{eq:1.2}) with a factor loading matrix 
\begin{equation}
H=\left[\begin{array}{cccc}
2\cos\left(\frac{2\pi}{K}\right) & 2\cos\left(\frac{4\pi}{K}\right) & \ldots & 2\cos\left(2\pi\right)\end{array}\right]^{\top},\label{Sim1}
\end{equation}
 the factors $x_{n}$ being generated by 
\begin{equation}
x_{n}=0.9x_{n-1}+\eta_{n},\label{Sim2}
\end{equation}
 where process noise $\eta_{n}$ and measurement noise $\epsilon_{n}$
are Gaussian noises with mean zero and variance $4$. Data sets were
generated using (\ref{eq:1.2}), (\ref{Sim1}) and (\ref{Sim2}) for
$K=20,180,400$ and $1000$ and data length $N=200$ and $500$.

Matrix $\tilde{M}$, (\ref{eq:M_perturbed-1}), is generated for each
pair $(K,N)$ with $m=5$. The $\mathbb{L}_{2}$ norm of the error
$\parallel\hat{Q}_{i}-Q\parallel,i=1,2$, is tabulated in Table~\ref{tab:Simulation1}.
Here $\hat{Q}_{1}$ and $\hat{Q}_{2}$ are the estimates obtained
using EVD and RRQR methods respectively. \emph{Figure} (\ref{fig:Convergence})
shows the convergence of $\left\Vert \hat{Q}-Q\right\Vert _{F}$ for
RRQR and EVD methods with $K=60$ while varying $N$ for two different
deltas, $\delta=0$ given by (\ref{Sim1}), and $\delta=0.5$ given
by 
\begin{eqnarray*}
H_{1} & = & \left[\begin{array}{cc}
H(1:\frac{K}{2}) & Z\end{array}\right]^{T}.
\end{eqnarray*}
where $Z$ is a vector of $\frac{K}{2}$ zeros. It was noted that
there was not much change in the observation when varying $K$ hence
was neglected.

Assuming a fixed $N=100$ and $K=180$ in the above model, using 100
monte carlo trials, the mean and standard deviation of the ratio,
(\ref{eq:m1a}), 
\[
r_{i}=\frac{\lambda_{i}}{\lambda_{i+1}}
\]
used in detecting the model order in the EVD method is plotted in
figure \ref{fig:EVD_modelorder}. While the mean and standard deviation
of the ratio, (\ref{eq:model_order}), 
\[
r_{i}=\frac{\gamma_{i}+\epsilon}{\gamma_{i+1}+\epsilon},
\]
proposed for determining the model order using RRQR decomposition
is plotted in figure \ref{fig:RRQR_model_order}. From figure \ref{fig:EVD_modelorder}
and \ref{fig:RRQR_model_order}, it can be seen that the RRQR based
model order detection scheme works as good as EVD even when $K\asymp N$.
Also, it can be seen from the error bar that the proposed RRQR method
has lesser standard deviation and hence more consistant when compared
to the EVD estimate.

\begin{center}
\begin{table*}[t]
\caption{\label{tab:Simulation1}Summary of the results of Simulation $1$,all
values multiplied by 1000.}

\centering{}%
\begin{tabular}{|c|c|c||c|c|}
\hline 
$K$  & $\parallel\hat{Q}_{1}-Q\parallel$  & $\parallel\hat{Q}_{2}-Q\parallel$  & $\parallel\hat{Q}_{1}-Q\parallel$  & $\parallel\hat{Q}_{2}-Q\parallel$ \tabularnewline
\hline 
 & \multicolumn{2}{c||}{$N=200$} & \multicolumn{2}{c|}{$N=500$}\tabularnewline
\hline 
\hline 
20  & 15.9 & 11.8  & 14.9 & 10.9 \tabularnewline
\hline 
180  & 17.4 & 12.3  & 15.9 & 11.3 \tabularnewline
\hline 
400  & 18 & 12.1  & 16.5 & 11.2 \tabularnewline
\hline 
1000  & 18 & 12.3 & 11.6  & 11.4\tabularnewline
\hline 
\end{tabular}
\end{table*}
\par\end{center}

\begin{center}
\begin{figure*}
\selectlanguage{british}%
\begin{centering}
\includegraphics[width=14cm,height=8cm]{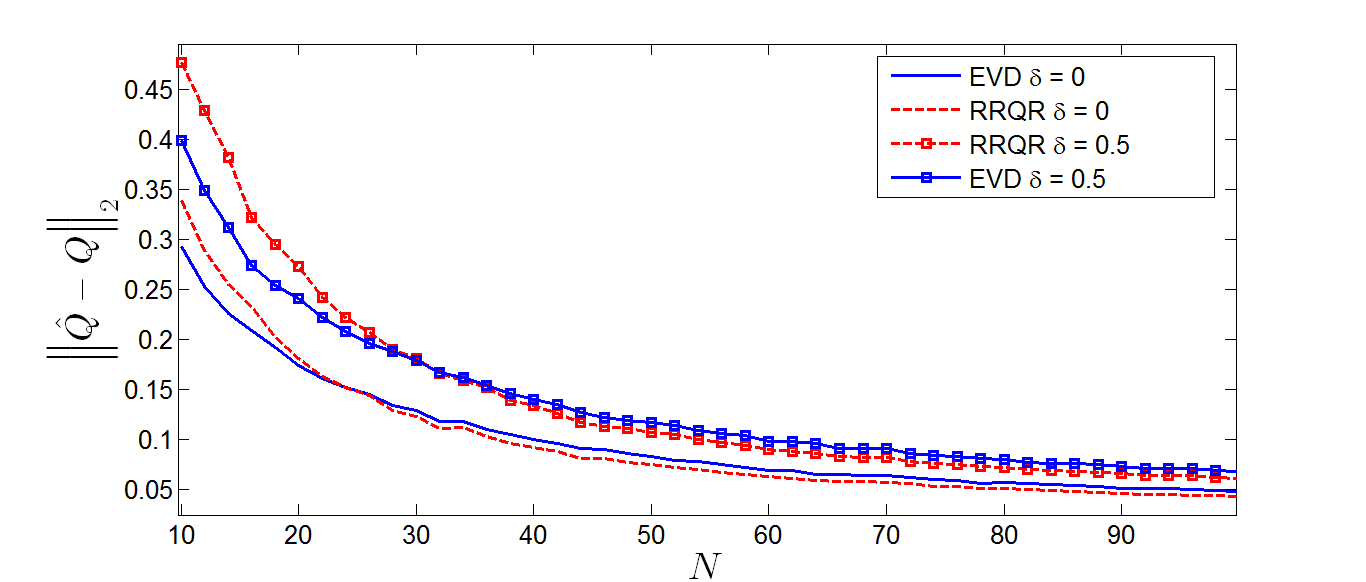}
\par\end{centering}
\selectlanguage{english}%
\centering{}\caption{\label{fig:Convergence}Convergence of $\left\Vert \hat{Q}-Q\right\Vert _{F}$
for EVD and RRQR for fixed $K=60$}
\end{figure*}
\par\end{center}

\begin{center}
\begin{figure*}
\selectlanguage{british}%
\begin{centering}
\includegraphics[width=14cm,height=8cm]{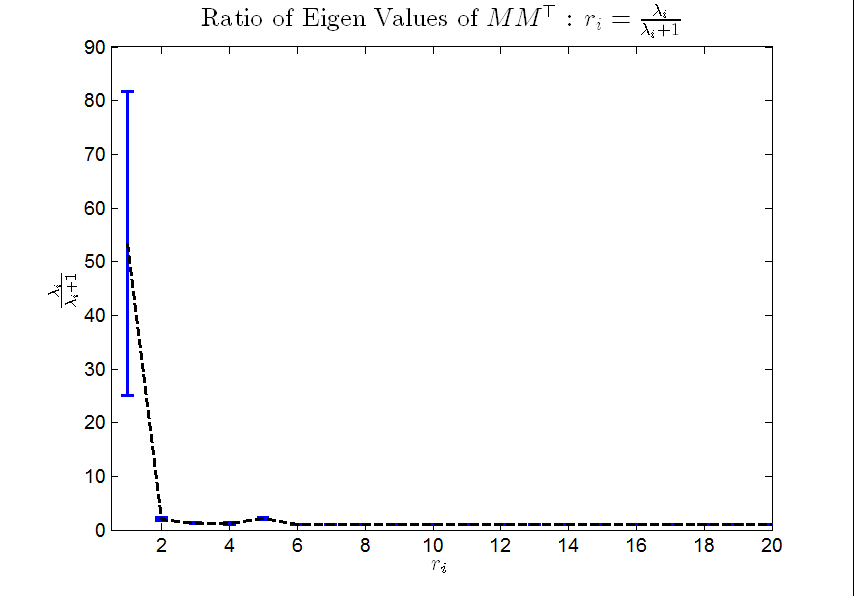}
\par\end{centering}
\selectlanguage{english}%
\centering{}\caption{\label{fig:EVD_modelorder}Plot of ratio of eigen values of $MM^{\top}$,
$r_{i}$ (\ref{eq:m1a}) , for fixed $K=180;\,N=100$.}
\end{figure*}
\par\end{center}

\begin{center}
\begin{figure*}
\selectlanguage{british}%
\begin{centering}
\includegraphics[width=14cm,height=8cm]{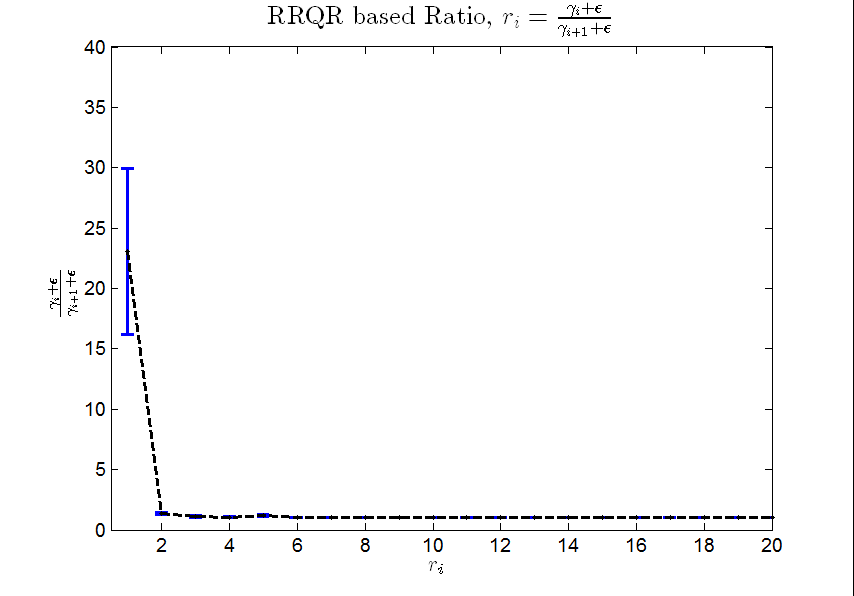}
\par\end{centering}
\selectlanguage{english}%
\centering{}\caption{\label{fig:RRQR_model_order}Plot of ratio of diagonal values of $R$
in $M\Pi=QR$, $r_{i}$, (\ref{eq:model_order}), where $\Pi$ is
got using RRQR Method for fixed $K=180;\,N=100$.}
\end{figure*}
\par\end{center}

\section{Simulation 2\label{subsec:Simulation-2}}

Consider the data model given by (\ref{eq:1.2}) with the number of
factors, $p=2$ as in \emph{Example 1 }and the factors $x_{n}$ taken
as in (\ref{eq:x_n_factors-1}) with $\alpha_{1}=0.5$ and $\alpha_{2}=0.5$.
The factor strengths are taken as $\delta_{1}=0$ and $\delta_{2}=0.5$
with the elements of the columns of $H$ given by $h_{1}\sim U(-4,\,4)$
and $h_{2}\left(1:\frac{K}{2}\right)\sim U(-4,\,4)$ and $h_{2}\left(\frac{K}{2}+1:K\right)=0$.
Two scenarios are considered for the random noise $\varepsilon_{n}$
in (\ref{eq:1.2}). Firstly, $\varepsilon_{n}$ is taken to be generated
by $N(0,\,I)$, $I\in\mathbb{R}^{K\times K}$ is an identity matrix.
In the second scenario, $\varepsilon_{n}$ is taken as $N(0,0.1\Sigma)$,
where the elements of $\Sigma\in\mathbb{R}^{K\times K}$ are given
by 

\begin{equation}
\sigma_{i,j}=\frac{1}{2}\left(\left|i\right|^{2w}-\left|i-j\right|^{2w}+\left|j\right|^{2w}\right),\label{errcorr}
\end{equation}
where $w$, the Hurst parameter, is taken as $0.6$.

The data sets were simulated for all pairs of $(K,\,N)$ where $K$
takes values of $100,\,200,\,400$ and $N$ takes, $100,\,200$. The
simulated data set was run against the QR-PC algorithm, the EVD algorithm
and also the PCA algorithm proposed in (\cite{bai2002}). In (\cite{bai2002}),
the model order is the $p$ that minimizes the Information Content,
\[
IC_{p}=\ln\left(V(p,\hat{f}_{n})\right)+p\left(\frac{K+N}{KN}\right)\ln\left(\frac{KN}{K+N}\right)
\]
where 
\[
V(p,\hat{f}_{n})=\min_{\hat{Q}}\frac{1}{KN}\sum_{n=1}^{N}\left\Vert y_{n}-\hat{Q}^{(p)}\hat{f}_{n}^{(p)}\right\Vert _{2}^{2}
\]
Here, $\hat{Q}^{(p)}$ and $\hat{f}_{n}^{(p)}$ are obtained using
PCA by assuming model order as $p$. Thus, the estimate of model order
is 
\[
\hat{p}_{N,K}=\arg\min_{p}\{IC_{p}\}.
\]
The Root Mean Square Error (RMSE) and Forecast Error (FE) were observed
for all pairs $\left(K,\,N\right)$ and the mean over $100$ montecarlo
runs are tabulated. The RMSE error is 
\begin{equation}
\epsilon_{rmse}=\left(\frac{\sum_{n=1}^{N}\left\Vert \hat{Q}_{N,K}\hat{f}_{n}-Hx_{n}\right\Vert _{2}}{KN}\right)^{0.5}\label{eq:error_rmse}
\end{equation}
 and the mean FE given by 
\begin{equation}
\epsilon_{fe}=\frac{1}{N}\sum_{n=1}^{N}\left(K^{-0.5}\left\Vert \hat{y}_{n}^{(1)}-y_{n}\right\Vert _{2}\right),\label{eq:error_fe}
\end{equation}
where $\hat{y}_{n}^{(1)}$ is the one step ahead forecast of $y_{n}$
got by using $\hat{y}_{n}^{(1)}=\hat{Q}_{N,K}\hat{f}_{n}^{(1)}$ and
$\hat{f}_{n}^{(1)}$ is the one step ahead forecast of the factors
got by fitting AR model to factors using Yule-Walker algorithm. The
results are summarized in \noun{Table \ref{tab:Simulation2}. }

It could be noted that while the PCA method is better at estimating
the RMSE error, the EVD and proposed methods perform better when FE
is concerned. Also, the proposed method performs better than the EVD
in certain cases. It could be noted that when $\epsilon\sim N(0,\,I)$,
all the three algorithms were able to detect the correct model order.
However when $\epsilon\sim N\left(0,\,\Sigma\right)$, only the EVD
and RRQR methods were able to detect the model order correctly, while
PCA overestimated the model order by a huge difference. %
{} Finally it could also be noted that the number of lags, $m$, considered
is increased, the performance of the EVD and RRQR methods increased.

\begin{center}
\begin{table}
\caption{\label{tab:Simulation2}Summary of Observations from Simulation 2.
The RMSE and FE values are multiplied by 1000.}

\begin{singlespace}
\noindent \centering{}%
\begin{tabular}{|c|c|c|c|c|c|c|c|c|c|c|c|c|}
\hline 
\multirow{3}{*}{} & Noise & \multicolumn{6}{c|}{I. $\varepsilon_{n}\sim N(0,I)$} & \multicolumn{5}{c|}{II. $\varepsilon_{n}\sim N(0,0.1\Sigma)$}\tabularnewline
\cline{2-13} 
 & $K$ & \multicolumn{1}{c|}{$100$} & $100$ & $100$ & $200$ & $200$ & $200$ & $100$ & $100$ & $100$ & $200$ & $200$\tabularnewline
\cline{2-13} 
 & $N$ & $100$ & $200$ & $300$ & $100$ & $200$ & $300$ & $100$ & $200$ & $300$ & $200$ & $300$\tabularnewline
\hline 
\hline 
\multirow{3}{*}{$\begin{array}{c}
\mbox{RRQR,}\\
m=2
\end{array}$ } & $\hat{p}$ & $2$ & $2$ & $2$ & $2$ & $2$ & $2$ & $2$ & $2$ & $2$ & $2$ & $2$\tabularnewline
\cline{2-13} 
 & $\epsilon_{fe}$ & $1474$ & $1486$ & $1460$ & $1465$ & $1481$ & $1480$ & $1244$ & $1230$ & $1226$ & $1250$ & $1238$\tabularnewline
\cline{2-13} 
 & $\epsilon_{rmse}$ & $52.8$ & $23.0$ & $16.2$ & $37.9$ & $22.7$ & $15.4$ & $138$ & $80$ & $16$ & $103$ & $26.0$\tabularnewline
\hline 
\hline 
\multirow{3}{*}{$\begin{array}{c}
\mbox{RRQR,}\\
m=9
\end{array}$} & $\hat{p}$ & $2$ & $2$ & $2$ & $2$ & $2$ & $2$ & $2$ & $2$ & $2$ & $2$ & $2$\tabularnewline
\cline{2-13} 
 & $\epsilon_{fe}$ & $1479$ & $1479$ & $1482$ & $1481$ & $1489$ & $1482$ & $1229$ & $1229$ & $1238$ & $1227$ & $1231$\tabularnewline
\cline{2-13} 
 & $\epsilon_{rmse}$ & $45.2$ & $23.5$ & $15.9$ & $31.1$ & $22.3$ & $15.4$ & $214$ & $126$ & $53.2$ & $15.6$ & $45.3$\tabularnewline
\hline 
\hline 
\multirow{3}{*}{$\begin{array}{c}
\mbox{EVD,}\\
m=2
\end{array}$} & $\hat{p}$ & $2$ & $2$ & $2$ & $2$ & $2$ & $2$ & $2$ & $2$ & $2$ & $2$ & $2$\tabularnewline
\cline{2-13} 
 & $\epsilon_{fe}$ & $1475$ & $1488$ & $1464$ & $1467$ & $1484$ & $1483$ & $1241$ & $1229$ & $1226$ & $1250$ & $1239$\tabularnewline
\cline{2-13} 
 & $\epsilon_{rmse}$ & $40.1$ & $20.7$ & $14.5$ & $26.5$ & $19.7$ & $13.3$ & $113$ & $79.5$ & $14.1$ & $109$ & $54.0$\tabularnewline
\hline 
\hline 
\multirow{3}{*}{$\begin{array}{c}
\mbox{EVD,}\\
m=2
\end{array}$} & $\hat{p}$ & $2$ & $2$ & $2$ & $2$ & $2$ & $2$ & $2$ & $2$ & $2$ & $2$ & $2$\tabularnewline
\cline{2-13} 
 & $\epsilon_{fe}$ & $1477$ & $1480$ & $1487$ & $1476$ & $1491$ & $1484$ & $1226$ & $1230$ & $1238$ & $1229$ & $1233$\tabularnewline
\cline{2-13} 
 & $\epsilon_{rmse}$ & $34.5$ & $19.6$ & $14.1$ & $22.9$ & $18.2$ & $13.0$ & $189$ & $105$ & $27.5$ & $186$ & $94.8$\tabularnewline
\hline 
\hline 
\multirow{3}{*}{PCA} & $\hat{p}$ & $2$ & $2$ & $2$ & $2$ & $2$ & $2$ & $18$ & $22$ & $24$ & $29$ & $33$\tabularnewline
\cline{2-13} 
 & $\epsilon_{fe}$ & $1470$ & $1484$ & $1461$ & $1455$ & $1477$ & $1479$ & $1243$ & $1229$ & $1238$ & $1249$ & $1238$\tabularnewline
\cline{2-13} 
 & $\epsilon_{rmse}$ & $19.9$ & $12.2$ & $9.4$ & $11.2$ & $10$ & $7.4$ & $107.2$ & $75.8$ & $61.9$ & $114.7$ & $93.6$\tabularnewline
\hline 
\end{tabular}
\end{singlespace}
\end{table}
\par\end{center}

\section{Simulation 3 (ICA Pre-processing)\label{subsec:Simulation-3}}

The Independent Component Analysis, (\cite{comon1994independent,hyvarinen2000independent}),
is a well known technique for estimating the factors. The basic ICA
technique is used to extract $p$ sources from $p$ mixtures. When
the mixtures are noise corrupted and the number of such mixtures being
larger than the total number of sources, the Noisy ICA algorithm is
used. The Noisy ICA (\cite{hyvarinen2004independent}) algorithm considers
a model of the form, 
\begin{equation}
y_{n}=Ws_{n}+\epsilon_{n}\label{eq:ICA_model}
\end{equation}
where, 
\begin{enumerate}
\item The factors, $s_{n}$, are independent of each other.
\item Not more than one factor is Gaussian.
\item The factors $s_{n}$ are independent of $\epsilon_{n}$.
\item The noise $\epsilon_{n}$ is Gaussian with co-variance $\sigma^{2}I$,
where $I$ is an Identity matrix.
\end{enumerate}
Given $\left\{ y_{n}\right\} _{n=1}^{N}$, the Noisy ICA Algorithm
estimates $W$ and $s_{n}$ uniquely. 

Here, an algorithm is developed by combining the proposed RRQR technique
with the basic ICA algorithm. This helps in relaxing the conditions
3,4 presented above. The algorithm is summarized below:
\begin{enumerate}
\item Obtain $Q$ and $f_{n}$ from the mixtures, $y_{n}$, as in data model
(\ref{eq:1.3}) using the RRQR based technique presented in \emph{Section
5}.
\item Determine $s_{n},\,A$ from $f_{n}$ such that, 
\begin{eqnarray*}
s_{n} & \triangleq & A^{-1}f_{n}\\
 & \triangleq & \left[\begin{array}{ccc}
s_{n}^{(1)} & \ldots & s_{n}^{\left(p\right)}\end{array}\right]^{\top}
\end{eqnarray*}
where $\tilde{A}$ is a matrix obtained by using the basic ICA, such
that the components of $\tilde{s}_{n}$ are independent. 
\end{enumerate}
Thus, 
\[
y_{n}=QAA^{-1}f_{n}+\varepsilon_{n}
\]
Assuming $W=QA$, 
\begin{equation}
y_{n}=Ws_{n}+\varepsilon_{n}\label{eq:ICA_with_preprocessing}
\end{equation}

Thus, $W$ and $s_{n}$ could be extracted as in (\ref{eq:ICA_with_preprocessing})
under the following assumptions: 
\begin{enumerate}
\item The components of $s_{n}$ are independent with each other.
\item Not more than one factor has Gaussian distribution.
\item All $s_{n}$ should have correlations with the past. \emph{i.e., }$\mathcal{E}\left\{ s_{n+l}s_{n}\right\} \neq0$
for at least one $l=1,\ldots,m$.
\item The random noise $\varepsilon_{n}$ can be correlated with the factors
$s_{n}$, \emph{i.e., }$\mathcal{E}\left\{ s_{n+l}\varepsilon_{n}\right\} $
need not necessarily be zero for all $l\geq0$.
\item The auto co-variance of random noise, $\varepsilon_{n}$, can be $\Sigma$,
where $\Sigma$ is any positive semi definite matrix. 
\end{enumerate}
The graphical representation of the proposed algorithm is depicted
in \emph{Figure (\ref{fig:ICA-implementation-Model}).}

\begin{center}
\begin{figure*}[t]
\begin{centering}
\includegraphics[clip,width=14cm,height=8cm]{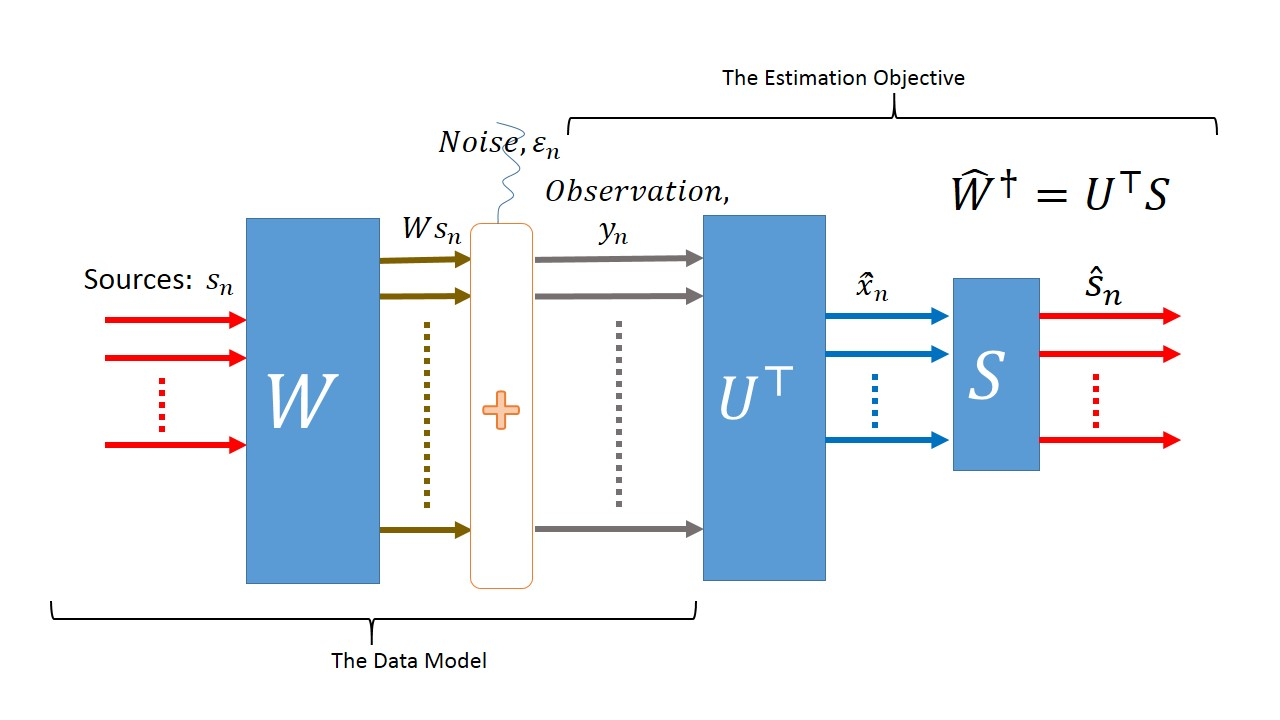}
\par\end{centering}
\caption{\label{fig:ICA-implementation-Model}ICA implementation Model}
\end{figure*}
\par\end{center}

In this thesis, an approximation of the popular \emph{cocktail-party}
problem is simulated. The cocktail party problem is to separate multiple
sound sources using recordings from multiple microphones. Let the
number of sound sources be two, given by $s=\left[\begin{array}{cc}
s^{(1)} & s^{(2)}\end{array}\right]^{\top}$ with $N$ samples (different pairs of speech and music were considered
as sources). The sources are mixed by a random matrix $W\in\mathbb{R}^{K\times2}$
where each element of $W$ is $\left|N(0,1)\right|$. Finally measurement
noise $\varepsilon_{n}$ is added. Two scenarios are considered where
in the first case, each element of $\varepsilon_{n}$, is taken to
be $N(0,\,0.1)$ and in the second case, $\varepsilon_{n}$, is taken
to be $N(0,\,\Sigma)$. Here $\Sigma\in\mathbb{R}^{K\times K}$ are
as described in (\ref{errcorr}). Note, this is an approximation of
the cocktail party problem as it does not include reverberations from
the room or time delay between different recordings or multiple noise
sources which might exceed the number of microphones used. For a more
detailed analysis of the cocktail party problem refer, (\cite{hoffmann2011recognition}).
The performance of the algorithm is evaluated using PESQ score (\cite{rix2001perceptual}).
The PESQ is a tool that compares the original signal $s_{n}$ with
the recovered signal $\hat{s}_{n}$ and it gives a best score of $4.5$
when both match exactly. The ICA algorithm used is from (\cite{hyvarinen2005fastica}).
Since the speech and music sources have a super Gaussian distribution,
The kurtosis is assumed positive and $tanh$ non-linearity is used
as a measure of non gaussianity in the implementation of \emph{fast
ICA. }

It could be noted from the observations in \emph{Table \ref{tab:ICA}}
that, the RRQR based method performs better in almost all cases. It
could also be noted that in the first scenario of noise, where the
same magnitude of noise was included in each observation, as the number
of observed mixtures ($K$) increased, the separation was more accurate
implying that information from the increased number of mixtures was
used to separate out the components. The second scenario is quite
different, as the number of observed mixtures increased, the noise
variance also increased \emph{i.e., }the noise variance was proportional
to the number of observations as can be seen from (\ref{errcorr}).
Thus, with the increase in the number of observed mixtures, the increase
in the noise magnitude offset the increase in the performance thus,
with increase in $K$, decrease in the performance of the algorithm
is seen. It was also noted that the PCA based method of (\cite{bai2002})
was unable to determine the number of sources under the second scenario
and its estimate for the number of sources were between 20-30. Thus,
in the simulation, the number of sources were assumed to be $2$ in
this case while using PCA. The EVD and the proposed method did not
suffer from this problem and were able to estimate the number of sources
as $2$ quite consistently. It is also interesting to note that while
speech-music separation was the best, speech-speech separation was
the worst. Thus, the separation characteristic also depends on the
type of source to be separated. 

\begin{center}
\begin{table*}[t]
\begin{centering}
\begin{tabular}{|c|c|c|c||c|c|c||c|c|c|}
\hline 
Source Type & \multicolumn{3}{c||}{speech-speech} & \multicolumn{3}{c||}{speech-music} & \multicolumn{3}{c}{music-music}\tabularnewline
\hline 
$K$ - Mixtures & 10 & 40 & 100 & 10 & 40 & 100 & 10 & 40 & 100\tabularnewline
\hline 
\hline 
 & \multicolumn{9}{c|}{(Average PESQ Score ) with $\varepsilon_{i,\,j}\sim N(0,0.1)$}\tabularnewline
\hline 
RRQR & 2.282 & 2.813 & 3.1300 & 2.931 & 3.667 & 4.092 & 2.868 & 3.534 & 3.823\tabularnewline
\hline 
EVD & 2.275 & 2.780 & 3.1277 & 2.932 & 3.666 & 3.990 & 2.868 & 3.534 & 3.823\tabularnewline
\hline 
PCA  & 2.285 & 2.812 & 3.1312 & 2.931 & 3.665 & 4.092 & 2.868 & 3.534 & 3.823\tabularnewline
\hline 
 & \multicolumn{9}{c|}{(Average PESQ Score ) with $\varepsilon_{n}\sim N(0,\Sigma)$}\tabularnewline
\hline 
RRQR & 3.515 & 3.088 & 2.670 & 4.261 & 4.017 & 3.695 & 4.221 & 4.006 & 2.835\tabularnewline
\hline 
EVD & 3.507 & 3.079 & 2.665 & 4.264 & 4.019 & 3.695 & 4.221 & 4.007 & 3.667\tabularnewline
\hline 
PCA  & 4.064 & 4.086 & 3.956 & 3.886 & 3.581 & 3.386 & 3.887 & 3.355 & 3.667\tabularnewline
\hline 
\end{tabular}
\par\end{centering}
\caption{\label{tab:ICA}Errors Associated when used as pre-processing with
ICA}
\end{table*}
\par\end{center}

\section{Application on a finance portfolio\label{subsec:finance-app}}

The proposed RRQR method, EVD and PCA were applied to model the returns
of the S\&P 500 index constituents. The S\&P 500 is a diversified
index that represents large companies in the US. It is often taken
as a representative index of the US market as a whole. We consider
the 500 stocks that are members of the S\&P Index as of July 31, 2013.
We collect the data for these 500 stocks from January 1, 2010 - July
31, 2013. The prices are adjusted for splits and dividends. The daily
logarithmic returns are calculated from the adjusted prices. If $P(n)$
and $P(n-1)$ are the prices at time at $n$ and $n-1$, then the
logarithmic return is: 
\begin{eqnarray*}
y_{n} & = & \log\left(P(n)\right)-\log\left(P(n-1)\right).
\end{eqnarray*}
Sixteen stocks did not have data for all days and hence were omitted
from the study. The following steps are applied to develop the factor
models that are used to forecast the returns. 
\begin{enumerate}
\item Let the total number of days be $N$ and the number of stocks be $K$.
\item The mean across time is set to zero for all stocks, \emph{i.e.,} 
\[
\tilde{y}_{n}=y_{n}-\bar{y},\:n=1,2,\ldots,N.
\]
where $\bar{y}=\frac{1}{N}\sum_{n=1}^{N}y_{n}$ is the mean.
\item Now, the demeaned data is modeled as $\tilde{y}_{n}=Qf_{n}+\varepsilon_{n}$
where the factor loading matrix, $Q$ and factors $f_{n}$ are estimated
using data from $n=1,\ldots,499$. The following three algorithms
were used for estimating $Q$ and $f_{n}$ :

\begin{enumerate}
\item The EVD algorithm
\item The RRQR algorithm
\item The PCA based algorithm
\end{enumerate}
\item An AR model of order 10 was fit for the factors $f_{n}$ using Yule
- Walker equations. 
\item One step ahead out sample prediction $\tilde{y}_{n+1}=Qf_{n+1}$,
are computed for days, $n=500,\ldots509$. Here $f_{n+1}$ is the
one step ahead prediction got using the AR model. Wherein, for estimation
of $n=501,\,502,\,\ldots$, the true value of $n=500,\,501,\,\ldots$
is used.
\item $Q$ and $f_{n}$ are re-estimated using data from $n=10,\ldots,\,509$
and an AR model is estimated. One step predictions are made for $n=510-519.$
This process is repeated till the one step ahead predictions are obtained
for the last $400$ days.
\end{enumerate}
Using the above methodology, the following observations were taken
\begin{enumerate}
\item Model order $(q)$ is the mean of the model order estimates over every
update of $Q$ and $f_{n}$. 
\item RMSE error $\left(e_{rmse}\right)$ is the mean of the $e_{rmse}$
estimates over every update of $Q$ and $f_{n}$. Note, the $e_{rmse}$
error for each 500 day block estimates of $Q$ and $f_{n}$ is computed
using (\ref{eq:error_rmse}).
\item Forecast error $(e_{fe})$ as in (\ref{eq:error_fe}) for last 400
days.
\end{enumerate}
The results observed are tabulated in \emph{Table \ref{tab:Finance_data}}

\begin{center}
\begin{table}
\caption{\label{tab:Finance_data}Results from finance data application}

\centering{}%
\begin{tabular}{|c|c|c|c|}
\hline 
 & EVD-Method & Proposed Method & PCA\tabularnewline
\hline 
\hline 
$q$ & 1 & 1 & 6\tabularnewline
\hline 
$e_{rmse}$ & 0.0144 & 0.0146 & 0.0129\tabularnewline
\hline 
$e_{fe}$ & 0.0163 & 0.0163 & 0.0164\tabularnewline
\hline 
\end{tabular}
\end{table}
\par\end{center}

It is a known fact that $S\&$P index is in such a way that the stocks
depend only on a single factor, as opposed to the six factors estimated
by PCA. Here both EVD and RRQR have same prediction errors while PCA
has a higher prediction error.

\chapter{\label{chap:Conclusion-and-Discussions:}Conclusion and Discussions}

In this thesis, factor modelling of an observed $K$-dimensional multivariate
series, $y_{n}$, of length $N$ $(n=1,2,\ldots,N)$ was considered.
The goal was to estimate a low $p$-dimensional series called factors,
$f_{n}$ ($p<K$) and a constant matrix $Q$ such that $y_{n}=Qf_{n}+\varepsilon_{n}$,
where $\varepsilon_{n}$ was random noise. The random noise was assumed
to have covariance, $\Sigma_{\varepsilon\varepsilon}(l)=\Sigma\delta(l)$,
where $\Sigma$ is a positive definite matrix. It was also assumed
that the factors can be correlated with past noise and not with future
noise \emph{i.e., $\Sigma_{f\varepsilon}(l)\neq0$ for $l\geq0$ }and
\emph{$\Sigma_{f\varepsilon}(l)=0$ for $l<0$. }In principle a ``fat
matrix'', $M$, can be formed by augmenting the autocovariance matrices
$\left\{ \Sigma_{yy}(l)\right\} _{l=1}^{m}$. This matrix $M$ would
be of rank $p$ and span the same column space as $Q$ \emph{i.e.,
$\mathcal{R}\left(M\right)=\mathcal{R}\left(Q\right)$}. Hence, determining
$p$ and $Q$ from SVD/QR decompositions of $M$ is straight forward.
In practice, only the finite time estimates of the autocovariances
are observed and hence only a perturbed version $\tilde{M}$ of $M$
is observable. This matrix, $\tilde{M}$, would be full row rank with
probability one. In this paper, the model order $p$ was determined
by equating it to the ``numerical rank'' of the matrix $\tilde{M}$.
The numerical rank of $\tilde{M}$ was determined by maximizing a
ratio, (\ref{eq:model_order}), formed using the diagonal values of
the $R$ matrix of the \noun{Hybrid-III} RRQR decomposition, \cite{chandrasekaran1994rank}.
With the model order $p$ determined, an estimate of $Q$ was obtained
using \noun{Hybrid-I} RRQR decomposition of $\tilde{M}$. The factors,
$f_{n}$, were obtained by projecting the observations $y_{n}$ on
the space spanned by the estimate of $Q$.

The \noun{Hybrid-III} algorithm ensures that 
\begin{equation}
\sigma_{p}\left(\tilde{M}\right)\sqrt{\left(p+1\right)\left(K-p\right)}\geq\sigma_{p}\left(\tilde{M}_{p}\right)\geq\frac{\sigma_{p}\left(\tilde{M}\right)}{\sqrt{p\left(K-p+1\right)}},\label{eq:Hybrid-III bound}
\end{equation}
were $\tilde{M}_{p}$ denotes the first $p$ columns selected from
$\tilde{M}\Pi$ with $\Pi$ being the permutation matrix generated
by the \noun{Hybrid-III} algorithm. \noun{Hybrid-I} and \noun{Hybrid-I}I
algorithms satisfy the upper and lower bounds, respectively, of (\ref{eq:Hybrid-III bound}).
The above bounds were crucial for proving the asymptotic convergence
of the model order. In the simulations, QR-CP as well the other Hybrid
algorithms were found to to give similar results. Nevertheless, convergence
results could be proved only for Hybrid algorithms. In particular,
convergence of the model order scheme could be proved only for the
Hybrid-III algorithm. 

Conventional PCA based factor modelling cannot allow noise to be correlated
with factors and that noise should have covariance of $\Sigma_{\varepsilon\varepsilon}(0)=\sigma^{2}I$.
The proposed algorithm relaxes the above and allows noise to have
any covariance $\Sigma$. Thus it was seen from \emph{Simulations
\ref{subsec:Simulation-2} and \ref{subsec:Simulation-3}} that when
the noise had covariance anything other than $\sigma^{2}I$, the PCA
based model order detection schemes \cite{bai2002} failed to detect
the correct number of factors $p$, the proposed RRQR based scheme
remained robust.

The method of using $\tilde{M}$ to determine the parameters in the
factor model was first approached by \cite{lam2011estimation} wherein
EVD of $\tilde{M}\tilde{M}^{\top}$ was used to obtain estimates for
$p,\,Q$ and $f_{n}$. It was proved that in certain cases, the proposed
RRQR based methodology is better than the EVD based method w.r.t the
asymptotic properties (as both the dimension and duration of observations,
$K$ and $N$, tends to infinity). The numerical \emph{Simulations
\ref{subsec:Simulation-1}, \ref{subsec:Simulation-2} and \ref{subsec:Simulation-3}}
also demonstrate that RRQR betters EVD in certain cases. 

Using \emph{Example 1, }it was shown that the factors might not have
correlation at all lags and hence, in order to acquire all factors,
lags, $l=1,\ldots,m$ needs to be used with $m$ high enough. Conversely,
it could be stated that using $l=a,\ldots,\,b$ where $\min\left(K,N\right)\geq b\geq a\geq1$
to construct $M$ would result in the estimation of factors having
correlation within those particular lags alone. This is useful when,
say, an $a$-step ahead prediction is desired (persistent factors)
or if the undesirable factors have correlation only in lags less than
$a$.

The use of the proposed method of factor modelling as pre-processing
to ICA was demonstrated in \emph{Simulation 3}. Finally, w.r.t the
finance data application, the EVD method and the proposed method perform
similarly and identify only one dominant factor. This is expected
since the S\&P 500 is a well diversified index and the only common
factor among all the constituent stocks is the US market factor. 

This work can be extended to devise a recursive algorithm to estimate
the parameters of the factor model based on RRQR decomposition.
\newpage{}

\vspace*{7.5cm}
\emph{~~~~~~~~~~~~~~~~~~~~~~~~~~~~~~~~~~~~~~~~~~~~~This
page is intentionally left blank}

\newpage{}

\chapter*{\label{chap:Appendix-1}Appendix 1}

\addcontentsline{toc}{chapter}{Appendix 1}

\section*{Interlacing Property\foreignlanguage{british}{\label{subsec:Interlacing-Property}}}

\addcontentsline{toc}{section}{Concepts from Linear Algebra}

\selectlanguage{british}%
The following lemma relates to the interlacing property of eigen values
for symmetric matrices:
\selectlanguage{english}%
\begin{lem}
\cite{li2012generalized,fan1957imbedding}Let $A$ be real symmetric
matrix of dimension $n$ and $B\in\mathbb{R}^{k\times k}$ be a principal
sub matrix of $A$ with $1\leq k<n$. Let $a_{1}\geq a_{2}\geq...\geq a_{n}$
and $b_{1}\geq b_{2}\geq...\geq b_{k}$ be the respective eigenvalues.
Then\textbf{
\begin{equation}
a_{j}\geq b_{j}\geq a_{n-k+j},j=1,2...k\label{eq:thm9.1}
\end{equation}
}
\end{lem}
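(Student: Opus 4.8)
The plan is to derive both inequalities from the Courant--Fischer min--max characterization of eigenvalues of a symmetric matrix. Recall that for a real symmetric $A$ with eigenvalues $a_1\geq\cdots\geq a_n$ and Rayleigh quotient $R_A(x)=x^\top A x/x^\top x$, one has for each index $m$
\[
a_m=\max_{\dim S=m}\;\min_{0\neq x\in S}R_A(x)=\min_{\dim S=n-m+1}\;\max_{0\neq x\in S}R_A(x),
\]
the extrema ranging over subspaces $S\subseteq\mathbb{R}^n$ of the indicated dimension. First I would reduce to the case where $B$ is the \emph{leading} principal submatrix: a principal submatrix is singled out by an index set $I$ with $|I|=k$, and a symmetric permutation $A\mapsto\Pi^\top A\Pi$ (which keeps $A$ symmetric and leaves its spectrum unchanged) moves $I$ into the first $k$ coordinates. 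Writing $P=[e_1\;\cdots\;e_k]\in\mathbb{R}^{n\times k}$ for the first $k$ standard basis columns, we then have $B=P^\top A P$, and $y\mapsto Py$ is an isometry from $\mathbb{R}^k$ onto the coordinate subspace $V=\{x:x_{k+1}=\cdots=x_n=0\}$ satisfying $R_B(y)=R_A(Py)$.

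For the upper bound $b_j\leq a_j$ I would use the max--min form. If $T\subseteq\mathbb{R}^k$ has dimension $j$ then $PT\subseteq V\subseteq\mathbb{R}^n$ also has dimension $j$, and the isometry gives $\min_{0\neq y\in T}R_B(y)=\min_{0\neq x\in PT}R_A(x)$. Since the family $\{PT:\dim T=j\}$ is contained in the family of all $j$-dimensional subspaces of $\mathbb{R}^n$, maximizing over the smaller family can only yield a smaller value, whence
\[
b_j=\max_{\dim T=j}\;\min_{0\neq y\in T}R_B(y)\;\leq\;\max_{\dim S=j}\;\min_{0\neq x\in S}R_A(x)=a_j.
\]

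For the lower bound $b_j\geq a_{n-k+j}$ I would run the same argument with the min--max form. Taking $m=n-k+j$ in the min--max formula for $A$ gives $n-m+1=k-j+1$, so both extremal problems now range over subspaces of the common dimension $k-j+1$, and restricting the outer minimization for $A$ to the subfamily $\{PT\}$ can only increase the minimum, giving
\[
a_{n-k+j}=\min_{\dim S=k-j+1}\;\max_{0\neq x\in S}R_A(x)\;\leq\;\min_{\dim T=k-j+1}\;\max_{0\neq y\in T}R_B(y)=b_j.
\]
Combining the two displays yields $a_j\geq b_j\geq a_{n-k+j}$ for $j=1,\ldots,k$. The only delicate point, and the step I would check most carefully, is the dimension bookkeeping in the min--max form: one must verify that $m=n-k+j$ turns $n-m+1$ into exactly $k-j+1$ so that the two subspace families match, and that restricting an outer $\min$ (respectively $\max$) to a subfamily moves the value in the claimed direction. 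Everything else is a direct transcription through the isometry $P$.
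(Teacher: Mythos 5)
Your proposal is correct and complete. Note, however, that the paper does not actually prove this lemma at all: its ``proof'' consists of the single line ``Refer \cite{li2012generalized} for a proof,'' so there is no argument in the thesis to compare yours against. What you have written is the standard Courant--Fischer proof of the Cauchy interlacing theorem, and it checks out in every detail: the permutation reduction to a leading principal submatrix is legitimate since $\Pi^\top A\Pi$ is symmetric with the same spectrum; the identity $R_B(y)=R_A(Py)$ holds because $P^\top P=I_k$ and $P^\top AP=B$; the upper bound follows from restricting the outer maximum in the max--min form to the subfamily $\{PT\}$; and the dimension bookkeeping in the lower bound is right, since $m=n-k+j$ gives $n-m+1=k-j+1$, matching the dimension $k-j+1$ appearing in the min--max form of $b_j$ for the $k\times k$ matrix $B$. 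The only thing your write-up adds beyond what the paper needs is self-containedness --- which is arguably an improvement, since the paper goes on to use this lemma (applied to $\Sigma^{\top}\Sigma$) to derive the singular-value interlacing theorem that underpins the RRQR bounds in Chapters 4--6.
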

\begin{proof}
Refer \cite{li2012generalized} for a proof.
\end{proof}
\begin{thm}
Let $\Sigma\in\mathbb{R}^{m\times n}$, $m<n$, and $\Sigma_{11}^{k\times l}$,
$k<l$, be any sub matrix of $\Sigma$ with \foreignlanguage{british}{$\sigma_{1}\geq\sigma_{2}\geq\ldots\geq\sigma_{m}$
and $\lambda_{1}\geq\lambda_{2}\geq\ldots\geq\lambda_{k}$ being their
corresponding singular values. Then,
\[
\sigma_{j}\geq\lambda_{j}\geq\sigma_{n-k+j},j=1,2...k.
\]
}
\end{thm}
\selectlanguage{british}%
\begin{proof}
First we start by establishing the connection between the eigen values
and singular values so as to connect the result in the previous lemma
to pertain to singular values. The following properties are used: 

1. Singular values of a matrix $\Sigma\in\mathbb{R}^{m\times n}$
, $m<n$, are equal to the square roots of the eigen values of the
matrix $A=\Sigma^{\top}\Sigma$ (or) $\Sigma\Sigma^{\top}$. 

2. Also, let QR decomposition of $\Sigma$ be, $\Sigma=QR$. Then
singular values of $R$ equals that of $\Sigma$, $\sigma_{1}\geq\sigma_{2}\geq\ldots\geq\sigma_{m}$,
since $Q$ is an orthogonal matrix.

Let $\Sigma_{11}\in\mathbb{R}^{k\times l}$, $k<l$, be a principal
sub matrix of $\Sigma$. Using the second property, without loss of
generality, $\Sigma$ can be written as upper triangular, for the
purpose of dealing with singular values, $i.e.,$ 
\[
\Sigma=\begin{array}{cc}
 & l\qquad n-l\\
\begin{array}{c}
k\\
m-k
\end{array} & \left[\begin{array}{cc}
\Sigma_{11} & \Sigma_{12}\\
0 & \Sigma_{22}
\end{array}\right]
\end{array}.
\]
 Also, let $\lambda_{1}\geq\lambda_{2}\geq\ldots\geq\lambda_{k}$
be the singular values of $\Sigma_{11}$. Now, Consider the symmetric
matrix, $A\triangleq\Sigma^{\top}\Sigma$ given by, 
\[
A=\Sigma^{\top}\Sigma=\begin{array}{cc}
 & k\qquad\qquad n-k\\
\begin{array}{c}
k\\
n-k
\end{array} & \left[\begin{array}{cc}
\Sigma_{11}^{\top}\Sigma_{11} & \Sigma_{12}^{\top}\Sigma_{12}\\
\Sigma_{12}^{\top}\Sigma_{12} & \Sigma_{22}^{\top}\Sigma_{22}
\end{array}\right]
\end{array}.
\]
Note, the symmetric matrix $\Sigma_{11}^{\top}\Sigma_{11}$ is a principal
sub matrix of $A$. Now, Lemma 9.1 can be applied assuming $B=\Sigma_{11}^{\top}\Sigma_{11}$.
Thus, the eigen values of $\Sigma^{\top}\Sigma$ and $\Sigma_{11}^{\top}\Sigma_{11}$
are related as in (\ref{eq:thm9.1}). Thus taking the positive square
root of which would correspond to the singular values of $\Sigma$
and $\Sigma_{11}$. Thus, 
\[
\sigma_{j}\geq\lambda_{j}\geq\sigma_{n-k+j},j=1,2...k.
\]
\end{proof}
This above theorem is used in Chapters 4, 5 and 6.
\selectlanguage{english}%

\chapter*{\label{chap:Appendix-2}\foreignlanguage{british}{Appendix 2}}

\addcontentsline{toc}{chapter}{Appendix 2}
\selectlanguage{british}%

\section*{Landau notations and Mixing Properties}

\selectlanguage{english}%
\addcontentsline{toc}{section}{landau notations and Mixing Properties}
\selectlanguage{british}%

\subsection*{Properties of Order Notations (landau notations)}

Here the properties of little $o$, big $O$, little $o_{P}$ and
big $O_{P}$ are summarized. Refer also \cite{lehmann1999elements}.

Properties of little $o$ and big $O$: 
\begin{lyxlist}{00.00.0000}
\item [{1.}] $a_{n}=O(a_{n})$
\item [{2.}] If $a_{n}=o(b_{n})$, then $a_{n}=O(b_{n})$
\item [{3.}] If $a_{n}=O(b_{n})$, then $O(a_{n})+O(b_{n})=O\left(a_{n}\right)$.
\item [{4.}] If $a_{n}=O(b_{n})$, then $o(a_{n})+o(b_{n})=o(a_{n})$
\item [{5.}] Let $c$ be a constant, then $cO(a_{n})=O(a_{n})$ and $co(b_{n})=o(b_{n})$.
\item [{6.}] $O(a_{n})O(b_{n})=O(a_{n}b_{n})$; $o(a_{n})o(b_{n})=o(a_{n}b_{n})$
\end{lyxlist}
The properties of $O_{P}$ and $o_{P}$ are summarized as follows: 
\begin{lyxlist}{00.00.0000}
\item [{1.}] Let $a_{n}$ be any stochastic sequence and $b_{n}$ be a
deterministic sequence then, $o\left(a_{n}\right)=b_{n}$ implies
the stochastic sequence $\frac{b_{n}}{a_{n}}\overset{P}{\rightarrow}0$.
\end{lyxlist}
while almost all the properties pertaining to $O$ and $o$ could
be applicable to $O_{P}$ and $o_{P}$, the following two properties
are not so obvious:
\begin{lyxlist}{00.00.0000}
\item [{2.}] Let $a_{n}$ and $b_{n}$ deterministic sequences, then, $O\left(a_{n}\right)+O_{P}\left(b_{n}\right)=O_{P}\left(a_{n}+b_{n}\right).$
\end{lyxlist}
\selectlanguage{english}%
\begin{proof}
Let,
\[
C_{n}=A_{n}+X_{n}
\]
where $A_{n}$ is a deterministic sequence and $X_{n}$ is a random
sequence. Let the rate of convergence of $A_{n}$and $X_{n}$be as
follows: 
\begin{eqnarray*}
A_{n} & = & O\left(a_{n}\right)\\
X_{n} & = & O_{P}\left(b_{n}\right)
\end{eqnarray*}
where $a_{n}$ and $b_{n}$ are deterministic sequences. By definition
of $O(.)$ and $O_{P}(.)$, 
\begin{eqnarray*}
\left|\frac{A_{n}}{a_{n}}\right| & \leq & M_{1}\,\forall\,n>n_{0}\\
P\left\{ \left|\frac{X_{n}}{b_{n}}\right|\leq M_{2}\right\}  & \geq & 1-\epsilon_{n_{0}}\,\forall\,n>n_{0}
\end{eqnarray*}
Thus, 
\[
C_{n}=O\left(a_{n}\right)+O_{P}\left(b_{n}\right)
\]
\[
\implies P\left(\left|\frac{A_{n}}{a_{n}}\right|+\left|\frac{X_{n}}{b_{n}}\right|\leq M_{1}+M_{2}\right)\geq1-\epsilon_{n_{0}}\:\forall n>n_{0}
\]
\[
\implies C_{n}=O_{P}\left(a_{n}+b_{n}\right)
\]
if $a_{n}>b_{n}$ then, 
\[
C_{n}=O_{P}\left(a_{n}\right).
\]
\end{proof}
\begin{lyxlist}{00.00.0000}
\item [{3.}] \foreignlanguage{british}{Let $a_{n}$ and $b_{n}$ deterministic
sequences, then, $O\left(a_{n}\right)O_{P}\left(b_{n}\right)=O_{P}\left(a_{n}b_{n}\right).$}
\end{lyxlist}
\begin{proof}
Let,
\[
C_{n}=A_{n}X_{n}
\]
where $A_{n}$ is a deterministic sequence and $X_{n}$ is a random
sequence. Let the rate of convergence of $A_{n}$and $X_{n}$be as
follows: 
\begin{eqnarray*}
A_{n} & = & O\left(a_{n}\right)\\
X_{n} & = & O_{P}\left(b_{n}\right)
\end{eqnarray*}
where $a_{n}$ and $b_{n}$ are deterministic sequences. By definition
of $O(.)$ and $O_{P}(.)$, 
\begin{eqnarray*}
\left|\frac{A_{n}}{a_{n}}\right| & \leq & M_{1}\,\forall\,n>n_{0}\\
P\left\{ \left|\frac{X_{n}}{b_{n}}\right|\leq M_{2}\right\}  & \geq & 1-\epsilon_{n_{0}}\,\forall\,n>n_{0}
\end{eqnarray*}
Thus, 
\[
C_{n}=O\left(a_{n}\right)O_{P}\left(b_{n}\right)
\]
\[
\implies P\left(\left|\frac{A_{n}}{a_{n}}\right|\left|\frac{X_{n}}{b_{n}}\right|\leq M_{1}M_{2}\right)\geq1-\epsilon_{n_{0}}\:\forall n>n_{0}
\]
\[
\implies C_{n}=O_{P}\left(a_{n}b_{n}\right)
\]
\end{proof}
\selectlanguage{british}%

\subsection*{Mixing Conditions}

In Chapter \ref{chap:Data-Model}, the stationarity of the time series
was due to the assumptions on the mixing conditions namely, $\alpha\left(.\right)$
and $\psi(.)$. Here an explanation pertaining to the same are summarized.
For more details refer \cite[pp 363 - 367]{billingsley2008probability}
and \cite[sec 2.6]{fan2003nonlinear}.

To express dependency across time for an observed stochastic sequence,
one could use the Moving Average (MA) model. But it is too simplistic
and cannot comply with complicated dependence structures. Hence, mixing
conditions are used. Mixing conditions state that that, across time,
two well separated instances are independent of each other. This enables
the use of law of Large Numbers and the Central Limit Theorem for
a non-independent stochastic series.

Let $X_{1},\,X_{2},\ldots$ be a sequence of random variables. Now,
$\alpha$ mixing is defined as, 
\[
\alpha_{n}=\sup_{A\in\sigma\left(X_{1},\,\ldots,\,X_{k}\right);\,B\in\sigma\left(X_{k+n},\,X_{k+n+1},\,\ldots\right)}\left|P(A\cap B)-P(A)P(B)\right|
\]
where $\sigma\left(X_{1},\,\ldots,\,X_{k}\right)$ be the sigma algebra
generated by $X_{1},\ldots,\,X_{k}$. Also, the $\psi$ mixing condition
is given by 
\[
\psi_{n}=\sup_{A\in\sigma\left(X_{1},\,\ldots,\,X_{k}\right);\,B\in\sigma\left(X_{k+n},\,X_{k+n+1},\,\ldots\right)}\left|1-\frac{P(B|\,A)}{P(B)}\right|.
\]
Now, if $\alpha_{n}\rightarrow0$, then the random variables $X_{i}$
and $X_{i+n}$ are taken to be independent for large $n$ and the
sequence is addressed as to have $\alpha-$mixing. Note, if the mean
and co-variance of $X_{n}$ is independent of $n$, then $X_{n}$
is assumed to be stationary. 

The following theorem is taken from Billingsley, \cite[Thm 27.4]{billingsley2008probability}:
\begin{thm}
Suppose $X_{1},\,X_{2},\,\ldots$ is stationary with $\alpha_{n}=O(n^{-5})$,
$\mathcal{E}\left(X_{n}\right)=0$ and $\mathcal{E}\left(X_{n}^{12}\right)<\infty$.
Let 
\[
S\triangleq\sum_{n=1}^{N}X_{n},
\]
then, 
\[
N^{-1}\mbox{Var}\left(S_{n}\right)\rightarrow\sigma^{2}=\mathcal{E}\left(X_{1}^{2}\right)+2\sum_{k=1}^{\infty}\mathcal{E}\left(X_{1}X_{1+k}\right),
\]
where the series converges absolutely. If $\sigma>0$, then $\frac{S_{n}}{\sigma\sqrt{N}}\sim N\left(0,1\right)$.
\end{thm}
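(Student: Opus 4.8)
The plan is to prove the two assertions in turn: first the convergence of the normalized variance, and then the asymptotic normality by a block-decoupling argument. For the variance, stationarity gives
\[
N^{-1}\mathrm{Var}(S)=\mathcal{E}(X_{1}^{2})+2\sum_{k=1}^{N-1}\left(1-\tfrac{k}{N}\right)\mathcal{E}(X_{1}X_{1+k}),
\]
so the claim reduces to absolute summability of the covariances $\mathcal{E}(X_{1}X_{1+k})$. I would obtain this from a Davydov--Ibragimov covariance inequality for $\alpha$-mixing sequences: for $\xi$ measurable on $\sigma(X_{1})$ and $\eta$ on $\sigma(X_{1+k},\ldots)$ one has $|\mathcal{E}(\xi\eta)-\mathcal{E}\xi\,\mathcal{E}\eta|\le 8\,\alpha_{k}^{1-2/p}\|\xi\|_{p}\|\eta\|_{p}$. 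Taking $p=12$ and using $\mathcal{E}(X_{1})=0$ yields $|\mathcal{E}(X_{1}X_{1+k})|\le 8\,\alpha_{k}^{5/6}(\mathcal{E}|X_{1}|^{12})^{1/6}$. Since $\alpha_{k}=O(k^{-5})$ forces $\alpha_{k}^{5/6}=O(k^{-25/6})$ with $25/6>1$, the series $\sum_{k}|\mathcal{E}(X_{1}X_{1+k})|$ converges; dominated convergence for series (the summands are bounded in modulus by the summable $|\mathcal{E}(X_{1}X_{1+k})|$ and converge pointwise) then sends the Ces\`aro-type sum to $\sigma^{2}=\mathcal{E}(X_{1}^{2})+2\sum_{k\ge1}\mathcal{E}(X_{1}X_{1+k})$.

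For the central limit theorem I would use Bernstein's big-block/small-block method. Partition $\{1,\ldots,N\}$ into alternating big blocks of length $p=p_{N}$ and small buffer blocks of length $q=q_{N}$, chosen so that $p,q\to\infty$, $q/p\to0$ and $p/N\to0$; let $\xi_{1},\ldots,\xi_{r}$ be the big-block sums and $\eta_{1},\ldots,\eta_{r}$ the buffer sums, with $r\asymp N/(p+q)$. First I would show the buffers are asymptotically negligible by bounding $N^{-1}\mathrm{Var}\bigl(\sum_{j}\eta_{j}\bigr)$ via the summable covariance estimate of the first step, so that $S$ and $\sum_{j}\xi_{j}$ share the same limiting behaviour after dividing by $\sqrt{N}$. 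Next, because consecutive big blocks are separated by a gap of length $q$, the bounded-variable form of the covariance inequality applied to the modulus-one characteristic exponentials gives
\[
\Bigl|\mathcal{E}\,e^{\,it\sum_{j}\xi_{j}/\sqrt{N}}-\prod_{j=1}^{r}\mathcal{E}\,e^{\,it\xi_{j}/\sqrt{N}}\Bigr|\le C\,r\,\alpha_{q},
\]
which tends to $0$ once $q_{N}$ grows fast enough relative to $r_{N}$. This reduces matters to a triangular array of independent block sums, to which I would apply the Lindeberg--Feller theorem; the Lyapunov condition $\sum_{j}\mathcal{E}|\xi_{j}/\sqrt{N}|^{2+\delta}\to0$ follows from a Rosenthal-type moment bound $\mathcal{E}|\xi_{j}|^{2+\delta}\le C\,p^{1+\delta/2}$ for mixing block sums, since then the sum is $O\bigl((p/N)^{\delta/2}\bigr)\to0$. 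Recognising $\mathrm{Var}(S)\sim\sigma^{2}N$ converts the resulting $N(0,\sigma^{2})$ limit into the stated $S/(\sigma\sqrt{N})\to N(0,1)$.

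The genuinely delicate part is calibrating the three rates against one another. The decoupling error $r\alpha_{q}$ shrinks only if $q$ is large, whereas the buffer variance and the Lyapunov quotient shrink only if $q$ is small relative to $p$ and $p$ is small relative to $N$; the hypotheses $\alpha_{n}=O(n^{-5})$ and $\mathcal{E}(X_{n}^{12})<\infty$ are precisely tuned so that a common admissible window for $(p_{N},q_{N})$ exists. Establishing the Rosenthal-type inequality that controls high moments of a block sum through the mixing coefficients and the marginal twelfth moment is the technical crux, and is exactly where both hypotheses are consumed; once that bound is in hand, the blocking balance and the final appeal to Lindeberg--Feller are routine. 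A complete treatment is given in Billingsley, Theorem 27.4 (\cite{billingsley2008probability}).
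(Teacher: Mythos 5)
The paper does not actually prove this statement: it is quoted verbatim from Billingsley and the ``proof'' is a citation to \cite[Thm 27.4]{billingsley2008probability}. Your sketch is a correct outline of essentially that same argument --- the Davydov-type covariance bound with $p=q=12$ giving the exponent $5/6$ and hence absolute summability of the covariances, followed by Bernstein big-block/small-block decoupling of characteristic functions and a Lyapunov condition for the block sums; in Billingsley the Rosenthal-type bound you identify as the crux appears (with $\delta=2$) as the fourth-moment estimate $\mathcal{E}(S_{n}^{4})\leq Kn^{2}$, which is exactly where the hypotheses $\alpha_{n}=O(n^{-5})$ and $\mathcal{E}(X_{n}^{12})<\infty$ are consumed. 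Since you, like the paper, ultimately defer that estimate to the reference, your proposal contains no gap relative to what the thesis itself provides.
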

\begin{proof}
Refer \cite[Thm 27.4]{billingsley2008probability} for proof
\end{proof}
This above theorem is important for numerous aspects as it implies
that our $\tilde{M}$, (\ref{eq:M_perturbed-1}), tends to $M$, (\ref{eq:M_def}),
as $N\rightarrow\infty$ for fixed $K$. It also the gives the rate
of growth for a sum of a series satisfying the above conditions to
be $O_{P}\left(\sqrt{N}\right).$ This fact is used in deriving lemma
(\ref{lem:lemma 5}). 

In lemma (\ref{lem:lemma 5}), we obtained the matrix $\tilde{\Sigma}_{x\varepsilon}(l)-\Sigma_{x\varepsilon}(l)$
where $\tilde{\Sigma}_{x\varepsilon}(l)$ is a sample covariance matrix
between $x_{n}$ and $\varepsilon_{n}$ and $\Sigma_{x\varepsilon}(l)$
is their ideal co-variance matrix. Therein, $x_{n}$ and $\varepsilon_{n}$
are given by 
\[
y_{n}=Hx_{n}+\varepsilon_{n}.
\]
where $y_{n}$ is the stochastic sequence, which satisfies the properties
in Theorem 9.2 (Refer Condition \emph{(A5)} of Chapter \ref{chap:Data-Model}).
Since $x_{n}$ and $\varepsilon_{n}$ are also linearly related to
$y_{n}$, they too satisfy those properties ensuring that the every
element in the matrix $x_{n+l}\varepsilon_{n}^{\top}-\mathcal{E}\left(x_{n+l}\varepsilon_{n}^{\top}\right)$
also satisfy the conditions for Theorem 9.2. Since every element in
the matrix $\tilde{\Sigma}_{x\varepsilon}(l)-\Sigma_{x\varepsilon}(l)=\frac{1}{N}\sum_{n=1}^{N}\left(x_{n+l}\varepsilon_{n}^{\top}-\mathcal{E}\left[x_{n+l}\varepsilon_{n}^{\top}\right]\right)$
correspond to the sum of the above sequence from $n=1\ldots N$ divided
by $N$, they all converge at the rate $O_{p}\left(\frac{1}{\sqrt{N}}\right)$. 

Note that $\alpha-mixing$, also known as \emph{strong mixing,} is
weaker than $\psi-mixing$. Hence $\psi-mixing$ implies $\alpha-mixing$.
For more details regarding the property of $\psi$ mixing, refer \cite{fan2003nonlinear}.\selectlanguage{english}%

\newpage{}

\vspace*{7.5cm}
\emph{~~~~~~~~~~~~~~~~~~~~~~~~~~~~~~~~~~~~~~~~~~~~~This
page is intentionally left blank}

\newpage{}

\begin{singlespace}
\bibliographystyle{plain}
\phantomsection\addcontentsline{toc}{chapter}{\bibname}\bibliography{ref2}
\end{singlespace}

\end{document}